\newcommand{\Alg}{\mbox{ALG}\xspace}
\newcommand{\OPT}{\mbox{OPT}\xspace}
\newcommand{\OFF}{\mbox{OFF}\xspace}
\newcommand{\cA}{{\mathcal A}\xspace}
\newcommand{\LISs}{(n,\beta)\mbox{-LIS}\xspace}
\newcommand{\LIS}{\mbox{LIS}\xspace}
\newcommand{\KLIS}{K\mbox{-LIS}\xspace}
\newcommand{\LPT}{\mbox{LPT}\xspace}
\newcommand{\cC}{{\mathcal C}}
\newcommand{\cT}{{\mathcal T}}
\newcommand{\LAF}{\mbox{LAF}\xspace} %{Largest\_Amortized\_Fit}
\newcommand{\total}{total}
\renewcommand{\ell}{c}
\newcommand{\lmin}{{\ell_{min}}\xspace}
\newcommand{\lmax}{{\ell_{max}}\xspace}
\newcommand{\gnBurst}{\gamma\mbox{n-Burst}\xspace}
\newcommand{\GroupLIS}{\mbox{GroupLIS}}%\xspace}
\newenvironment{proofof}[1]{\bigskip \noindent {\bf Proof of #1:}}%\quad }
{\qed\par\vskip 0.5mm\par}
\newcommand{\remove}[1]{}
\newcommand{\cg}[1]{{\color{blue} #1}}
\newcommand{\af}[1]{{\color{red} #1}}
\newcommand{\dk}[1]{{\color{green} #1}}
\newcommand{\ez}[1]{{\color{magenta} #1}}
\begin{document}
%\begin{titlepage}

\title{Online Parallel Scheduling of Non-uniform Tasks: \\
Trading Failures for Energy\thanks{This research was supported in part by the Comunidad de Madrid grant S2009TIC-1692, Spanish MICINN/MINECO grant TEC2011-29688-C02-01, and NSF of China grant 61020106002.}
}

\author{Antonio Fern\'andez Anta\inst{1}
	\and Chryssis Georgiou\inst{2}
	\and Dariusz R. Kowalski\inst{3}
	\and Elli Zavou\inst{1,4}\thanks{Partially supported by FPU Grant from MECD}}

\institute{Institute IMDEA Networks
	\and University of Cyprus
	\and University of Liverpool
	\and Universidad Carlos III de Madrid\vspace{-2em}}

%\author{
%Antonio Fern\'andez Anta \\
%Institute IMDEA Networks \\
%28911 Madrid, Spain \\
%antonio.fernandez@imdea.org \\
%\and
%Chryssis Georgiou \\
%University of Cyprus \\
%1678 Nicosia, Cyprus \\
%chryssis@cs.ucy.ac.cy \\
%\and
%Dariusz R. Kowalski \\
%University of Liverpool \\
%L69 3BX Liverpool, UK \\
%D.Kowalski@liverpool.ac.uk \\
%\and
%Elli Zavou \thanks{Partially supported by FPU Grant from MECD} \\
%Universidad Carlos III de Madrid \\
%Institute IMDEA Networks \\
%28911 Madrid, Spain \\
%elli.zavou@imdea.org
%}

\date{}
\maketitle \thispagestyle{empty}

\begin{abstract}
%Consider a system in which tasks of different execution times arrive continuously and have to be executed by a set of processors that are prone to crashes and restarts.
%In this paper we explore the impact of parallelism and faults on the competitiveness of such online system. 
%In a fault-free environment, a simple Longest-in-System scheduling policy,
%enhanced by a redundancy-avoidance mechanism, guarantees optimality 
%in a long-term execution.
%In the presence of failures, scheduling becomes a much more challenging task.
%In particular, we show that no parallel deterministic algorithm can be competitive against the optimal solution 
%%provided 
%in the idealistic scenario, even with one single processor and tasks of only two different execution times. 
%The situation changes when additional energy is provided to the system in the form of processor speed-scaling. More precisely, we show that with some level of additional power consumption by the processors, it is possible to develop deterministic algorithms that compare in favor to the optimal solution with complete knowledge. We identify thresholds on the speedup under which such competitiveness cannot be achieved by any deterministic algorithm and above which there exist competitive algorithms with small bounded competitive ratio.
%Moreover, we obtain the analytical expression of this ratio for the threshold speedup.
%%\end{abstract}

Consider a system in which tasks of different execution times arrive continuously and have to be executed by a set of processors that are prone to crashes and restarts.
In this paper we model and study the impact of parallelism and failures on the competitiveness of such an online system. In a fault-free environment, a simple Longest-in-System scheduling policy, enhanced by a redundancy-avoidance mechanism, guarantees optimality in a long-term execution.
In the presence of failures though, scheduling becomes a much more challenging task. In particular, \deleted[ez]{we show that} no parallel deterministic algorithm can be competitive against an offline optimal solution, even with one single processor and tasks of only two different execution times. We find that when additional energy is provided to the system in the form of processor speedup, the situation changes. 
%More precisely, we show that some level of additional power consumption by the processors is necessary in order to be able to develop deterministic algorithms that compare in favor to the optimal solution which has complete knowledge, meaning that it is aware of the faults and the task arrivals a priori. 
Specifically, we identify thresholds on the speedup under which such competitiveness cannot be achieved by any deterministic algorithm, and above which competitive algorithms exist. \replaced[ez]{Finally}{Moreover}, we propose algorithms that achieve small bounded competitive ratios when the speedup is over the threshold.\vspace{.5em}

{\bf Keywords}: Scheduling, Non-uniform Tasks, Failures, Competitiveness, Online Algorithms, Energy Efficiency.\vspace{-1em}
\end{abstract}

%\end{titlepage}

%!TEX root = ./main.tex

\section{Introduction\vspace{-.3em}}
\label{sec:intro}

\noindent {\bf Motivation.}
\sloppy{In  recent years we have witnessed a dramatic increase on the demand of processing computationally-intensive jobs. Uniprocessors are no longer capable of coping with the high computational demands of such jobs.} As a result, multicore-based parallel machines such as the K-computer~\cite{KComputer} and Internet-based supercomputing platforms such as SETI@home~\cite{SETI} and EGEE Grid~\cite{EGEE} have become prominent computing environments. 
However, computing in such environments raises several challenges. For example, computational jobs (or tasks) are injected dynamically and continuously, each job may have different computational demands (e.g., CPU usage or processing time) and the 
%computing entities 
processing elements
are subject to unpredictable failures. Preserving power consumption is another challenge of rising importance. Therefore, there is a corresponding need for developing algorithmic solutions that would efficiently cope with such challenges. 

\replaced[af]{Much research has been dedicated to task scheduling problems, each work addressing
different challenges (e.g., \cite{AKP_STOC92,CMR_TPDS07,DOOPM_MTAGS10,Emeketal_PODC10,GeorgiouK11,GS_book08,166609,KS97,schedulingbook,DBLP:conf/focs/YaoDS95,CDS_DC01}).}
{For this purpose, much research, spanning the areas of Parallel and Distributed Computing, as well as Scheduling, was dedicated over the last two decades to this problem (e.g., \cite{AKP_STOC92,CMR_TPDS07,DOOPM_MTAGS10,Emeketal_PODC10,GeorgiouK11,GS_book08,166609,KS97,schedulingbook,DBLP:conf/focs/YaoDS95,CDS_DC01}). Different works address the issue by tackling %attacking
different challenges.}
For example, many works address the issue of dynamic task injections, but do not consider failures (e.g., \cite{Chan:2009:SSP:1583991.1583994,160366}). %, DBLP:conf/focs/YaoDS95}). 
Other works consider scheduling on one machine (e.g., \cite{Albers:2012:RIN:2095116.2095216,DBLP:journals/tse/SchwanZ92,5062123}); with the drawback that the power of parallelism is not exploited (provided that tasks are independent). Other works consider failures, but assume that tasks are known a priori and their number is bounded (e.g., \cite{AllocateFOCS,AW_SIAM97,CDS_DC01,GS_book08,KS97}),
where other works assume that tasks are uniform, that is, they have the same processing times (e.g., \cite{CDS_DC01,GeorgiouK11}).
Several works consider power-preserving issues, but do not consider, for example, failures (e.g.,~\cite{Bansal:2009:SSA:1496770.1496846,Chan:2009:SSP:1583991.1583994,DBLP:conf/focs/YaoDS95}).\vspace{.2em}  

\noindent{\bf Contributions.}
In this work we consider a computing system in which tasks of {\em different} execution times arrive {\em dynamically and continuously} and must be performed by a set of $n$ processors that are prone to {\em crashes and restarts}.
Due to the dynamicity involved, we view this task-performing problem as an online problem and pursue competitive analysis~\cite{ST_CACM85,AADW_FOCS94}. 
Efficiency is measured as the maximum {\em pending cost} over any point of the execution,
where the pending cost is the sum of the execution times of the tasks that have been injected in the system but are not performed yet. 
We also account for the maximum {\em number of pending tasks} over any point of the execution. 
%We believe that both efficiency measures are natural for the problem under study. For example, 
The first measure is useful for evaluating the remaining processing time required from the system at any given point of the computation, while the second for evaluating the number of tasks still pending to be performed, regardless of the processing time needed.
\deleted[ez]{Our goal is to explore the impact of parallelism, the different task execution times, and the faulty environment on the competitiveness of the online system we consider.}

We show that no parallel algorithm for the problem under study
is competitive against the best off-line solution in the classical sense, however it becomes competitive if static processor {\em speed scaling}~\cite{AGM-ICALP11,Albers:2011:MSS:1989493.1989539,Chan:2009:SSP:1583991.1583994} is applied in the form of a {\em speedup} above a certain threshold.
A speedup $s$ means that a processor can perform a task $s$ times faster than the task's system specified execution time (and therefore has a meaning only when $s \ge 1$). 
Speed scaling
impacts the {\em energy consumption} of the processor.
As a matter of fact, the power consumed (i.e., the energy consumed per unit of time)
to run a processor at a speed $x$ grows superlinearly with $x$, and it is typically assumed to have a form of $P =  x^\alpha$, for $\alpha>1$ \cite{DBLP:conf/focs/YaoDS95,Intelwhitepaper}.
Hence, a speedup $s$ implies an additional factor of $s^{\alpha-1}$ in the power (and hence energy) consumed.
The use of a speedup is a form of resource augmentation \cite{DBLP:journals/algorithmica/PhillipsSTW02}.

\begin{table}[tdp]
\caption{Summary of results.}
\begin{center}
\begin{tabular}{|c|c|c|c|c|}
\hline
Condition & Task costs & Task competitiveness & Cost competitiveness & Algorithm \\ \hline \hline
$s < \lmax / \lmin$ and $s < \frac{\gamma \lmin+\lmax}{\lmax}$ & $\ge 2$ & $\infty$ & $\infty$ & Any \\ \hline
$s \geq \lmax/\lmin$ & Any & 1 & $\lmax/\lmin$ & $\LISs$ \\ \hline
$\frac{\gamma\lmin + \lmax}{\lmax} \leq s < \lmax / \lmin$ & 2 & 1 & 1 & $\gnBurst$ \\ \hline
$s \geq 7/2$ & Finite & $\lmax / \lmin$ & 1 &  $\LAF$ \\ \hline
\end{tabular}
\end{center}
\label{table-results}\vspace{-2em}
\end{table}%

Our investigation aims at developing competitive online algorithms that require the smallest possible speedup. 
As a result, one of the main challenges of our work is to identify the speedup thresholds, under which competitiveness cannot be achieved and over which it is possible.
In some sense, our work can be seen as investigating the trade-offs between {\em knowledge} and {\em energy} %{\em power}
in the presence of failures: 
How much 
energy (in the form of speedup) does a deterministic online scheduling algorithm need in order to match the efficiency (i.e., to be competitive with) of the optimal offline algorithm that possesses complete knowledge of failures and task injections? (It is understood that there is nothing to investigate if the offline solution makes use of speed-scaling as well).
Our contributions are summarized as follows \added[af]{(see Table \ref{table-results})}:\vspace{-.5em}
%\\ 
%\begin{itemize}[leftmargin=4mm]
\begin{description}[leftmargin=1mm]
\item [\em Formalization of fault-tolerant distributed scheduling:]
In Section~\ref{sec:model}, we formalize an online task performing problem that abstracts important aspects of today's multicore-based parallel systems and Internet-based computing platforms: dynamic and continuous task injection, tasks with different processing times, processing 
%entities 
elements
subject to failures, and concerns on power-consumption. 
To the best of our knowledge, this is the first work to consider
such a version of dynamic and parallel fault-tolerant task scheduling.%\vspace{-.5em}

\item[\em Study of offline solutions:]
In Section~\ref{sec:NPhard}, we show that an offline version of a similar task-performing problem is NP-hard, for both pending cost and pending task efficiency, even if there is no parallelism 
(one processor) and the information of all tasks and processor availability is known.%\vspace{-.5em}

\item[\em Necessary conditions for competitiveness:]
%\marginpar{\af{rephrased}}
In Section~\ref{s:non-comp}, we show \emph{necessary} conditions (in the form of threshold values) on the value of the speedup $s$ to achieve competitiveness.
%As mentioned above, identifying appropriate thresholds was a non-trivial task. 
\deleted[ez]{We first consider a rather natural threshold:
$s=\frac{\lmax}{\lmin}$, where $\lmin$ and $\lmax$ are lower and upper bounds on the cost (i.e., running time) of all tasks injected in the system.
This implies that during the time the offline algorithm needs to perform a task of cost $\lmin$, a given online algorithm could even perform a task of cost $\lmax$. As it turns out, this threshold is not enough, and thus, we introduce a second one, which requires a new parameter $\gamma$.}
\replaced[ez]{To do this, we need to introduce a parameter $\gamma$, which represents}{intuitively, $\gamma$ is} the smallest number of $\lmin$-tasks that an algorithm can complete (using speedup $s$), in addition to a $\lmax$-task, such that the offline algorithm cannot complete more tasks in the same time. \added[ez]{Note that $\lmin$ and $\lmax$ are lower and upper bounds on the cost (execution time) of the tasks injected in the system.}
\deleted[ez]{In other words, $\gamma$ is the smallest integer satisfying $\frac{\gamma\lmin+\lmax}{s} \le (\gamma+1)\lmin$ 
(and hence, $\gamma = \max\{\lceil \frac{\lmax-s \lmin}{(s-1)\lmin} \rceil, 0\}$).
We then define the second threshold as $s=\frac{\gamma \lmin+\lmax}{\lmax}$.}

We \deleted[ez]{therefore} propose two conditions, (a) $s < \frac{\lmax}{\lmin}$, and (b) $s < \frac{\gamma \lmin+\lmax}{\lmax}$ \replaced[ez]{and}{. We} show that if \emph{both} \deleted[ez]{conditions (a) and (b)} hold, then
{\em no} deterministic sequential or parallel algorithm is competitive when run with speedup $s$.
\footnote{It is worth noting that this holds even if we only have a single processor, and therefore this result could be generalized for stronger models that use centralized or parallel scheduling of multiple processors.}
Observe that, satisfying condition (b) implies $\rho > 0$, which automatically means that condition (a) is also satisfied.
%\deleted[ez]{Furthermore, as we will show later, these conditions (a) and (b) are \emph{minimal}, since we also present algorithms that achieve competitiveness as soon as any of them fails to hold.}

\item[\em Sufficient conditions for competitiveness:]
Then, we design two scheduling algorithms, each matching a different threshold bound from the necessary conditions above, showing \emph{sufficient} conditions on $s$ that lead to competitive solutions.
In fact, it can be shown that in order to have competitiveness, it is sufficient to set $s=\lmax/\lmin$ if $\lmax/\lmin \in [1,\varphi]$, and $s= 1 + \sqrt{1 - \lmin/\lmax}$ if otherwise, where $\varphi=\frac{1+\sqrt{5}}{2}$, which is the golden ratio (see details in Appendix~\ref{app:conditions}).

{\bf Algorithm $\LISs$:}
For the case when condition (a) does not hold (i.e., $s \geq \frac{\lmax}{\lmin}$), we develop algorithm $\LISs$, presented in Section~\ref{s:alg}. We show that, under these circumstances, $\LISs$ 
\added[dk]{is $1$-pending-task-competitive and $\frac{\lmax}{\lmin}$-pending-cost-competitive}
\deleted[dk]{achieves a bound on the number of pending tasks of $\cT_t(\OPT) + \beta n^2 + 3n$}
for parameter $\beta \geq \frac{\lmax}{\lmin}$ and for any given number of processors $n$. 
\deleted[dk]{where $\cT_t(\OPT)$ denotes the number of pending tasks of the best offline algorithm \OPT at time $t$. 
The pending-cost-competitiveness of $\LISs$, for the same speedup $s$ and parameter $\beta$, is 
$\frac{\lmax}{\lmin}\cdot\left(\cC_t(\OPT) + \beta n^2+3n\right)$, where $\cC_t(\OPT)$ is the pending cost of $\OPT$ at time $t$.}
\deleted[ez]{In order to be more cost competitive, $\LISs$ needs to be used with a higher speedup, $s \geq \frac{\lmax}{\lmin}\cdot \lceil\frac{\lmax}{\lmin}\rceil$, for which we show a bound on pending cost of $\cC_t(\OPT) + \lmax\beta n^2 + (2\lmax+\lmin)n$ for $\beta \geq \frac{\lmax}{\lmin}$ and any $n$.}
These results hold for any collection of tasks with costs in the range $[\lmin,\lmax]$.

{\bf Algorithm $\gnBurst$:}
It is not difficult to observe that algorithm $\LISs$ cannot be competitive when condition (a) holds but condition (b) does not (i.e., $\frac{\gamma\lmin + \lmax}{\lmax} \leq s < \frac{\lmax}{\lmin}$). For this case we develop algorithm $\gnBurst$, presented in Section~\ref{sec:gnBurst}. We show that when tasks of two different costs, $\lmin$ and $\lmax$, are injected, the algorithm is 
\replaced[dk]{both $1$-pending-task and $1$-pending-cost competitive.}{competitive.}
\deleted[dk]{
Namely, it holds that $\cT_t(\gnBurst) \leq \cT_t(\OPT) + 2n^2 + (3+\left\lceil \frac{\lmax}{s  \cdot \lmin} \right\rceil)n$, and $\cC_t(\gnBurst) \leq \cC_t(\OPT) +  \lmax(n^2 + 2n) + \lmin(n^2 + (1+\left\lceil \frac{\lmax}{s  \cdot \lmin} \right\rceil)n)$.
}

These results fully close the gap with respect to the conditions for competitiveness on the speedup in the case of two different task costs, establishing $s=\min\{\frac{\lmax}{\lmin}, \frac{\gamma \lmin+\lmax}{\lmax}\}$ as the threshold for competitiveness.
(A detailed analysis of \replaced[dk]{its exact value}{the exact value of this threshold} can be found in Appendix~\ref{app:conditions}.)

\item[\em Algorithm $\LAF$, low energy guaranteed:]
In Section~\ref{sec:LAF}, we develop algorithm $\LAF$ that is again competitive for the case when condition (b) does not hold, but in contrast with $\gnBurst$, it is more ``geared'' towards pending cost efficiency and can handle tasks of multiple different costs.
%multiple \af{number of different} costs. 
%It generally follows the LIS paradigm, but it is adjusted based on the cost of the pending tasks. 
We show that this algorithm is competitive for speedup \replaced[af]{$s \geq \frac{7}{2}$}{$s > \frac{7}{2}$}. Hence,
unlike the above mentioned algorithms, its competitiveness is with respect to a speedup that is independent of the values 
%of the most costly and less costly tasks,  
$\lmax$ and $\lmin$.%\vspace{.2em}
%, respectively.

%\item[\em Model with inaccuracy:] 
%Finally, we extend some of these results to a model in which the costs provided for the tasks may not be accurate, i.e., the actual cost may differ from the one given by a factor between $1-\ep$ and $1+\ep$, for some $\ep \in [0,1)$.
%We show that some of the obtained results can be directly translated to this new model by simply replacing everywhere $\lmin$ by $(1-\ep)\lmin$ and $\lmax$ by $(1-\ep)\lmax$.
%The details of this model are omitted due to space limitations, but can be found in Appendix~\ref{sec:inac}.
\end{description}%\vspace{.05em} 

\noindent{\bf Task Scheduling.}
%In order to keep our results conceptually general, instead of considering specific implementation details,
We assume the existence of an entity, called {\em Shared Repository} (whose detailed specification is given in Section~\ref{sec:model}),
that abstracts the service by which clients submit computational tasks to our system and that notifies them when they are completed.This allows our results to be conceptually general, instead of considering specific implementation details.
The Shared Repository is {\em not} a scheduler, since it does not make any task allocation decisions; processors simply access this entity to obtain the set of pending tasks. 
\deleted[ez]{Hence, the \emph{Shared Repository} is a much weaker abstraction than the master processor considered in Master-Worker Internet-based computing systems such as SETI@home~\cite{SETI}, the resource scheduler in computational Grids such as EGEE~\cite{EGEE}, the process scheduler in multicore-based platforms, e.g.,~\cite{AC_RTSS06}, or the decision maker considered in the parallel scheduling literature, e.g.,~\cite{schedulingbook}.}
Such an entity, and implementations of it, have been considered, for example, in the Software Components Communication literature, where it is referred as the Shared Repository Pattern (see for example~\cite{repository1,repository2}, and references therein).  

This makes our setting simpler, easier to implement and more scalable than other popular settings
with stronger scheduling computing entities, such as a {\em central scheduler}. 
Note that even in the case of the central
scheduler, a central repository would still be needed in order for the scheduler to keep track of the pending tasks and proceed
with task allocation. Hence, the underline difference of our setting with that of a central scheduler is that in the latter,
scheduling decisions and processing is done by a single entity which allocates the tasks to the processors, as opposed to our setting where
scheduling decisions are done in {\em parallel} by the participating processors for deciding what task each processor should perform next.
As a consequence, all the results of our work also hold for such stronger models: algorithms work not worse than in the Shared Repository setting since it is a weaker model. The necessary conditions on energy threshold also hold as they are proven for a scenario with a single processor, where these two models are indistinguishable.\vspace{.2em}

\remove{
In order to keep our results conceptually general, we do not consider specific implementation details on how processors are informed about the tasks injected in the system. Instead, we consider an entity, called {\em Shared Repository}, modeled as a shared object that processors can access by running specific operations. The Shared Repository (whose detail specification is given in Section~\ref{sec:model}) abstracts the interface of the system in which clients submit computational tasks and notifies them when their tasks are completed. We note that \af{the Shared Repository} is {\em not} a scheduler, that is, it does not make any task allocation decisions; processors simply access this entity to obtain the set of pending tasks. 
Hence, the \emph{Shared Repository} can be viewed as a much weaker abstraction of the master processor considered in Master-Worker Internet-based volunteering computing applications such as SETI@home~\cite{SETI}, or the resource scheduler in computational Grid infrastructures such as EGEE~\cite{EGEE}, or the process scheduler in multicore-based platforms, e.g.,~\cite{AC_RTSS06}, or the decision maker considered in the parallel scheduling literature, e.g.,~\cite{schedulingbook}.
Such an abstraction and implementations of it have been considered, for example, in the Software Components Communication literature, where it is referred as the Shared Repository Pattern; see for example~\cite{repository1,repository2} (and references therein).  
This makes our setting simpler, easier to implement and more scalable than other popular settings
with stronger scheduling computing entities, such as a {\em central scheduler}. 
Note that even in the case of the central
scheduler, a central repository would still be needed in order for the scheduler to keep track of the pending tasks and proceed
with task allocation. Hence, the underline difference of our setting with that of a central scheduler is that in the latter,
scheduling decisions and processing is done by a single entity which allocates the tasks to the processors, as opposed to our setting where
scheduling decisions are done in {\em parallel} by the participating processors for deciding what task each processor should perform next.
As a consequence, all the results of our work also hold for such stronger models: algorithms work not worse than in the Shared Repository setting by the fact that it is a weaker model. The necessary conditions on energy threshold also hold because they are proven for a scenario with a single processor, in which these two models are indistinguishable.
}

\remove{ %% REMOVED
\marginpar{Remove or shorten this paragraph?}
A subtle issue between a setting where processors make scheduling decisions in parallel (as in our setting with the shared repository) and a setting with central scheduler, is that in the latter it is easier (algorithmically speaking) to impose {\em redundancy avoidance} (not two or more processors to perform the same task). Redundancy avoidance is interesting by its own right. 
It is for example a prerequisite for the {\em At-Most-Once problem}~\cite{AtMostOnce}, where given a set of tasks, no task should be performed more than once by any process. In our algorithms we use a simple mechanism that imposes redundancy avoidance whenever there is a sufficiently large number of pending tasks.
More sophisticated redundancy avoidance mechanisms could improve only on additive parts of the competitiveness formulas obtained in this work.
}

\noindent{\bf Related Work.}
The work most closely related to this work is the one by Georgiou and Kowalski~\cite{GeorgiouK11}. As in this work, they consider a task-performing problem where tasks are dynamically and continuously injected to the system, and processors are subject to crashes and restarts. Unlike this work, the computation is broken into synchronous rounds and the notion of {\em per-round} pending-task competitiveness is considered instead.
Furthermore, tasks are assumed to have {\em unit cost}, i.e., they can be performed in one round. The authors consider at first a central scheduler
and then show how and under what conditions it can be implemented in a message-passing distributed setting (called local scheduler). 
They show that even with a central scheduler, no algorithm can be competitive if tasks have different execution times. 
This result has essentially motivated the present work; to use speed-scaling and study the conditions on speedup for which competitiveness is possible. As it turns out, extending the problem for tasks with different processing times
and considering speed-scaling is a non-trivial task; different scheduling policies and techniques had to be devised.

\remove{
The notion of competitiveness was introduced by Sleator and Tarjan~\cite{ST_CACM85} and was extended for parallel and distributed algorithms in a sequence of papers by Bartal et al.~\cite{BFR_STOC92}, Awerbuch et al.~\cite{AKP_STOC92}, and Ajtai et al.~\cite{AADW_FOCS94}. 
Several parallel and distributed computing problems have been modeled as online problems, and their competitiveness has been studied. Examples include distributed data management (e.g.,~\cite{BFR_STOC92}), distributed job scheduling (e.g.,~\cite{AKP_STOC92}), distributed collect (e.g.,~\cite{CKS_STOC04}), and set-packing (e.g.,~\cite{Emeketal_PODC10}).

In a sequence of papers~\cite{CMR_TPDS07, CMR_JPDC10, MRY_TC06}, a scheduling theory is being developed for scheduling computations having intertask dependencies for Internet-based computing. The objective of the schedules is to render tasks eligible for execution at the maximum possible rate and avoid gridlock (although there are available computing elements, there are no eligible tasks to be performed). The task dependencies are represented as directed acyclic tasks. This line of work mainly focuses on exploiting the properties of DAGs in order to develop schedules.
Even though our work considers independent tasks, it focuses instead on the development of fault-tolerant task performing algorithms, and explores the limitations of online distributed collaboration.}
%
%\cg{I believe we should keep this paragraph, as it shows that we are aware of previous work on parallel scheduling and speedup. We can put it
%in the new related work subsection I propose above.}

Our work is also related with studies of parallel online scheduling using identical machines~\cite{schedulingbook}. Among them, several papers consider speed-scaling and speedup issues. Some of them,  
unlike our work, consider dynamic scaling (e.g.,~\cite{Albers:2011:MSS:1989493.1989539,Bansal:2009:SSA:1496770.1496846,Chan:2009:SSP:1583991.1583994}).
Usually, in these works preemption is allowed: an execution of a task may be suspended and later restarted from the point of suspension.
In our work, the task must be performed from scratch. The authors of \cite{Greiner:2009:BRS:1583991.1583996} investigate scheduling on $m$ identical speed-scaled processors without migration (tasks are not allowed to move among processors). Among others, they prove that any $z$-competitive online algorithm for a single processor yields a $z B_a$-competitive online algorithm for multiple processors, where $B_a$ is the number of partitions of a set of size $a$.
What is more, unlike our work, the number of processors is not bounded. The work in~\cite{AGM-ICALP11} considers tasks with deadlines (i.e., real-time computing is considered), but no migration, whereas the work in \cite{Albers:2011:MSS:1989493.1989539} considers both. We note that none of these works considers processor failures. Considering failures, as we do, makes parallel scheduling a significantly more challenging problem.\vspace{-1em}

%%% DOCUMENT ORGANIZATION OMITTED

%\noindent{\bf Document Organization.}
%The rest of the paper is structured as follows. In Section~\ref{sec:model} the details of the model considered are described. 
%In Section~\ref{sec:NPhard} we argue that an offline version of this problem is NP-hard and in Section~\ref{s:non-comp} it is shown that no competitiveness can be achieved if the speedup $s$ is below the defined thresholds. Then, the algorithms $\LISs$, $\gnBurst$, and $\LAF$ are presented and analyzed in Sections~\ref{s:alg}, \ref{sec:gnBurst}, and \ref{sec:LAF}, respectively. The model without accuracy is explored in
%Section~\ref{sec:inac}. Finally, Section~\ref{sec:conclusions} concludes the paper. More details and omitted proofs can be found in the Appendix.

%kept it - might want to use something upon revising
\remove{
In a system of dynamic cooperative computing, with multiple processes performing
different tasks, we study its performance under dynamic processes' crashes, restarts
and task injections. It has already been shown in \cite{GeorgiouK11},
that under the system model of Central Scheduler and for tasks of length 1 (that
is, a process needs one round to complete the task) the proposed algorithms for
correctness and fairness are competitive with the performance of an offline algorithm,
knowing in priori the crashes,restarts and injection patterns, in terms of the
number of pending tasks (that is pending-task competitiveness).

We now try to address the problem with having tasks of different lengths, that might
take longer than a round to be completed. It is argued that bounded competitiveness is
not possible, even under restricted adversarial patterns and even in the model of
central scheduler. Theorem 5.4. \cite{GeorgiouK11}, shows unbounded competitiveness for any upper bound in
the length of tasks $d \geq 3$. But what happens if we only have tasks of lengths 1
and 2? Can we guarantee bounded competitiveness with respect to the pending task
complexity? What happens in terms of correctness and fairness? And is there a possibility
that with a specific difference between the task lengths we can guarantee competitiveness?

These are some of the questions we are hoping to answer with the following analysis.
}

%!TEX root = ./main.tex
\vspace{.2em}
\section{Model and Definitions\vspace{-.3em}}%\vsh}
\label{sec:model}

\noindent{\bf Computing Setting.}
We consider a system of $n$ homogeneous, fault-prone processors, with unique ids from the set $[n] = \{1,2,\ldots,n\}$.
We assume that processors have access to a shared object, called {\em Shared Repository} or {\em Repository} for short. It %The repository
represents the interface of the system that is used by the clients to submit computational tasks and receive the notifications about the performed ones.\vspace{.2em}
%\footnote{As discussed in the introduction, there are several ways to implement this shared object. Here we are %concerned with the specification of this object rather than its implementation.}

\noindent{\bf Operations.}
The data type of the repository is a set of tasks (to be described later) \replaced[ez]{that}{and it} supports three operations: {\em inject,~get,} and {\em inform}. 
The {\em inject} operation is executed by a client of the system, who adds a task to the current set, and as discussed below, this operation is controlled by an adversary. %whose characteristics will be described later. 
%discussed below.
The other two operations are executed by the processors.
By executing a {\em get} operation, a processor obtains from the repository the set of {\em pending tasks}, i.e., the tasks that have been injected into the system, but the repository has not been notified that they have been completed yet.
%is not aware whether they have been performed.
To simplify the model we assume that, %the {\em get} operation never returns an empty set of pending tasks. 
if there are no pending tasks when the {\em get} operation is executed, it blocks until some new task is injected, and then it immediately returns the set of new tasks. Upon computing a task, a processor executes an {\em inform} operation, which notifies the repository about the task completion. %that event.
Then the repository removes this task from the set of pending tasks.
Note that due to processor crashes, it would not be helpful for a processor to notify the repository of the task it has scheduled before actually performing the task.
Each operation performed by a processor is associated with a point in time (with the exception of a {\em get} that blocks) and the outcome of the operation is instantaneous (i.e., at the same time point).\vspace{.2em}  

\noindent{\bf Processor cycles.}
Processors run in \emph{real-time cycles}, controlled by an algorithm. Each cycle consists of a {\em get} operation, a computation of a task, and an {\em inform} operation (if a task is completed). Between two consecutive cycles an algorithm may choose to have a processor idling for a period of predefined length.
We assume that the {\em get} and {\em inform} operations consume negligible time (unless {\em get} finds no pending task, in which case it blocks, but returns immediately when a new task is injected). The computation part of the cycle, which %when it
involves executing a task, consumes the time needed for the specific task to be computed divided by the {\em speedup} $s\ge 1$. 
%\marginpar{\af{Rephrased}}
Processor cycles may not complete: An algorithm may decide to break the current cycle of a processor at any moment, in which case the processor starts a new one. Similarly, a crash failure breaks (forcefully) the cycle of a processor. Then, when the processor restarts, a new cycle begins.\vspace{.2em}
%, as discussed later.  
%It is understood that if a processor does not perform any task then the {\em inform} operation is not executed.
%It is assumed that
 
\noindent{\bf Work conserving.} We consider all {\em online} algorithms to be {\em work conserving}\replaced[ez]{; not to}{, that is, they never} allow any processor to idle when there are pending tasks and never break a cycle.\vspace{.2em}

\noindent{\bf Event ordering.} 
Due to the concurrent nature of the assumed computing system, processors' cycles may overlap between themselves and with the clients' inject operations. We therefore specify the following event ordering {\em at the repository} at a time $t$: 
first, the {\em inform} operations executed by processors are processed, then the {\em inject} operations, and last the {\em get} operations of processors. This implies that the set of pending tasks returned by a {\em get} operation executed at time $t$ includes, besides the older unperformed tasks, the tasks injected at time $t$, and excludes the tasks reported as performed at time $t$. 
(This event ordering is done only for the ease of presentation and reasoning; it does not affect the generality of results.)
\vspace{.2em}

\noindent{\bf Tasks.}
%\marginpar{\ez{Or with $s=1$ since $s\geq 1$?} \af{To me current writing is clear}}
Each task is associated with a unique {\em identifier}, an {\em arrival time} (the time it was injected in the system based on the repository's clock), and a {\em cost}, measured as the time needed to be performed (without a speedup). 
Let $\lmin$ and $\lmax$ denote the smallest and largest, respectively, costs 
that tasks may have (unless otherwise stated, this information is known to the processors). 
Throughout the paper we refer to a task of cost $\ell\in[\lmin,\lmax]$, as a $\ell$-task.
\remove{%old task specification
Each task specification $\tau$, is a tuple of the form $(id,~arrival,~cost,~code)$, where $\tau.id$ is a positive integer that uniquely identifies the task in the system, $\tau.arrival$ corresponds to the time at which the task was injected to the system (according to the repository's local clock), $\tau.cost$ is the cost of the task, measured as the time needed for the task to be performed when running without a speedup, and $\tau.code$ corresponds to the computation that needs to occur so that the task is considered completed (that is, the computational part of the task specification that is actually performed).}
We assume that tasks are {\em atomic} with respect to their completion: if a processor stops executing a task 
(intentionally or due to a crash) before completing the entire task, then 
%not only the task is (obviously) not considered completed, but also 
no partial information can be shared with the repository, nor the processor may resume the execution of the task from the point it stopped %(i.e., it has to perform it from scratch; preemption is not allowed). 
(i.e., preemption is not allowed). Note also, that if a processor performs a task but crashes before the {\em inform} operation, then this task is not considered completed.
Finally, tasks are assumed to be {\em similar} (require equal or comparable resources), {\em independent}, 
and {\em idempotent} (multiple executions of the same task produce the same final result). Several applications involving tasks with such properties are discussed in~\cite{GS_book08}.\vspace{.2em}

\remove{%removing details
By {\em similarity}
we mean that the task computations on any processor consume equal
or comparable local resources. By {\em independence}
we mean that the completion of any task does not affect any
other task, and any task can be performed concurrently with any other task.
By {\em idempotence} we mean that each task can be performed one or more times to
produce the same final result.} 
%Finally, we assume that task specifications are of polynomial size in $n$.

\noindent{\bf Adversary.}
We assume an omniscient adversary that can cause processor crashes and restarts, as well
as task injections (at the repository).
We define an {\em adversarial pattern} $\cA$ as a collection of crash, restart and injection
events caused by the adversary. Each event is associated with the time it occurs (e.g.,
$crash(t,i)$ specifies that processor $i$ is crashed at time $t$).   
%A $crash(t,i)$ event specifies that processor 
%$i$ is crashed at time $t$. A $restart(t,i)$ event specifies that processor $i$ is restarted
%at time $t$; it is understood that no $restart(t,i)$ event can take place if there is
%no preceding $crash(t',i)$ event such that $t'<t$.
%\footnote{% start footnote
%With the exception of a processor first entering the computation; not all processors may
%be awake at the beginning of the computation.}
%Finally an $inject(t,\tau)$ event
%specifies that task specification $\tau$ is injected into the system (i.e., into the repository) at time $t$.
%Let $\lmin$ and $\lmax$ denote the smallest and largest, respectively, costs 
%that injected tasks may have.  
%Unless stated otherwise,
%the adversary may inject tasks of {\em any}
%cost in the range $[\lmin,\lmax]$, and processors
%are not aware of $\lmin$ and $\lmax$, i.e., these values are not provided
%to the processors as an input or a part of the code.
%
We say that a processor $i$ is {\em alive} in time interval $[t,t']$, if the processor is
operational at time $t$ and does not crash by time $t'$.
We assume that a restarted %(or awaken) 
processor has knowledge of only the algorithm being executed and parameter $n$ (number of processors). 
\replaced[dk]{Thus,}{Therefore,} 
upon a restart, a processor simply starts a new cycle.\vspace{.2em}

\noindent{\bf Efficiency Measures.}
We evaluate our algorithms using the {\em pending cost} measure, defined as follows. Given a time point 
$t\ge 0$ of the execution of an
algorithm $\Alg$ under an adversarial pattern $\cA$, we define the {\em pending cost at time}~$t$, $\cC_t(\Alg,\cA)$,
to be the sum of the costs of the pending tasks 
at the repository at time $t$.
%, under adversarial pattern $\cA$. 
Furthermore, we denote {\em the number of pending tasks} at the repository at time $t$ under adversarial pattern $\cA$ by $\cT_t(\Alg,\cA)$. 
%Then the {\em pending-cost complexity} of algorithm $\Alg$ is defined as the maximum $\cC_t(\Alg)$ over all time %points (supremum in case of
%infinite computations). The {\em pending-task complexity} is defined analogously.

%\cg{SUGGESTION: I would not give this general definition of competitiveness, but rather go directly to the definition of x-competitiveness. I believe we do not lose any important formal aspect, but I might be wrong. See rewritten paragraph bellow the text in [[[ ]]] for my suggestion.\\}
\remove{
[[[Recall that in this work we view the task performance problem as an online problem, and hence we pursue
competitive analysis. To this respect, we say that the {\em competitive pending cost
complexity} is $f(\OPT,n)$, for some function $f$, if and only if for every adversarial pattern
$\cA$ and time $t$, the pending cost at time $t$ of the execution
of the algorithm against adversarial pattern $\cA$ is at most $f(\cC_t(\OPT(\cA)),n)$,
where $\cC_t(\OPT(\cA))$ is the minimum (or infimum, in case of infinite computations) pending cost achieved
by an {\em off-line algorithm}, at time $t$ of its execution under
the known adversarial pattern $\cA$. The {\em competitive pending-task complexity} is defined analogously. 
It is worth to note here that as mentioned in Section~\ref{sec:intro}, we prove that the problem of computing $\cC_t(\OPT(\cA))$
and $\cT_t(\OPT(\cA))$ offline, even for $n=1$, is NP-hard.
%(the proof is given in Appendix~\ref{sec:NPhard}).
We say that an algorithm $Alg$ is {\em $x$-pending-cost competitive} if $\cC_t(Alg,\cA) \leq x\cdot \cC_t(\OPT,\cA) + \Delta$,
for any $t$ and under any adversarial pattern $\cA$; $\Delta$ can be any expression. Similarly, we say that an algorithm $Alg$ is {\em $x$-pending-task competitive} if $\cT_t(Alg,\cA) \leq x\cdot \cT_t(\OPT,\cA) + \Delta$.]]]
\\}
% 

%Suggestion
Since we view the task performance problem as an online problem, we pursue competitive analysis.
%\marginpar{\ez{Should we have $\OPT(A)$ here?}}
Specifically, we say that an algorithm $\Alg$ is {\em $x$-pending-cost competitive} if $\cC_t(\Alg,\cA) \leq x\cdot \cC_t(\OPT,\cA) + \Delta$, for any $t$ and under any adversarial pattern $\cA$; $\Delta$ can be any expression independent of $\cA$ and $\cC_t(\OPT,\cA)$ is the minimum (or infimum, in case of infinite computations) pending cost achieved by any {\em off-line algorithm} ---that knows a priori $\cA$ and has unlimited computational power--- at time $t$ of its execution under the adversarial pattern $\cA$. Similarly, we say that an algorithm $\Alg$ is {\em $x$-pending-task competitive} if $\cT_t(\Alg,\cA) \leq x\cdot \cT_t(\OPT,\cA) + \Delta$, where $\cT_t(\OPT,\cA)$ is analogous to $\cC_t(\OPT,\cA)$.
%It is worth recalling from the Introduction (and shown in Section~\ref{sec:NPhard}) that the problem of computing %$\cC_t(\OPT,\cA)$ and $\cT_t(\OPT,\cA)$ offline, even for $n=1$, is NP-hard. 
We omit $\cA$ from the above notations when it can be inferred from the context.\vspace{-1em}
   
%
%\cg{CHRYSSIS: I would remove the entire paragraph below.}
\remove{
Because, as we will show, in the classical competitive measurement described above, no competitiveness can be achieved by deterministic algorithms,
we consider a speed-scaling approach in which the execution of an algorithm is done under
processors speedup $s$, but it is compared to the minimum (infimum) pending cost
achieved by off-line solutions run without speedup; each task takes the amount of time
equal to its cost to be performed by the off-line solution.
We study threshold values of $s$ guarantying competitiveness, and specific competitiveness
formulas are achieved.
\noindent{\bf Global Notation.}
Throughout the paper, we will be using some global notation.
We refer to the cost of a task $\tau$ by $c_{\tau}$. As previously defined, $c_{min}$ and $c_{max}$ are the smallest and largest costs that tasks might have, respectively. The speedup of processors is represented by $s$. 
To refer to a time point in an execution we use $t$ and to denote a time interval $I$. The cost of the tasks injected in an interval $I$ is $\cC_I$ whereas the set of injected tasks in the interval
is $S_I$. \ez{As mentioned already,} the number of pending tasks of an algorithm $\Alg$ at time $t$ is represented by $\cT_t(\Alg)$, whereas the set of pending tasks is $Pending_t(\Alg)$, and their total pending cost is $\cC_t(\Alg)$.
}

%!TEX root = ./main.tex

\section{NP-hardness}
\label{sec:NPhard}

We now show that the offline problem of optimally scheduling tasks to minimize pending cost or number of pending tasks is NP-hard. This justifies the approach used in this paper for the online problem, speeding up the processors. In fact we show NP-hardness for problems with even one single processor.

Let us consider $\mathit{C\_SCHED}(t,\cA)$ which is the problem of scheduling tasks so that the pending cost at time $t$ under adversarial pattern $\cA$ is minimized. 
%As input it has a set $S$ of tasks to be executed and the cost of each task is given. A checkpoint time $t_{cp}$ at which the pending cost will be evaluated is considered, as well as a processor activation schedule $\sigma$, which gives the time instants at which the processor will crash or restart, up to time $t_{cp}$.
%The algorithm that solves the $\mathit{C\_SCHED}$ problem, schedules the tasks so that the pending cost at time $t_{cp}$ is minimized. 
We consider a decision version of the problem, $\mathit{DEC\_C\_SCHED}(t,\cA,\omega)$, with an additional input parameter $\omega$. An algorithm solving the decision problem outputs a Boolean value $\mathit{TRUE}$ if and only if there is a schedule that achieves pending cost no more than $\omega$
at time $t$ under adversarial pattern $\cA$. I.e., $\mathit{DEC\_C\_SCHED}(t,\cA,\omega)$ outputs $\mathit{TRUE}$ if and only if $\cC_t(\OPT,\cA) \leq \omega$. The proof of the following theorem can be found in Appendix~\ref{a:NPhard}.

\begin{theorem}
\label{t:nphard}
The problem $\mathit{DEC\_C\_SCHED}(t,\cA,\omega)$ is NP-hard.
\end{theorem}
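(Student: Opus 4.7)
The plan is to give a polynomial-time reduction from \textsc{Subset Sum} to $\mathit{DEC\_C\_SCHED}$, exploiting the fact that tasks are atomic (no preemption) and therefore a single-processor schedule corresponds to a choice of an ordered subset of tasks whose total cost fits inside the available processing interval.

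Concretely, starting from a \textsc{Subset Sum} instance with positive integers $c_1, c_2, \ldots, c_k$ and target $T$, I would build the following instance of $\mathit{DEC\_C\_SCHED}(t,\cA,\omega)$: a single processor $n=1$, no crashes or restarts in $\cA$, one task of cost $c_i$ injected at time $0$ for each $i\in\{1,\ldots,k\}$, query time $t = T$, and threshold $\omega = \sum_{i=1}^{k} c_i - T$. This construction is clearly computable in time polynomial in the size of the \textsc{Subset Sum} instance.

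Next I would justify the equivalence. Because tasks are atomic and there is only one processor running at speed $1$, any (offline) schedule that completes a set $S \subseteq \{1,\ldots,k\}$ of tasks by time $T$ must use at least $\sum_{i \in S} c_i$ units of processor time in $[0,T]$, so $\sum_{i \in S} c_i \le T$. Conversely, given any subset $S$ with $\sum_{i\in S} c_i \le T$, one can schedule its tasks consecutively starting at time $0$ and complete exactly the tasks in $S$ by time $T$ (all tasks are released at $0$, there are no failures, and idleness is allowed for the offline optimum). Therefore minimizing the pending cost at time $T$ is the same as maximizing $\sum_{i\in S} c_i$ subject to $\sum_{i\in S} c_i \le T$, i.e., $\cC_T(\OPT,\cA) = \sum_i c_i - \max\{\sum_{i\in S} c_i : S\subseteq [k],\ \sum_{i\in S} c_i \le T\}$. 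Hence $\cC_T(\OPT,\cA) \le \omega$ iff some subset of $\{c_1,\ldots,c_k\}$ sums to exactly $T$, which is precisely the \textsc{Subset Sum} question.

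Since \textsc{Subset Sum} is NP-hard and the reduction is polynomial, $\mathit{DEC\_C\_SCHED}(t,\cA,\omega)$ is NP-hard as claimed; note that this already holds for the restricted setting with one processor, no failures, all tasks released at time $0$, and only two distinct task costs would not suffice, but arbitrary integer costs do. I do not foresee a genuine obstacle in the argument; the only subtlety to spell out carefully is that although our online model forces algorithms to be work-conserving, the offline optimum $\OPT$ is not, so one must observe that an optimal offline schedule can (without loss of generality) execute its chosen subset contiguously from time $0$, making the correspondence with \textsc{Subset Sum} exact. The same construction, read through $\cT_t(\OPT,\cA)$ instead of $\cC_t(\OPT,\cA)$, also yields NP-hardness of the pending-task variant.
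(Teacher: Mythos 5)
Your reduction is correct for the stated theorem, but it follows a somewhat different route from the paper's. The paper reduces from \textsc{Partition}: it injects tasks of costs $x_1,\dots,x_k$ at time $0$, uses the adversary to crash and immediately restart the single processor at the midpoint $\frac12\sum_i x_i$ and to crash it for good at $\sum_i x_i$, and sets $\omega=0$; since tasks are atomic, pending cost $0$ is achievable iff the first active interval can be exactly filled, i.e.\ iff a partition exists. You instead reduce from \textsc{Subset Sum}, use no crashes at all, let the evaluation time $t=T$ play the role of the hard boundary, and take the nonzero threshold $\omega=\sum_i c_i - T$; the equivalence $\cC_T(\OPT,\cA)\le\omega \iff \exists S:\sum_{i\in S}c_i=T$ is sound because the maximum completable cost is capped at $T$. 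Your gadget is arguably cleaner (it shows hardness even without any failures), and your observation that $\OPT$ need not be work-conserving but can WLOG run its chosen subset contiguously from time $0$ is exactly the point that needs to be made. The one genuine slip is your closing claim that the \emph{same} construction, ``read through $\cT_t$,'' also gives NP-hardness of the pending-task variant: minimizing the number of pending tasks maximizes the \emph{cardinality} of the completed set subject to $\sum_{i\in S}c_i\le T$, not its total cost, so the correspondence with \textsc{Subset Sum} breaks for that objective. This is where the paper's choice of $\omega=0$ pays off --- zero pending cost and zero pending tasks coincide, so its single construction covers both decision problems verbatim --- whereas yours would need a separate argument (or a switch to $\omega=0$ with $t=\sum_i c_i$ and an exact-cover-style count) for $\mathit{DEC\_T\_SCHED}$.
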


A similar theorem can be stated (and proved following the same line), %proof can be constructed
for a decision version of a respective problem, say $\mathit{DEC\_T\_SCHED}(t,\cA)$ of $\mathit{T\_SCHED}(t,\cA,\omega)$, for which the parameter to be minimized is the number of pending tasks.
%The proof can be found in Appendix~\ref{a:NPhard}.

\remove{
We show now that the offline problem of optimally scheduling tasks to minimize pending cost or number of pending tasks is NP-hard. This justifies the approach used in this paper for the
online problem, speeding up the processors. In fact we show the NP-hardness for problems with even one single processor.

We consider two problems, depending on whether the parameter to be minimized is the pending cost or the number of pending tasks.
We call these problems $\mathit{C\_SCHED}$ and $\mathit{T\_SCHED}$, respectively. The problems have as input:
\begin{itemize}
\item
A set $S$ of tasks to be executed. The cost of each task is also given.
\item
A checkpoint time $t_{cp}$. This is the point in time at which the amount of pending cost or the number of pending tasks will be evaluated.
\item
A processor activation schedule $\sigma$, which gives (until time $t_{cp}$) the time instants at which the (single) processor will (re)start or crash.
\end{itemize}
The algorithm that solves the $\mathit{C\_SCHED}$ (resp.~$\mathit{T\_SCHED}$) problem, schedules the tasks so that the pending cost (resp.~number of pending tasks) at time $t_{cp}$ is minimized.

To prove NP-hardness we consider a decision version of these problems, called $\mathit{DEC\_C\_SCHED}$ 
and $\mathit{DEC\_T\_SCHED}$. These problems have an additional
input parameter $\omega$. An algorithm that solves $\mathit{DEC\_C\_SCHED}$ (resp., $\mathit{DEC\_T\_SCHED}$) outputs a Boolean value so that
it is $\mathit{TRUE}$ if and only if there is a schedule that achieves that the pending cost (resp., number of pending tasks) at time $t_{cp}$ is no more than $\omega$.\vspace{.2em}

\begin{theorem}
\label{t:nphard}
The problems $\mathit{DEC\_C\_SCHED}$ and $\mathit{DEC\_T\_SCHED}$ are NP-hard.
\end{theorem}

The complete proof can be found in \cite{Elli-thesis}.
}

\remove{

\begin{proof}
We use the same reduction to prove the NP-hardness of both problems. The reduction is from the Partition problem. 
The input of the Partition problem is a set of
numbers (we assume are positive) $C=\{x_1, x_2, ..., x_k\}$, $k >1$. 
The problem is to decide whether there is a subset $C' \subset C$ such that $\sum_{x_i \in C'} x_i=\frac{1}{2}\sum_{x_i \in C} x_i$.
The Partition problem is know to be NP-complete.

Consider any instance $I_p$ of Partition. We construct an instance $I_c$ of $\mathit{DEC\_C\_SCHED}$ 
and an instance $I_t$ of $\mathit{DEC\_T\_SCHED}$ as follows (both instances have the same input).
The set $S$ is a set of $k$ tasks, so that the $i$th task has cost $x_i$. The checkpointing time 
is $t_{cp}=1 + \sum_{x_i \in C} x_i$. The schedule $\sigma$ starts the processor at time 0 and crashes it at time $\frac{1}{2}\sum_{x_i \in C} x_i$. Then,
$\sigma$ restarts it immediately and crashes it again at time $\sum_{x_i \in C} x_i$. The processor does not 
restart until time $t_{cp}$. The parameter $\omega$ is set to $\omega=0$.

Assume there is an algorithm $A$ that solves $\mathit{DEC\_C\_SCHED}$ (resp., $\mathit{DEC\_T\_SCHED}$). We show that $A$ can be used
to solve the instance $I_p$ of Partition by solving the instance $I_d$ of $\mathit{DEC\_C\_SCHED}$ 
(resp., instance $I_t$ of $\mathit{DEC\_T\_SCHED}$) obtained as described.
If there is a $C' \subset C$ such that $\sum_{x_i \in C'} x_i=\frac{1}{2}\sum_{x_i \in C} x_i$, then there is an algorithm that is able to
schedule tasks from $S$ so that the two semi-periods (of length $\frac{1}{2}\sum_{x_i \in C} x_i$ each) the processor is active, it is doing useful work.
In that case, the pending cost (and, respectively, the number of pending tasks) at time $t_{cp}$ will be $0=\omega$. If, on the other hand,
such subset does not exist, some of the time the processor is active will be wasted, and the
cost (and number of tasks) pending at time $t_{cp}$ has to be larger than $\omega$.
\end{proof}

}

\section{Conditions on Non-Competitiveness}
\label{s:non-comp}

For given task costs $\lmin,\lmax$ and speedup $s$, we define parameter $\gamma$ as the smallest number (non-negative integer) of $\lmin$-tasks that one processor can complete in addition to a $\lmax$-task, such that no algorithm running without speedup can complete more tasks in the same time. The following properties are therefore satisfied:

\begin{description}[leftmargin=4mm]
\item [Property 1.] 
$\frac{\gamma\lmin+\lmax}{s} \le (\gamma+1)\lmin$.

\item [Property 2.] 
For every non-negative integer $\kappa < \gamma$, $\frac{\kappa\lmin+\lmax}{s} > (\kappa+1)\lmin$. 
\end{description}

\noindent It is not hard to derive that $\gamma = \max\{\lceil \frac{\lmax-s \lmin}{(s-1)\lmin} \rceil, 0\}$.

\remove{

For given task costs $\lmin,\lmax$ and speedup $s$, 
we define parameter 
$\gamma$ as the smallest non-negative integer that satisfies the following property:%\vsh

\begin{description}[leftmargin=4mm]
\item [Property 1.] 
$\frac{\gamma\lmin+\lmax}{s} \le (\gamma+1)\lmin$.

It is not hard to derive that $\gamma = \max\{\lceil \frac{\lmax-s \lmin}{(s-1)\lmin} \rceil, 0\}$.
By definition, the following property is also satisfied:

\item [Property 2.] 
For every non-negative integer $\kappa < \gamma$, $\frac{\kappa\lmin+\lmax}{s} > (\kappa+1)\lmin$. 
\end{description}

%\af{
Intuitively, $\gamma$ gives the smallest number of $\lmin$-tasks that an algorithm \Alg with speedup $s$ can complete in addition to a
$\lmax$-task, such that no algorithm running without speedup can complete more tasks in the same time.
%AF: I suggest the above to replace the following text...}
%\ez{The idea is that, we want to find the appropriate number of $\lmin$-tasks, such that while a deterministic algorithm \Alg is able to complete those plus a %$\lmax$-task with speedup $s$, an offline algorithm \OFF won't be able to complete the same amount of tasks, even if he is scheduling $\lmin$-tasks trying to be %competitive in terms of number of tasks. On the other hand, if \Alg completes less than $\gamma$ $\lmin$-tasks, \OFF can be competitive, achieving the completion %of the same amount of tasks.}

}

We now present and prove necessary conditions for the speedup value to achieve competitiveness. %The main result of this section is as follows.

\begin{theorem}
\label{t:non-competitive}
For any given $\lmin,\lmax$ and $s$,
if the following two conditions are satisfied

\vspace{1.5mm}
\begin{tabular}{ l r }
  \textbf{(a)} $s < \frac{\lmax}{\lmin}$, and &
  \textbf{(b)} $s < \frac{\gamma \lmin+\lmax}{\lmax}$ \\
\end{tabular}
\vspace{1.5mm}

%\begin{description}
%\item[(a)] 
%$s < \frac{\lmax}{\lmin}$, and 
%\item[(b)] 
%$s < \frac{\gamma \lmin+\lmax}{\lmax}$,
%\end{description}

\noindent then no deterministic algorithm is competitive when run with speedup $s$
against an adversary injecting tasks with cost in $[\lmin,\lmax]$ even in a system with one single processor.
\end{theorem}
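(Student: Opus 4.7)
I fix any deterministic online algorithm $\Alg$ and construct an adversarial pattern $\cA=\cA(\Alg)$, consisting of task injections together with crash/restart events of the single processor, that drives $\cC_t(\Alg,\cA)\to\infty$ while keeping $\cC_t(\OPT,\cA)$ bounded (in fact $0$ at infinitely many times). Since $\Alg$ is deterministic, the adversary may simulate it in advance and tailor $\cA$ to its scheduling decisions; a single processor suffices for the construction, so the conclusion extends to any larger system. Without crashes the speedup $s>1$ would guarantee that $\Alg$'s completed work dominates $\OPT$'s, so the adversary must exploit crashes to waste $\Alg$'s work while $\OPT$, with foresight, schedules around them.

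The adversary acts in equal-length rounds. In each round it (i)~injects a batch of tasks with total cost $\gamma\lmin+\lmax$ --- one $\lmax$-task plus $\gamma$ $\lmin$-tasks --- and (ii)~triggers one crash, followed by an immediate restart, at a time $t^\star$ chosen based on $\Alg$'s simulated actions. The round length $R$ is set so that $\OPT$ (without speedup) has just enough time to process the whole batch, namely $R$ slightly larger than $\gamma\lmin+\lmax$, and $t^\star$ is placed at a task boundary of $\OPT$'s ideal schedule of the batch. Knowing $t^\star$ in advance, $\OPT$ packs the $\lmax$-task wholly inside one of the sub-intervals $[0,t^\star]$, $[t^\star,R]$ and the $\gamma$ $\lmin$-tasks inside the other, never being interrupted mid-task; hence $\OPT$'s pending cost is $0$ at every round boundary.

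A case analysis on $\Alg$'s first task in the round (the $\lmax$-task or a $\lmin$-task) shows that the adversary can always position $t^\star$ so that the crash catches $\Alg$ mid-$\lmax$-task, just before completion, wasting $\Omega(\lmax/s)$ of processor time. Condition~(a), $s<\lmax/\lmin$, guarantees that this wasted time exceeds the duration of a full $\lmin$-task, so $\Alg$ cannot ``replace'' the lost $\lmax$-work by quickly picking up a cheaper task. Condition~(b), combined with Property~2 (which yields $(s-1)\gamma\lmin<\lmax-\lmin$), implies $(s-1)R<\lmax$; equivalently, $\Alg$'s effective capacity per round, $sR-\lmax$, falls strictly below the batch's total cost $\gamma\lmin+\lmax$. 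Consequently $\Alg$ leaves at least one task of cost $\ge\lmin$ pending at each round boundary; summing over rounds, $\cC_t(\Alg,\cA)$ grows linearly in $t$ while $\cC_t(\OPT,\cA)$ remains $0$ at round boundaries. This contradicts any relation of the form $\cC_t(\Alg,\cA)\le x\cdot\cC_t(\OPT,\cA)+\Delta$, and the same bookkeeping applied to task counts yields non-competitiveness for $\cT_t$.

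The main obstacle is the simultaneous alignment required for $t^\star$: it must coincide with a task boundary of $\OPT$'s schedule (so $\OPT$ is never interrupted mid-task) yet catch $\Alg$ mid-$\lmax$-task under \emph{every} deterministic ordering $\Alg$ could adopt in the round. Properties~1 and~2 make $\gamma$ the minimal integer rendering this alignment feasible precisely in the regime where both (a) and (b) hold: a smaller batch would let $\Alg$ recover the wasted $\lmax/s$ via its speedup advantage $(s-1)R$, while a larger batch would keep $\OPT$ from clearing the round. Thus both (a) and (b) are necessary, and relaxing either would expose a schedule either available to $\Alg$ or required by $\OPT$ that would break the lower bound.
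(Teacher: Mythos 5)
Your high-level plan (simulate the deterministic $\Alg$, build an adversarial crash/injection pattern around it, keep an offline schedule that never loses work) is the right one and matches the paper's, but the concrete construction has a genuine gap. You fix a round of length $R$ slightly above $\gamma\lmin+\lmax$ and allow \emph{one} crash followed by an \emph{immediate restart} inside the round. After that restart $\Alg$ still has essentially the whole round left, and its speedup lets it recover: it can re-schedule and finish the interrupted $\lmax$-task and then the rest of the batch. Concretely, take $\lmin=1$, $\lmax=3$, $s=3/2$; then $\gamma=3$, both (a) and (b) hold, the batch costs $6$, and $\Alg$ processes $sR\approx 9$ units of work per round, so even after discarding a full $\lmax=3$ units of wasted partial execution it completes the entire batch. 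The supporting arithmetic is also incorrect: Property~2 gives $(s-1)\gamma\lmin<\lmax-\lmin$ and (b) gives $(s-1)\lmax<\gamma\lmin$, whose sum is $(s-1)R<\lmax+(\gamma-1)\lmin$, which is strictly weaker than your claimed $(s-1)R<\lmax$ whenever $\gamma\ge 1$ (and $\gamma\ge1$ is forced by (b) together with $s\ge1$). So the step ``$\Alg$'s effective capacity $sR-\lmax$ falls below the batch cost'' does not follow from the hypotheses. A further unproven claim is that the adversary can always catch $\Alg$ mid-$\lmax$-task at a point that is simultaneously a task boundary of a feasible \OPT schedule; $\Alg$ may simply never schedule the $\lmax$-task in a round, and your case analysis does not cover that branch.

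The paper's proof repairs exactly these points by making the phase length \emph{adaptive} and placing the single crash at the \emph{end} of the phase, so $\Alg$ gets no recovery time. If the simulated $\Alg$ runs $\kappa<\gamma$ $\lmin$-tasks and then starts the $\lmax$-task, the phase lasts $(\kappa+1)\lmin$ (\OFF completes $\kappa+1$ $\lmin$-tasks; Property~2 guarantees $\Alg$'s $\lmax$-task is cut off); if $\Alg$ runs $\gamma$ consecutive $\lmin$-tasks, the phase lasts $\lmax$ (\OFF completes one $\lmax$-task; condition (b) guarantees $\Alg$ completes no $\lmax$-task). The injection at the phase end replaces exactly what \OFF finished, so \OFF's backlog stays at $\gamma$ $\lmin$-tasks plus one $\lmax$-task forever, while either $\Alg$'s $\lmax$-backlog or its $\lmin$-backlog grows without bound depending on which scenario recurs infinitely often. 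To salvage your write-up you would need to replace the fixed-length round with this adaptive phase structure (or otherwise deny $\Alg$ any post-crash slack within a round) and redo the capacity accounting accordingly.
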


In other words, if $s < \min\left\{ \frac{\lmax}{\lmin}, \frac{\gamma \lmin+\lmax}{\lmax}\right\}$ there is no deterministic competitive algorithm.
%Intuitively, one might think that (a) should be enough. Surprisingly though, as we will see in the proof that follows, this is not the case and (b) is also needed.

\begin{proofof}{Theorem~\ref{t:non-competitive}}
Consider a deterministic algorithm \Alg. We define a universal off-line algorithm \OFF with associated crash and injection adversarial patterns, and prove that the cost of \OFF is always bounded while the cost of \Alg is unbounded during the executions of these two algorithms under the defined adversarial crash-injection pattern.

In particular, consider an adversary that activates, and later keeps crashing and re-starting one processor.
The adversarial pattern and the algorithm $\OFF$ are defined recursively in consecutive {\em phases},
where formally each phase is a closed time interval and every two consecutive phases share an end.
In each phase, the processor is restarted in the beginning and crashed at the end of the phase,
while kept continuously alive during the phase.
At the beginning of phase $1$, there are $\gamma$ of $\lmin$-tasks and one $\lmax$-task injected, 
and the processor is activated.

Suppose that we have already defined adversarial pattern and algorithm $\OFF$ till the beginning of phase $i\ge 1$.
Suppose also, that during the execution of $\Alg$ there are $x$ of $\lmin$-tasks and $y$ of $\lmax$-tasks
pending. The adversary does not inject any tasks until the end of the phase.
Under this assumption we could simulate the choices of $\Alg$ during the phase $i$.
There are two cases to consider (illustrated in Figures~\ref{fig:scenario1} and~\ref{fig:scenario2}):
\begin{description}
%\vspace*{-1ex}
\item[Scenario 1.]
$\Alg$ schedules $\kappa$ of $\lmin$-tasks, where $0\le \kappa < \gamma$, 
and then schedules a $\lmax$-task;
then $\OFF$ runs $\kappa+1$ of $\lmin$-tasks in the phase, 
and after that the processor is crashed and the phase is finished.
At the end, $\kappa+1$ $\lmin$-tasks are injected.
%\vspace*{-1ex}
\item[Scenario 2.] 
$\Alg$ schedules $\kappa=\gamma$ of $\lmin$-tasks;
then $\OFF$ runs a single $\lmax$-task in the phase,
and after that the processor is crashed and the phase is finished.
At the end, one $\lmax$-task is injected.
\end{description}
%The two scenarios are illustrated in Figure~\ref{fig:scenario1} and \ref{fig:scenario2}, respectively.

What remains to show is that the definitions of the $\OFF$ algorithm and the associated adversarial pattern are valid, and that in the execution of $\OFF$ the number of pending tasks is bounded, while in the corresponding
execution of $\Alg$ it is not bounded. Since the tasks have bounded cost, the same applies to the pending cost of both $\OFF$ and $\Alg$.
%We begin by proving several useful phase properties of the considered executions of algorithms $\Alg$ and $\OFF$.
Here we give some useful properties of the considered executions of algorithms $\Alg$ and $\OFF$, whose proofs can be found in Appendix~\ref{app:NON-COMP}.

\begin{lemma}
\label{l:off-pending}
The phases, adversarial pattern and algorithm $\OFF$ are well-defined.
Moreover, in the beginning of each phase, there are exactly $\gamma$ of $\lmin$-tasks and 
one $\lmax$-task pending in the execution of~$\OFF$.%\vsh
\end{lemma}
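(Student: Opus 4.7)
The plan is to prove the lemma by induction on the phase number $i$. The base case, $i=1$, is immediate: by the construction of phase 1 the adversary injects exactly $\gamma$ $\lmin$-tasks and one $\lmax$-task, and $\OFF$ has done nothing yet, so the invariant holds.

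For the inductive step I assume at the start of phase $i$ that $\OFF$'s pending set contains exactly $\gamma$ $\lmin$-tasks and one $\lmax$-task. Since $\Alg$ is deterministic and no injection happens inside a phase, I would simulate the sequence of picks $\Alg$ performs starting from its state at the beginning of the phase; this simulation eventually reaches a $\lmax$-pick because $\Alg$'s pending always contains at least one $\lmax$-task (its $\lmax$-tasks from earlier phases are never completed, a side fact that falls out of the same analysis) and its $\lmin$-pending is finite. Let $\kappa$ be the number of $\lmin$-picks $\Alg$ performs before its first $\lmax$-pick in the simulation. If $\kappa<\gamma$ the adversary declares Scenario 1, otherwise it truncates at the $\gamma$-th $\lmin$-pick and declares Scenario 2; thus Scenarios 1 and 2 are mutually exclusive and exhaustive, and the adversary's phase length and $\OFF$'s schedule are unambiguously fixed.

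In Scenario 1 the phase length is $(\kappa+1)\lmin$ and $\OFF$ executes $\kappa+1$ of its available $\lmin$-tasks, which is feasible because $\kappa+1 \le \gamma$ by the inductive hypothesis. Property 2 gives $(\kappa+1)\lmin<(\kappa\lmin+\lmax)/s$, equivalently $s(\kappa+1)\lmin<\kappa\lmin+\lmax$, so in this phase length $\Alg$'s total work is strictly less than what is needed to finish its $\kappa$ $\lmin$-tasks together with its $\lmax$-task, hence $\Alg$'s $\lmax$-task is not completed before the processor is crashed. After the adversary injects $\kappa+1$ fresh $\lmin$-tasks, $\OFF$ holds $\gamma-(\kappa+1)$ old $\lmin$-tasks, $\kappa+1$ new $\lmin$-tasks, and its untouched $\lmax$-task, restoring the invariant. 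In Scenario 2 the phase length is $\lmax$ and $\OFF$ executes its single $\lmax$-task; condition (b), rewritten as $s\lmax<\gamma\lmin+\lmax$, ensures that the total work $\Alg$ can perform during the phase is strictly smaller than the combined cost $\gamma\lmin+\lmax$ standing before $\Alg$ could complete the $\lmax$-task, so $\Alg$ does not complete the $\lmax$-task. Upon the crash and the adversary's injection of one fresh $\lmax$-task, $\OFF$ holds its $\gamma$ old $\lmin$-tasks and one new $\lmax$-task, again restoring the invariant.

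The main obstacle is confirming that the two scenarios are genuinely exhaustive and that the chosen phase length simultaneously (i) lets $\OFF$ complete its planned tasks and (ii) prevents $\Alg$ from completing the relevant $\lmax$-task. Part (i) is granted by the inductive hypothesis on $\OFF$'s pending set, while part (ii) is exactly where Property 2 enters Scenario 1 and condition (b) enters Scenario 2, closing the induction.
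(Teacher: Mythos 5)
Your proof is correct and takes essentially the same route as the paper's: an induction on the phase number whose inductive step observes that the adversary re-injects exactly the configuration of tasks that $\OFF$ performed during the phase ($\kappa+1$ $\lmin$-tasks in Scenario~1, one $\lmax$-task in Scenario~2), so the invariant of $\gamma$ pending $\lmin$-tasks and one pending $\lmax$-task is restored. The paper states this more tersely and defers the facts about $\Alg$ never completing a $\lmax$-task to Lemma~\ref{l:no-long}; your inclusion of that argument to justify exhaustiveness of the two scenarios is consistent and does no harm.
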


%\begin{proof}
%We argue by induction on the number of phases that: 
%in the beginning of phase $i$ there are exactly $\gamma$ of $\lmin$-tasks and one $\lmax$-task pending in the execution of $\OFF$, and therefore phase $i$ is well defined. \ez{Its} specification (including termination time) depends only \ez{on whether} $\OFF$ \ez{schedules} either $\gamma$ of $\lmin$-tasks (in Scenario 1) or one $\lmax$-task (in Scenario 2) before the next task injection at the end of the phase.
%The invariant holds for phase $1$ by definition. By straightforward investigation of both Scenarios, the very same configuration of task lengths that has been performed by $\OFF$ in its execution during a phase is injected at the end of the phase, and therefore the inductive argument proves the invariant for every consecutive phase.
%\end{proof}

\begin{lemma}
\label{l:infinite}
There are infinite number of phases.%\vsh
\end{lemma}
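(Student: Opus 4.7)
The plan is an induction on the phase index~$i$: for every $i\ge 1$, show that phase~$i$ is well-defined and has finite duration. Since the processor is restarted immediately at the end of each phase, this yields an infinite sequence of phases.

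For the inductive step, assume phases $1,\ldots,i-1$ have been constructed. By Lemma~\ref{l:off-pending}, $\OFF$ starts phase~$i$ with exactly $\gamma$ pending $\lmin$-tasks and one pending $\lmax$-task. Since the adversary withholds injections during the phase and $\Alg$ is deterministic and work-conserving, I would simulate $\Alg$'s execution starting from its pending state. Exactly one of two triggers must be the first to occur: either $\Alg$ begins executing a $\lmax$-task after completing $\kappa<\gamma$ $\lmin$-tasks in the phase, in which case the adversary invokes Scenario~1 with length $(\kappa+1)\lmin\le\gamma\lmin$; or $\Alg$ completes its $\gamma$-th $\lmin$-task of the phase at time $\gamma\lmin/s$, in which case the adversary invokes Scenario~2 with length~$\lmax$. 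In both cases the phase length is finite (at most $\max(\gamma\lmin,\lmax)$), so the processor can be crashed at the specified time, the prescribed tasks can be injected, and the processor can be restarted, yielding phase~$i+1$.

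The main obstacle is verifying that $\Alg$ never runs out of pending tasks before a trigger fires, so that the simulation always reaches one of the two cases. For this I would maintain, alongside the invariant of Lemma~\ref{l:off-pending}, the auxiliary invariant that at the start of every phase $\Alg$ has at least one pending $\lmax$-task. This invariant holds in phase~1 by construction. It is preserved under Scenario~1 because the $\lmax$-task $\Alg$ attempted is not completed before the crash---by Property~2, its completion time $(\kappa\lmin+\lmax)/s$ strictly exceeds the phase length $(\kappa+1)\lmin$---so it remains pending. It is preserved under Scenario~2 because the adversary injects a fresh $\lmax$-task at the phase end, while condition~(b) ensures that $\Alg$ cannot complete any $\lmax$-task within the remaining $\lmax-\gamma\lmin/s$ time of the phase. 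Given this invariant, whenever $\Alg$ has no pending $\lmin$-task it immediately triggers Scenario~1 with $\kappa=0$, and otherwise the simulation proceeds until either a $\lmax$-task is attempted (Scenario~1) or the $\gamma$-th $\lmin$-task of the phase is completed (Scenario~2), both within finite time. This closes the induction and establishes infinitely many phases.
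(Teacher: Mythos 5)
Your proposal is correct and follows essentially the same route as the paper: well-definedness of the phases via Lemma~\ref{l:off-pending} plus the observation that each phase has finite length (at most $\max(\gamma\lmin,\lmax)$, being bounded by $\OFF$'s schedule), which immediately yields infinitely many phases. The extra care you take to ensure the simulation of $\Alg$ always reaches a trigger---your auxiliary invariant that $\Alg$ always has a pending $\lmax$-task---is a valid and welcome detail that the paper leaves implicit, and it is in fact subsumed by Lemma~\ref{l:no-long} ($\Alg$ never performs any $\lmax$-task).
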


%\begin{proof}
%First, by Lemma~\ref{l:off-pending}, consecutive phases are well-defined.
%Second, observe
%that each phase is finite, regardless of whether Scenario~1 or Scenario~2 is applied,
%as it is bounded by the time in which $\OFF$ performs either at most $\gamma$ of $\lmin$-tasks
%(in Scenario~1) or one $\lmax$-task (in Scenario~2).
%Hence, the number of phases is infinite.\vspace{.2em}
%\end{proof}

\begin{lemma}
\label{l:no-long}
$\Alg$ never performs any $\lmax$-task.%\vsh
\end{lemma}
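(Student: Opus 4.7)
The plan is to argue, by induction on the phase index, that within any phase $\Alg$ cannot complete a $\lmax$-task; since tasks are atomic and the processor is crashed at the end of every phase, any $\lmax$-task completion by $\Alg$ would have to start and finish inside a single phase, so a per-phase argument suffices. Note that at the start of every phase there is at least one $\lmax$-task pending for $\Alg$ (the one injected at the start of phase $1$ is never completed under the inductive hypothesis, and no further $\lmax$-tasks are completed by $\Alg$ by the inductive hypothesis), so $\Alg$ has the option of scheduling one. Recall also that no new tasks are injected during a phase, so during the phase $\Alg$ works from a fixed pool.

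Consider any phase $i$ and split the analysis according to which scenario is triggered. In Scenario 1, by the scenario's definition $\Alg$ starts a $\lmax$-task after completing some $\kappa < \gamma$ of $\lmin$-tasks in this phase. Since completing $\kappa$ of $\lmin$-tasks takes at least $\kappa\lmin/s$ real time under speedup $s$, the earliest moment at which the $\lmax$-task could finish is $(\kappa\lmin+\lmax)/s$. The phase, however, ends with a crash at time $(\kappa+1)\lmin$. By Property~2, applied to $\kappa<\gamma$, we have $(\kappa\lmin+\lmax)/s>(\kappa+1)\lmin$, so the $\lmax$-task is interrupted by the crash and, by atomicity, discarded without completion.

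In Scenario 2, $\Alg$ completes $\gamma$ of $\lmin$-tasks in this phase (consuming at least $\gamma\lmin/s$ time) without having yet scheduled any $\lmax$-task, and the phase ends at time $\lmax$. What needs to be handled is the tail of the phase: $\Alg$ could still attempt to start a $\lmax$-task in the remaining interval of length $\lmax-\gamma\lmin/s$. Completing such a $\lmax$-task would require an extra $\lmax/s$ units of real time; but Condition (b), $s<(\gamma\lmin+\lmax)/\lmax$, is equivalent to $\lmax/s>\lmax-\gamma\lmin/s$, so the would-be $\lmax$-task cannot finish before the crash at time $\lmax$ and is again lost by atomicity. Hence in either scenario $\Alg$ completes no $\lmax$-task in phase $i$, and the induction closes.

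The main obstacle is precisely this Scenario 2 ``tail'' argument: the scenario definition only specifies $\Alg$'s behaviour up to its $\gamma$-th $\lmin$-task, and one must rule out that $\Alg$ exploits the leftover time $\lmax-\gamma\lmin/s$ to sneak in a $\lmax$-task. This is exactly where Condition~(b) is invoked; Property~2 alone (which drives Scenario~1) is not sufficient for that case, which explains why two separate threshold conditions are needed in Theorem~\ref{t:non-competitive}.
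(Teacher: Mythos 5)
Your proof is correct and follows essentially the same route as the paper's: a case split on the two scenarios, using Property~2 for Scenario~1 and condition~(b) for Scenario~2 (your inequality $\lmax/s > \lmax - \gamma\lmin/s$ is just an algebraic rearrangement of the paper's $(\gamma\lmin+\lmax)/s > \lmax$). The extra induction establishing that a $\lmax$-task is always pending is harmless but not needed for the lemma's validity.
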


%\begin{proof}
%It follows from the specification of Scenarios~1 and 2, condition (b) on the speedup $s$, 
%and from Property 2.
%Consider a phase. If Scenario~1 is applied for specification of this phase then
%$\Alg$ could not finish its $\lmax$-task scheduled after $\kappa<\gamma$
%$\lmin$-tasks, because the time needed for completing this sequence of tasks is at least
%$\frac{\kappa\lmin+\lmax}{s}$, which, by Property 2, is larger than the length of this phase $(\kappa+1)\lmin$.
%If Scenario~2 is applied for specification of this phase, then the first $\lmax$-task
%could be finished by $\Alg$ no earlier than $\frac{\gamma\lmin+\lmax}{s}$ time
%after the beginning of the phase, which is again bigger than the length of this phase $\lmax$,
%by the assumption (b) on the speedup $s< \frac{\gamma \lmin+\lmax}{\lmax}$.\end{proof}\vspace{.3em}

\begin{lemma}
\label{l:opt-long}
If Scenario~2 was applied in the specification of a phase $i$, then the number of pending $\lmax$-tasks at the end of phase $i$ in the execution of $\Alg$ increases by one comparing with the beginning of phase $i$, while the number of pending $\lmax$-tasks stays the same in the execution of $\OFF$.\vspace{-.7em}%\vsh
\end{lemma}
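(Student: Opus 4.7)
The plan is to establish the two parts of the lemma separately, relying on the definition of Scenario~2 together with Lemmas~\ref{l:off-pending} and~\ref{l:no-long}.

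First I would address the behavior of $\OFF$. By Lemma~\ref{l:off-pending}, at the beginning of phase $i$ there is exactly one $\lmax$-task pending in the execution of $\OFF$. By the specification of Scenario~2, during phase $i$ the algorithm $\OFF$ runs (and completes) a single $\lmax$-task, which decreases the number of pending $\lmax$-tasks by one. Still by the specification of Scenario~2, at the end of the phase the adversary injects exactly one $\lmax$-task, restoring the count. Combining these two observations gives a net change of zero, which is the claim for $\OFF$.

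Next I would analyze $\Alg$. By Lemma~\ref{l:no-long}, $\Alg$ never performs any $\lmax$-task during the whole execution, so in particular during phase $i$ no $\lmax$-task is completed. Moreover, by the Scenario~2 specification, during phase~$i$ the algorithm $\Alg$ schedules only $\lmin$-tasks (a total of $\gamma$ of them). Even if the phase is long enough for $\Alg$ to begin a $\lmax$-task at some point, the model assumes tasks are atomic and preemption is not allowed, so the crash at the end of the phase leaves any started $\lmax$-task uncompleted and therefore still pending at the repository (no \emph{inform} operation is executed for it). Hence the number of pending $\lmax$-tasks at $\Alg$'s side does not decrease during the phase, while the injection of one $\lmax$-task at the end of the phase increases it by exactly one, as required.

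The main technical care is bookkeeping rather than genuine difficulty: I must be certain that (i) the injection event at the end of the phase is counted among the pending tasks at the end of the phase (this follows directly from the event-ordering rule at the repository, which places injections before gets), and (ii) an $\lmax$-task on which $\Alg$ may have started working but has not completed is still counted as pending, which follows from task atomicity and the fact that the \emph{inform} operation is triggered only upon completion. Both points are immediate from the model, and together with Lemma~\ref{l:no-long} they close the argument.
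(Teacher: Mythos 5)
Your proof is correct and follows essentially the same route as the paper, which simply cites Lemma~\ref{l:no-long} together with the Scenario~2 injection specification (and, implicitly, Lemma~\ref{l:off-pending} for the $\OFF$ side); you merely spell out the same bookkeeping in more detail. The extra care you take about atomicity of an unfinished $\lmax$-task is already subsumed by Lemma~\ref{l:no-long}, so nothing further is needed.
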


%\begin{proof}
%It follows from Lemma~\ref{l:no-long} and from specification of tasks injections at the end of phase $i$,
%by Scenario~2.%\vsh
%\end{proof}

Now we resume the main proof of non competitiveness, i.e., Theorem~\ref{t:non-competitive}.
By Lemma~\ref{l:off-pending}, the adversarial pattern and the corresponding offline algorithm $\OFF$
are well-defined and by Lemma~\ref{l:infinite}, the number of phases is infinite.
There are therefore two cases to consider:
(1) If the number of phases for which Scenario~2 was applied in the definition is infinite, then 
by Lemma~\ref{l:opt-long} the number of pending $\lmax$-tasks increases by one infinitely many times, while by Lemma~\ref{l:no-long} it never decreases. Hence it is unbounded.
(2) Otherwise (i.e., if the number of phases for which Scenario~2 was applied in the definition is bounded),
after the last Scenario~2 phase in the execution of $\Alg$, there are only phases in which Scenario~1 is
applied, and there are infinitely many of them. 
In each such phase, $\Alg$ performs only $\kappa$ of  $\lmin$-tasks while 
$\kappa+1$ $\lmin$-tasks will be injected at the end of the phase, for some corresponding non-negative
integer $\kappa<\gamma$ defined in the specification of Scenario~1 for this phase.
Indeed, the length of the phase is $(\kappa+1)\lmin$, while after performing $\kappa$ of  $\lmin$-tasks
$\Alg$ schedules a  $\lmax$-task and the processor is crashed before completing it, because 
$\frac{\kappa\lmin+\lmax}{s} > (\kappa+1)\lmin$ (cf., Property 2).
Therefore, in every such phase of the execution of $\Alg$ the number of pending $\lmin$-tasks increases
by one, and it does not decrease since there are no other kinds of phases (recall that we consider
phases with Scenario~1 after the last phase with Scenario~2 finished). Hence the number
of $\lmin$-tasks grows unboundedly in the execution of $\Alg$.

%\marginpar{\ez{Why by Lemma~\ref{l:off-pending}?} \af{It shows that OFF has at most $\gamma+1$ tasks pending}}
To conclude, in both cases above, the number of pending tasks in the execution of $\Alg$ grows unboundedly in time,
while the number of pending tasks in the corresponding execution of $\OFF$ (for the same adversarial pattern)
is always bounded, by Lemma~\ref{l:off-pending}.%\vsh
\end{proofof}

Note that the use of condition (a) is implicit in our proof. %Satisfying condition (b) implies $\gamma > 0$, which automatically means that condition (a) is also satisfied.
\vspace{-1em}

%!TEX root = ./main.tex

\section{Algorithm $\LISs$}%\vsh}
\label{s:alg}

In this section we present Algorithm $\LISs$,  
%which is competitive for speedup $s \geq \frac{\lmax}{\lmin}$, for parameter $\beta \geq \frac{\lmax}{\lmin}$ and for any number of processors $n$.
%Algorithm $\LISs$ 
which balances between the following two paradigms: scheduling Longest-In-System task first (LIS) and redundancy avoidance. More precisely, the algorithm at a processor tries to schedule the task that has been waiting the longest and does not cause redundancy of work if the number of pending tasks is sufficiently large. See the
algorithm pseudocode for details. %\cg{Note that due to processes crashes, it would not be helpful for a process to notify the repository which task has scheduled, before actually performing the task.}

\vspace{.5em}
%\begin{figure}[h!]
\begin{footnotesize}
\hrule \vspace{2pt}
\noindent {\bf Algorithm $\LISs$ (for processor $p$)}\vspace{1pt}
\hrule \vspace{2pt}
\noindent {\bf Repeat}\hfill {\footnotesize\em //Upon awaking or restart, start here}\\
\indent {\bf Get} from the Repository the set of pending tasks $Pending$;\\
\indent {\bf Sort} $Pending$ 
by task arrival and ids/costs;\\
\indent {\bf If} $|Pending| \geq 1$ \\
\indent\indent {\bf then} perform task with rank $p \cdot \beta n\mod |Pending|$;\\
\indent {\bf Inform} the Repository of the task performed.   
\hrule
\end{footnotesize}
%\end{figure}

\vspace{1em}

%\marginpar{\ez{SOS!! What if there are exactly $\beta n$ tasks pending? All processors will choose task 0?}
%\cg{We can have it as a special case. If $|Pending| = \beta n$, then we can have process $p$ perform task $p \mod \beta n$.}}
\noindent 
Observe that since $s \geq \lmax/\lmin$, Algorithm $\LISs$ is able to complete one task for each task completed by the offline algorithm. Additionally, if there are at least $\beta n^2$ tasks pending, for $\beta\geq\frac{\lmax}{\lmin}$, two processors do not schedule the same task. Combining these two observations it is possible to prove that $\LISs$ is $1$-task-competitive. 
%\deleted[af]{This is stated in the following result, whose proof is given in Appendix~\ref{app:LIS}.}

%The analysis of algorithm $\LISs$ that leads to the following results is given in Appendix~\ref{app:LIS}.

\begin{theorem}
\label{t:LISs-comp1}
%Algorithm $\LISs$ has pending-task competitiveness of 
$\cT_t(\LISs,\cA) \leq \cT_t(\OPT,\cA) + \beta n^2+3n$ 
%and pending-cost competitiveness of 
and $\cC_t(\LISs,\cA) \leq \frac{\lmax}{\lmin}\cdot\left(\cC_t(\OPT,\cA) + \beta n^2+3n\right)$, for any time $t$ and adversarial pattern $\cA$, and for speedup $s\geq\frac{\lmax}{\lmin}$, when 
$\beta\geq\frac{\lmax}{\lmin}$.\vspace{-.3em}
\end{theorem}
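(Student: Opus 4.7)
The plan is to combine two structural properties of $\LISs$: a no-redundancy property driven by the stride $\beta n$ in the scheduling rule, and a pace-matching property driven by the speedup $s \ge \lmax/\lmin$. Together these will show that, apart from an $O(\beta n^2)$ surplus accumulated once, every task completed by \OPT{} is matched by a distinct task completed by $\LISs$.

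\textbf{Step 1 (No-redundancy).} I would first show that if at some \emph{get} time the pending list has size at least $\beta n^2$, then the $n$ indices $\{0, \beta n, 2\beta n, \ldots, (n-1)\beta n\}$ are distinct and all lie in $[0,|Pending|)$, so $p \cdot \beta n \bmod |Pending| = p \cdot \beta n$. Because every processor sorts $Pending$ identically (by arrival time, breaking ties by id/cost), two processors whose \emph{get}s return the same set select two distinct tasks. Even when their \emph{get}s are skewed in time, the stride $\beta n$ absorbs the rank shifts caused by up to $n$ concurrent completions together with the injections that can happen between two consecutive cycles of length at most $\lmax/s \le \lmin$; this is the place where the hypothesis $\beta \ge \lmax/\lmin$ is used. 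Consequently, whenever $\cT(\LISs) \ge \beta n^2$, no two alive processors work on the same task at the same moment.

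\textbf{Step 2 (Pace-matching).} With $s \ge \lmax/\lmin$, every task takes $\LISs$ at most $\lmax/s \le \lmin$ time, whereas every task takes \OPT{} at least $\lmin$ time. Therefore, on any interval in which a processor is continuously alive and the pending set stays non-empty, $\LISs$ completes at least as many tasks on that processor as \OPT, losing at most one task per alive-segment to boundary rounding. Summed over the $n$ processors this loss is at most $n$.

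\textbf{Step 3 (Amortized comparison).} Let $t_0$ be the latest time in $[0,t]$ with $\cT_{t_0}(\LISs) \le \beta n^2$ (take $t_0=0$ if none exists). On $(t_0,t]$, Step 1 yields redundancy-free progress and Step 2 the pace matching. Writing
\[
\cT_t(\LISs) - \cT_t(\OPT) = \bigl[\cT_{t_0}(\LISs) - \cT_{t_0}(\OPT)\bigr] + \bigl[Y_{\OPT} - Y_{\LISs}\bigr],
\]
where $Y_{\OPT}, Y_{\LISs}$ count completions in $(t_0,t]$, Steps 1--2 give $Y_{\LISs} \ge Y_{\OPT} - 3n$, absorbing up to $n$ in-flight tasks at $t_0$, up to $n$ at $t$, and up to $n$ from per-processor rounding. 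Combined with $\cT_{t_0}(\LISs) \le \beta n^2$, this yields the first claim $\cT_t(\LISs) \le \cT_t(\OPT) + \beta n^2 + 3n$. The cost bound then follows from $\cC_t(\LISs) \le \lmax \cdot \cT_t(\LISs)$ and $\lmin \cdot \cT_t(\OPT) \le \cC_t(\OPT)$, together with $s \ge \lmax/\lmin$.

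The main obstacle will be Step 1: rigorously proving that the $\beta n$ stride truly prevents collisions in the asynchronous setting, where each processor reads $Pending$ at a different instant and the set changes between reads. Tying the choice $\beta \ge \lmax/\lmin$ to a quantitative bound on the number of completions and injections that can occur between two \emph{get}s of distinct processors, and then showing that no rank can drift by more than $\beta n - 1$ positions in that window, is the delicate piece that also drives the $3n$ additive slack.
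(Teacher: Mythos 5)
Your plan is essentially the paper's own argument: the same three ingredients (redundancy avoidance from the $\beta n$ stride, with $\beta\ge\lmax/\lmin$ bounding the rank drift of a task to at most $\beta n-1$ positions over one execution window; pace-matching from $s\ge\lmax/\lmin$ with an $n$-task loss for tasks started before the interval; and an amortized comparison from the last time the backlog was small) appear in the paper as Lemmas~\ref{l:redundancy}, \ref{l:OPT-not-better} and \ref{l:interval}/\ref{l:number-tasks}, yielding the identical $\beta n^2+3n$ slack and the same derivation of the cost bound from the task bound. The only differences are bookkeeping: the paper anchors its reference time $t_*$ on the difference $\cT_t(\LISs,\cA)-\cT_t(\OPT,\cA)$ dropping to $\beta n^2$ rather than on the absolute count, and it excises the first $\lmin$ of the interval so that every report covered by the no-redundancy lemma was also \emph{scheduled} inside the interval --- exactly the boundary effect your ``in-flight at $t_0$'' term is meant to absorb.
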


%%%%%%%%%%%%%%%%%%%%%%%%%%%%%%%%%%%%%%%%%%%%%%%%%%%%%%%%%%%%%%%%%%%%%%%%%%%%%%

%					REMOVED TO BE INCLUDED IN THE APPENDIX

%%%%%%%%%%%%%%%%%%%%%%%%%%%%%%%%%%%%%%%%%%%%%%%%%%%%%%%%%%%%%%%%%%%%%%%%%%%%%%

\begin{proof}
\deleted[af]{Before giving the proof of Theorem~\ref{t:LISs-comp1}, we analyze several properties of the algorithm needed for this proof.}
We first focus on the number of pending-tasks.
Suppose that $\LISs$ is not $\OPT + \beta n^2+3n$ competitive in terms of the number of pending tasks, $\OPT$, for some $\beta \geq \frac{\lmax}{\lmin}$
and some $s\ge \frac{\lmax}{\lmin}$. Consider an execution witnessing this fact and fix the adversarial pattern associated with it together with the optimum solution \OPT\ for it.

Let $t^*$ be a time in the execution when $\cT_{t^*}(\LISs) > \cT_{t^*}(\OPT) + \beta n^2+3n$.
For any time interval $I$, let $\cT_I$ be the total number of tasks injected in the interval $I$. Let $t_*\le t^*$ be the smallest time such that for all $t\in [t_*,t^*)$, $\cT_t(\LISs) > \cT_t(\OPT) +  \beta n^2$
(Note that the selection of minimum time satisfying some properties defined by the computation is possible due to the fact that the computation is split into discrete processor cycles.)
Observe that $\cT_{t_*}(\LISs) \le \cT_{t_*}(\OPT) + \beta n^2 +n$, because at time $t_*$ no more than $n$ tasks could be reported to the repository by \OPT, while just before $t_*$ the difference between $\LISs$ and \OPT was at most $\beta n^2$.%\vsh

\remove{

\begin{lemma}
\label{l:interval}
We have $t_*< t^*-\lmin$, and for every $t\in  [t_*,t_*+\lmin]$ the following holds with respect to the number of pending tasks: $\cT_t(\LISs) \le \cT_t(\OPT) + \beta n^2+2n$.%\vsh
\end{lemma}

\begin{proof}
We already discussed the case $t=t_*$. In the interval $(t_*,t_*+\lmin]$, $\OPT$ can notify the repository about at most $n$ performed tasks, as each of $n$ processors may finish at most one task. Consider any $t\in (t_*,t_*+\lmin]$ and let $I$ be fixed to $(t_*,t]$.
We have $\cT_t(\LISs) \le \cT_{t_*}(\LISs)+\cT_I$ and $\cT_t(\OPT) \ge \cT_{t_*}(\OPT) + \cT_I - n$.
It follows that %\vsf
\begin{eqnarray*}
\cT_t(\LISs)
&\le& 
\cT_{t_*}(\LISs)+\cT_I 
\\
&\le&
\left(\cT_{t_*}(\OPT) +  \beta n^2 + n \right) \\ &~~~+&
\left(\cT_t(\OPT) - \cT_{t_*}(\OPT) + n \right)
\\
&\le&
\cT_t(\OPT) + \beta n^2 +2n
\ .%\vsf
\end{eqnarray*}
It also follows that any such $t$ must be smaller than $t^*$, by definition of $t^*$.\vspace{.2em}
\end{proof}

\begin{lemma}
\label{l:OPT-not-better}
Consider a time interval $I$ during which the queue of pending tasks in $\LISs$ is always non-empty. Then the total number of tasks reported by $\OPT$ in the period $I$ is not bigger than the total number of tasks reported by $\LISs$ in the same period plus $n$ (counting possible redundancy).%\vsh
\end{lemma}

\begin{proof}
For each processor in the execution of $\OPT$ in the considered period, exclude the first reported task; this is to eliminate from further analysis tasks that might have been started before time interval $I$. There are at most $n$ such tasks reported by $\OPT$.

It remains to show that the number of remaining tasks reported to the repository by $\OPT$ is not bigger than those reported in the execution of $\LISs$ in the considered period $I$. It follows from the property that $s\ge \frac{\lmax}{\lmin}$. More precisely, it implies that during time period when a processor $p$ performs a task $\tau$ in the execution of $\OPT$,  the same processor reports at least one task to the repository in the execution of $\LISs$. 
This is because performing any task by a processor in the execution of $\OPT$ takes at least time $\lmin$, while performing any task by $\LISs$ takes no more than $\frac{\lmax}{s} \le \lmin$, and also because no active processor in the execution of $\LISs$ is ever idle due to non-emptiness of the pending task queue. Hence we can define a 1-1 function from the considered tasks performed by $\OPT$ (i.e., tasks which are started and reported in time interval $I$) to the family of different tasks reported by $\LISs$ in the period $I$, which completes the proof.~\end{proof}

\begin{lemma}
\label{l:redundancy}
In the interval $(t_*+\lmin,t^*]$ no task is reported twice to the repository by $\LISs$.%\vsh
\end{lemma}

\begin{proof}
The proof is by contradiction. Suppose that task $\tau$ is reported twice in the considered time interval of the execution of $\LISs$. Consider the first two such reports, by processors $p_1$ and $p_2$; w.l.o.g. we may assume that $p_1$ reported $\tau$ at time $t_1$, not later than $p_2$ reported $\tau$ at time $t_2$. Let $\ell_\tau$ denote the cost of task $\tau$.
The considered reports have to occur within time period shorter than the cost of task $\tau$, in particular, shorter than $\lmax/s\le \lmin$; otherwise it would mean that the processor who reported as the second would have started performing this task not earlier than the previous report to the repository, which contradicts the property of the repository that each reported task is immediately removed from the list of pending tasks. It also implies that $p_1\ne p_2$.

From the algorithm description, the list $Pending$ at time $t_1-\ell_\tau/s$ had task $\tau$ at position $p_1 \beta n$, while the list $Pending$ at time $t_2-\ell_\tau/s$ had task $\tau$ at position $p_2 \beta n$.
Note that interval $[t_1-\ell_\tau/s,t_2-\ell_\tau/s]$ is included in $[t_*,t^*]$, and thus, by the definition of $t_*$, at any time of this interval there are at least $\beta n^2$ tasks in the list $Pending$.

There are two cases to consider. First, if $p_1<p_2$, then because new tasks on list $Pending$ are appended at the end of the list, it will never happen that a task with rank $p_1 \beta n$ would increase its rank in time, in particular, not to $p_2 \beta n$.
Second, if $p_1>p_2$, then during time interval $[t_1-\ell_\tau/s,t_2-\ell_\tau/s]$ task $\tau$ has to decrease its rank from $p_1 \beta n$ to $p_2 \beta n$, i.e., by at least $\beta n$ positions. It may happen only if at least $\beta n$ tasks ranked before $\tau$ on the list $Pending$ at time $t_1-\ell_\tau/s$ become reported in the considered time interval.
Since all of them are of cost at least $\lmin$, and the considered time interval has length smaller than $\lmax/s$, each processor may report at most $\frac{\lmax/s}{\lmin/s}\le \beta$ tasks (this is the part of analysis requiring $\beta\ge \frac{\lmax}{\lmin}$). 
Since processor $p_2$ can report at most $\beta-1$ tasks different than $\tau$, the total number of tasks different from $\tau$ reported to the repository is at most $\beta n-1$, and hence it is not possible to reduce the rank of $\tau$ from $p_1 \beta n$ to $p_2 \beta n$ within the considered time interval. This contradicts the assumption that $p_2$ reports $\tau$ to the repository at time $t_2$.~\end{proof}\vspace{.3em}

} %%%%%%%%%%% END OF REMOVE

\added[af]{Then, we have the following property, whose proof is given in Appendix~\ref{app:LIS}.}

\begin{lemma}
\label{l:number-tasks}
$\cT_{t^*}(\LISs) \le \cT_{t^*}(\OPT) + \beta n^2+3n$.
\end{lemma}

\remove{

\begin{proof}
By Lemma~\ref{l:interval} we have that $\cT_{t_*+\lmin}(\LISs) \le \cT_{t_*+\lmin}(\OPT) + \beta n^2+2n$.\\
Let $y$ be the total number of tasks reported by $\LISs$ in $(t_*+\lmin,t^*]$.
By Lemma~\ref{l:OPT-not-better} and definitions $t_*$ and $t^*$, $\OPT$ reports no more that $y+n$ tasks in $(t_*+\lmin,t^*]$. Therefore,%\vsf
\[
\cT_{t^*}(\OPT)
\ge
\cT_{t_*+\lmin}(\OPT)-(y+n)
\ .
\]
%\vsh right after .
By Lemma~\ref{l:redundancy}, in the interval $(t_*+\lmin,t^*]$, no redundant work is reported by $\LISs$. Thus,%\vsh
\[
\cT_{t^*}(\LISs)
\le
\cT_{t_*+\lmin}(\LISs)-y
\ .
\]

Consequently,
\begin{eqnarray*}
\cT_{t^*}(\LISs)
&\le&
\cT_{t_*+\lmin}(\LISs) - y
\\
&\le&
\left(\cT_{t_*+\lmin}(\OPT)+\beta n^2+2n\right) - y
\\
&\le&
\cT_{t^*}(\OPT)+(\beta n^2 +2n) + n
\\
&\le&
\cT_{t^*}(\OPT)+\beta n^2 + 3n
\end{eqnarray*}
as desired.\vspace{.3em}
\end{proof}

} %%%%%%%%%%% END OF REMOVE

\deleted[af]{We are now ready to show the proof of Theorem~\ref{t:LISs-comp1}.\vspace{.2em}}

%\vspace{1ex}
\deleted[af]{\noindent{\em Proof of Theorem ~\ref{t:LISs-comp1}:}}
%\begin{proof}
The competitiveness for the number of pending tasks follows directly from Lemma~\ref{l:number-tasks}: it violates the contradictory assumptions made in the beginning of the analysis.
The result for the pending cost is a direct consequence of the one for pending tasks, as the cost of any pending task in $\LISs$ is at most $\frac{\lmax}{\lmin}$ times bigger than the cost of any pending task in $\OPT$.\hfill\rule{2mm}{2mm}
\end{proof}

\remove{
Theorem~\ref{t:LISs-comp1} leaves open the question whether it would be possible for algorithm $\LISs$ to be $1$-pending-cost competitive,
for $s\geq\frac{\lmax}{\lmin}$. As it turns out, under certain conditions it cannot. In particular, for a speedup $s\in [\frac{\lmax}{\lmin},\frac{2\lmin}{\lmax})$, no 1-cost-competitiveness can be achieved.

%----

Let us consider a general class of $\LIS$ algorithms, called $\KLIS$, that allows to schedule any task from the $K$ oldest tasks, where $K$ may depend on the system parameters ($n$, $\lmax$, $\lmin$, $s$, etc.).
%We call this class $\KLIS$.
Observe that $\LISs$ belongs to the class $\KLIS$. We show the following negative result. \vspace{-.5em} 
% which implies Theorem~\ref{t:bad-lis}.
}

\remove{
\begin{theorem}
\label{t:bad-lis}
Given that $s \geq \frac{\lmax}{\lmin}$. If $\lmax < \frac{2\lmin}{s}$, which means that $\frac{\lmax^2}{\lmin^2} < 2$, then no algorithm of class $\KLIS$ can be $k$-cost-competitive for any $k < \frac{1}{2}(\frac{\lmax}{\lmin}+1)$, even on one processor.\vspace{-.3em}
\end{theorem}
}

%\begin{theorem}
%\label{t:bad-lis}
%Assume that $s \geq \lmax/\lmin$. If $\lmax < \frac{2\lmin}{s}$, then no algorithm of class $\KLIS$ can be $k$-cost-competitive for any $k < \frac{1}{2}(\frac{\lmax}{\lmin}+1)$, even on one processor.%\vspace{.2em}
%\end{theorem}

%%%%%%%%%%%%%%%%%%%%%%%%%%%%%%%%%%%%%%%%%%%%%%%%%%%%%%%%%%%%%%%%%%%%%%%%%%%%%%

%					REMOVED TO BE INCLUDED IN THE APPENDIX

%%%%%%%%%%%%%%%%%%%%%%%%%%%%%%%%%%%%%%%%%%%%%%%%%%%%%%%%%%%%%%%%%%%%%%%%%%%%%%
\remove{

\begin{proof}
We show that there is an execution with one single processor in which the cost competitiveness of $\KLIS$ is no smaller than $\frac{1}{2}(\frac{\lmax}{\lmin}+1)$. In this execution we will compare the
pending cost of $\KLIS$, running with speedup $s$, with the pending cost of an algorithm $\LPT$, that always schedules tasks of length $\lmax$ if possible, and runs with no speedup. The adversary
behaves as follows. It starts the processor at time 0 and crashes it at time $\lmax$, restarts it immediately and crashes it again at time $2\lmax$,
restarts it immediately and crashes it again at time $3\lmax$, and so on. The processor is hence active infinite number of intervals of length $\lmax$, which we call \emph{active intervals}.
The adversary injects at time 0 one task of costs $\lmax$ and $K$ tasks of cost $\lmin$. Then, every time the processor is crashed two new tasks are injected, one $\lmin$-task and one $\lmax$-task, in this order.

Observe that \LPT completes one $\lmax$-task in each interval the processor is active. This is easy to see, since there is always a
pending $\lmax$-task and each interval is just long enough for the task to be completed. $\KLIS$, on the other hand, in the first active interval has no choice but start scheduling a $\lmin$-task. After completing it, it may schedule the only pending $\lmax$-task or another $\lmin$-task. In either case, that task is not going to completed, since $\lmax < \frac{2\lmin}{s}$. Hence, $\KLIS$ could only complete a $\lmin$-task in that interval while \LPT
completed a $\lmax$-task. If $\KLIS$ starts an active interval scheduling a $\lmax$-task, then the task is completed, but no other task can be completed in the interval. Observe then that (1) $\KLIS$ completes at most one task in each active interval, and hence, from the injection pattern, (2) $\KLIS$ gets a new
$\lmax$-task in the set of $K$ oldest tasks at most once every two active intervals.

We can compute now the pending tasks under each algorithm after completing $i$ active intervals. The total set of injected tasks contains
the $K+1$ tasks injected initially and the two tasks injected at the end of each active interval. These add up to $i+1$ tasks of cost $\lmax$ and
$K+i$ tasks of cost $\lmin$. Of these tasks, \LPT has completed and reported $i$ tasks of cost $\lmax$. To compute the tasks completed by
$\KLIS$ we first assume $i$ even, for simplicity, and assume that $\KLIS$ prefer to schedule $\lmax$-tasks if there are among the $K$ oldest tasks. 
(This is in fact the best strategy for $\KLIS$.)
With these assumptions we have that $\KLIS$ has completed and reported $i/2$ tasks of cost $\lmax$ and $i/2$ tasks of cost $\lmin$.

The ratio $\rho_i=\cC_{i \lmax}(\KLIS)/\cC_{i \lmax}(\LPT)$ of pending costs after $i$ active intervals is then
\begin{eqnarray*}
\rho_i & = & \frac{(i+1)\lmax + (K+i) \lmin - \frac{i\lmax}{2} - \frac{i\lmin}{2}}{(i+1)\lmax + (K+i) \lmin - i \lmax} \\
&= &\frac{(\frac{i}{2}+1)\lmax + (K+\frac{i}{2}) \lmin}{\lmax + (K+i) \lmin}.
\end{eqnarray*}

Using calculus, it is easy to prove that this ratio $\rho_i$ tends to $\frac{1}{2}(\frac{\lmax}{\lmin}+1)$ when $i$ tends to infinity. This proves that $\KLIS$ cannot be $k$-cost-competitive for any $k < \frac{1}{2}(\frac{\lmax}{\lmin}+1)$.\vspace{.3em}
\end{proof}

} %%%%%%%%%% END OF REMOVE

\remove{
Since $1 < \frac{1}{2}(\frac{\lmax}{\lmin}+1)$, $\LISs$ cannot be $1$-cost-competitive.
However, if we have $s\geq \frac{\lmax}{\lmin}\cdot \lceil\frac{\lmax}{\lmin}\rceil$ then algorithm $\LISs$ becomes
$1$-pending-cost competitive.\vspace{-.3em}

\begin{theorem}
\label{thm:LIS2}
If $s\geq\frac{\lmax}{\lmin}\cdot \lceil\frac{\lmax}{\lmin}\rceil$, then $\cC_t(\LISs,\cA) \leq \cC_t(\OPT,\cA) + \lmax\beta n^2+(2\lmax+\lmin) n$, for $\beta \ge \frac{\lmax}{\lmin}$, for any time $t$ and adversarial pattern $\cA$.\vspace{-.5em}
\end{theorem}
}

\remove{%%%%%%%%%%  start  remove  %%%%%%%%%%%%%%%%%%

\begin{lemma}
\label{l:redundancy-cost}
In the interval $[t_*+\ell_{max},t^*]$ no task is reported twice to the repository.%\vsh
\end{lemma}

\def\ProofRedundancy{
%\begin{proof}
\noindent
{\bf Proof of Lemma~\ref{l:redundancy}:}
The proof is by contradiction. Suppose that task $\tau$ is reported twice in the considered time interval. Consider the first two such reports, by processors $p_1$ and $p_2$; w.l.o.g. we may assume that $p_1$ reported $\tau$ at time $t_1$, not later than $p_2$ reported $\tau$ at time $t_2$. Let $\ell_\tau$ denote the cost of task $\tau$.
The considered reports have to occur within time period shorter than the cost of task $\tau$, in particular, shorter than $\ell_{max}$; otherwise it would mean that the processor who reported as the second would have started performing this task not earlier than the previous report to the repository, which contradicts the property of the repository that each reported task is immediately removed from the list of pending tasks. It also implies that $p_1\ne p_2$.

From the algorithm description, the list $Pending$ at time $t_1-\ell_\tau$ had task $\tau$ at position $p_1 \beta n$, while the list $Pending$ at time $t_2-\ell_\tau$ had task $\tau$ at position $p_2 \beta n$. Note that by the definition and property of $t_*$ in Lemma~\ref{l:interval}, interval $[t_1-\ell_\tau,t_2-\ell_\tau]$ is included in $[t_*,t^*]$, and thus at any time of this interval there are at least $\beta n^2$ tasks in the list $Pending$.

There are two cases to consider. First, if $p_1<p_2$, then because new tasks on list $Pending$ are appended at the end of the list, it will never happen that task with rank $p_1 \beta n$ would increase its rank to $p_2 \beta n$ in time.
Second, if $p_1>p_2$, then during time interval $[t_1-\ell_\tau,t_2-\ell_\tau]$ task $\tau$ has to decrease its rank from $p_1 \beta n$ to $p_2 \beta n$, i.e., by at least $\beta n$ positions. It may happen only if at least $\beta n$ tasks ranked before $\tau$ on the list $Pending$ at time $t_1-\ell_\tau$ become reported in the considered time interval.
Since all of them are of cost at least $\ell_{min}$, and the considered time interval has length smaller than $\ell_{max}$, each processor may report less than $\frac{\ell_{max}}{\ell_{min}/s}\le \beta$ tasks. 
Hence, it is not possible to reduce the rank of $\tau$ from $p_1 \beta n$ to $p_2 \beta n$ within that time interval. This contradicts the assumption that $p_2$ reports $\tau$ to the repository at time $t_2$.%\vsh\vsf
%\end{proof}
\hfill\rule{2mm}{2mm}\\
}

}%%%%%%%%%%%  end  remove  %%%%%%%%%%%%%%%%%%%

%!TEX root = ./main.tex

\section{Algorithm $\gnBurst$\vspace{-.5em}}%\vspace{-.8em}}
\label{sec:gnBurst}

%\subsection{Case $\frac{\gamma \lmin+\lmax}{\lmax} \leq s < \frac{\lmax}{\lmin}$\vsh}
Observe that, against an adversarial strategy where at first only one $\lmax$-task is injected, and then only $\lmin$-tasks are injected, algorithm $\LISs$ with one processor has unbounded competitiveness when $s < \frac{\lmax}{\lmin}$
(this can be generalized for $n$ processors). This is also the 
case for algorithms using many other scheduling policies, e.g., ones that schedule first the more costly tasks.
This suggests that for $s < \frac{\lmax}{\lmin}$ a scheduling policy that alternates executions of 
lower-cost and higher-cost tasks should be devised.
%We investigate this here.
%Specifically, 
In this section, we show that if the speed-up satisfies $\frac{\gamma \lmin+\lmax}{\lmax} \leq s < \frac{\lmax}{\lmin}$ and the tasks can have only two different costs, $\lmin$ and $\lmax$, then there is an algorithm, call it $\gnBurst$, that achieves 1-pending-task and 1-pending-cost competitiveness in a system with $n$ processors.
\deleted[ez]{Recall that $\gamma = \max\{\lceil \frac{\lmax - s \lmin}{(s-1)\lmin} \rceil, 0\}$. It is easy to verify that if $s < \frac{\lmax}{\lmin}$ then $\gamma = \lceil \frac{\lmax - s \lmin}{(s-1)\lmin} \rceil > 0$. This implies that $s>1$.}
The algorithm's pseudocode follows.

\vspace{.3em}
\begin{footnotesize}
\hrule \vspace{2pt} 
\noindent {\bf Algorithm $\gnBurst$ (for processor $p$)} \vspace{1pt}
\hrule \vspace{2pt}
\noindent {\bf Input:} $\lmin, \lmax, n, s$ \\
\noindent {\bf Calculate} $\gamma \gets \lceil \frac{\lmax - s\lmin}{(s-1)\lmin} \rceil$ \\
\noindent {\bf Repeat}\hfill {\footnotesize\em //Upon awaking or restart, start here}\\
\indent $c \gets 0$; \hfill {\footnotesize\em //Reset the counter %for $\lmin$-tasks
}\\
\indent {\bf Get} from the Repository the set of pending tasks $Pending$;\\
\indent {\bf Create} lists $L_{min}$ and $L_{max}$ of
%containing the 
$\lmin$- and $\lmax$-tasks; \\ 
\indent {\bf Sort} $L_{min}$ and $L_{max}$ according to task arrival;\\
\indent {\bf Case 1:} $|L_{min}| < n^2$ and $|L_{max}| < n^2$ \\
\indent \indent {\bf If} previously performed task was of cost $\lmin$ {\bf then} \\
\indent \indent \indent perform task $(p\cdot n) \mod |L_{max}|$ in $L_{max}$; %\\
%\indent \indent \indent
\: $c \gets 0$; \hfill {\footnotesize\em //Reset the counter %for $\lmin$-tasks
}\\
\indent \indent {\bf else}  perform task $(p\cdot n) \mod |L_{min}|$ in $L_{min}$;
%\\
%\indent \indent \indent
\: $c \gets \min(c+1,\gamma)$;\\
\indent {\bf Case 2:} $|L_{min}| \geq n^2$ and $|L_{max}| < n^2$ \\
\indent \indent perform the task at position $p\cdot n$ in $L_{min}$; %\\
%\indent \indent 
\: $c \gets \min(c+1,\gamma)$; \\
\indent {\bf Case 3:} $|L_{min}| < n^2$ and $|L_{max}| \geq n^2$ \\
\indent \indent perform the task at position $p\cdot n$ in $L_{max}$; %\\
%\indent \indent
\: $c \gets 0$; \hfill {\footnotesize\em //Reset the counter %for $\lmin$-tasks
}\\
\indent {\bf Case 4:} $|L_{min}| \geq n^2$ and $|L_{max}| \geq n^2$ \\
\indent \indent {\bf If} $c$ = $\gamma$ {\bf then} perform task at position $p\cdot n$ in $L_{max}$; %\\
\: $c \gets 0$; \hfill {\footnotesize\em //Reset the counter %for $\lmin$-tasks
}\\
%\indent \indent \indent perform task at position $p\cdot n$ in $L_{max}$; %\\
%\indent \indent \indent \: $c \gets 0$; \hfill {\footnotesize\em //Reset the counter for $\lmin$-tasks}\\
\indent \indent {\bf else} perform task at position $p\cdot n$ in $L_{min}$; %\\
%\indent \indent \indent
\: $c \gets \min(c+1,\gamma)$; \\
\indent {\bf Inform} the Repository of the task performed.
\hrule
\end{footnotesize}
\vspace{1em}

We first overview the main idea behind the algorithm. Each processor groups the set of pending tasks into two sublists, 
$L_{min}$ and $L_{max}$, each corresponding to the tasks of cost $\lmin$ and $\lmax$, respectively, ordered by arrival time. 
Following the same idea behind Algorithm $\LISs$, the algorithm avoids redundancy when ``enough'' tasks are pending. 
Furthermore, the algorithm needs to take into consideration parameter $\gamma$ and the bounds on speed-up $s$.
For example, in the case that there exist enough $\lmin$- and $\lmax$-tasks (more than $n^2$ to be exact) each
processor performs no more than $\gamma$ consecutive $\lmin$-tasks and then performs a $\lmax$-task; this is the time
it takes for the same processor to perform a $\lmax$-task in \OPT. 
To this respect, a counter is used to keep track of the number of consecutive $\lmin$-tasks, which is {\em reset} when a 
$\lmax$-task is performed. Special care needs to be taken for all other cases, e.g., when there are more 
than $n^2$ $\lmax$-tasks pending but less than $\lmin$-tasks, etc.

\remove{	%%% BEGIN REMOVE
We begin the analysis of 
%the algorithm 
\dk{$\gnBurst$} with necessary definitions. Omitted
proofs 
\dk{are}
%can be found 
in Appendix~\ref{a:gnburst}.%\vsh 

\begin{definition}
We define the {\bf\em absolute task execution} of a task $\tau$ to be the interval $[t,t^\prime]$ in which a processor $p$ schedules $\tau$ at time $t$ and reports its completion to the repository at $t^\prime$, without stopping its execution within the interval $[t,t^\prime)$.%\vsh
\end{definition}

\begin{definition}
We say that a scheduling algorithm is of type {\bf\em \GroupLIS$\mathbf{(\beta)}$}, $\beta\in\mathbb{N}$, if all the following hold:%\vspace{-.8em}
\begin{itemize}[leftmargin=5mm]
\item It classifies the pending tasks into classes where each class contains tasks of the same cost.%\vspace{-.8em}
\item It sorts the tasks in each class in increasing order with respect to their arrival time.%\vspace{-.8em}
\item If a class contains at least $\beta\cdot n^2$ pending tasks and a processor $p$ schedules a task from that class, then it schedules
the $(p\cdot \beta n)$th task in the class.%\vspace{-.8em}
\end{itemize} 
\end{definition}

%added by CG
Observe that algorithm  $\gnBurst$ is of type \GroupLIS$(1)$.
%
%added by CG
The next lemmas state useful properties of algorithms of type \GroupLIS.\vspace{.3em} %\vsf
%The first lemma states the conditions under which task redundancy can be avoided in algorithms
%of type \GroupLIS.\vsh

\begin{lemma}
\label{l:redundancy2}
For an algorithm $A$ of type \mbox{\GroupLIS$(\beta)$} and a time interval $I$ in which a list $L$ of tasks of cost $c$ has at least $\beta\cdot n^2$ pending tasks, any two absolute task executions fully contained in $I$, of tasks $\tau_1,\tau_2 \in L$, by processors $p_1$ and $p_2$ respectively, must have $\tau_1 \neq \tau_2$.\vspace{.3em}
\end{lemma}

}	%%% END REMOVE

\remove{	%%% BEGIN REMOVE

%added by CG
Consider the following two interval types, used in the remainder of the section. $\cT^{\max}_t(A)$ and $\cT^{\min}_t(A)$ denote the number of pending tasks at time $t$ with algorithm $A$ of costs $\lmax$ and $\lmin$, respectively.%\vsh
Consider two types of intervals:
\begin{itemize}
%Now we define two intervals to be used by the following lemmas that prove the theorem. 
\item [$I^+$:] any interval such that $\cT^{\max}_t(\gnBurst) \geq n^2$, $\forall t \in I^+$%\vsh
\item [$I^-$:] any interval such that $\cT^{\min}_t(\gnBurst) \geq n^2$, $\forall t \in I^-$%\vsh
\end{itemize}

%\cg{Made some small changes in the lemmas below. Pls check.}
Then, the next two lemmas follow from Lemma~\ref{l:redundancy2} and that algorithm $\gnBurst$ is of type \GroupLIS$(1)$.\vspace{.3em}

\begin{lemma}
\label{l:max-no-red}
All absolute task executions of $\lmax$-tasks in Algorithm $\gnBurst$ within any interval $I^+$ appear exactly once.\vspace{.2em}
\end{lemma}

\begin{lemma}
\label{l:min-no-red}
All absolute task executions of $\lmin$-tasks in Algorithm $\gnBurst$ within any interval $I^-$ appear exactly once.%\vsh
\end{lemma}

The above 
%The analysis of the Algorithm $\gnBurst$ 
leads to the following upper bound on the difference in the number of pending $\lmax$-tasks.\vspace{.3em}
%(The omitted proofs can be found in Appendix~\ref{app:gnBurst}.)

\begin{lemma}
\label{l-ytasks}
The number of pending $\lmax$-tasks in any execution of $\gnBurst$, run with speed-up $s \geq \frac{\gamma \lmin+\lmax}{\lmax}$, is never larger than the number of pending $\lmax$-tasks in the execution of \OPT plus $n^2+2n$.\vspace{.3em}
\end{lemma}

}	%%% END REMOVE

The analysis of $\gnBurst$ proving the following bound for both $\lmax$- and $\lmin$-tasks is in Appendix~\ref{a:gnburst}.\vspace{-.7em}

\begin{theorem}
\label{t:gamma-b-com}
%The total number of pending tasks of any execution of $\gnBurst$, with speed-up $\frac{\gamma \lmin+\lmax}{\lmax} \leq s < \frac{\lmax}{\lmin}$ is never larger %than the number of pending tasks under \OPT plus \af{$2n^2 + (3+\left\lceil \frac{\lmax}{s  \cdot \lmin} \right\rceil)n,$. I.e., 
$\cT_t(\gnBurst,\cA) \leq \cT_t(\OPT,\cA) + 2n^2 + (3+\left\lceil \frac{\lmax}{s  \cdot \lmin} \right\rceil)n,$ for any time $t$ and adversarial pattern $\cA$.\vspace{-.7em} %\vsh
\end{theorem}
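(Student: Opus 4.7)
The plan is to bound the $\lmax$- and $\lmin$-pending task counts separately and then sum. Let $\cT^{\max}_t(A)$ and $\cT^{\min}_t(A)$ denote the pending counts of each type under algorithm $A$. Following the strategy underlying the $\LISs$ analysis (Theorem~\ref{t:LISs-comp1}), I would first prove a \emph{redundancy-avoidance} lemma: whenever an interval $I$ has the property that one of the sublists (say $L_{\max}$) is continuously of size at least $n^2$, no two processors executing tasks from that list within $I$ can redundantly complete the same task. The argument is analogous to Lemma~\ref{l:redundancy}: because $\gnBurst$ spaces task ranks by $n$ in a list of size $\ge n^2$, and because each absolute execution takes time at most $\lmax/s < \lmax$ during which a single processor can report at most $\lceil\lmax/(s\lmin)\rceil$ tasks, the positions chosen by distinct processors cannot collide. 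The same reasoning applied to $L_{\min}$ gives redundancy avoidance for long-task-poor intervals.

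Next, I would bound $\cT^{\max}_t(\gnBurst) \leq \cT^{\max}_t(\OPT) + n^2 + 2n$. The proof is by contradiction, picking $t^*$ that violates this bound and the smallest $t_* \le t^*$ with $\cT^{\max}_t(\gnBurst) > \cT^{\max}_t(\OPT) + n^2$ throughout $[t_*, t^*)$. Inside this window, $|L_{\max}| \geq n^2$ for $\gnBurst$, so by the $\lmax$-redundancy lemma each $\lmax$-task is completed at most once. It then suffices to show that $\gnBurst$ completes at least as many $\lmax$-tasks as $\OPT$ in $[t_*,t^*]$, modulo an additive $O(n)$ for tasks straddling the endpoints. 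This is where the condition $s \geq (\gamma\lmin+\lmax)/\lmax$ together with Property~1 of $\gamma$ is essential: in Case~4 of the algorithm, each processor completes one $\lmax$-task in time $(\gamma\lmin + \lmax)/s \leq (\gamma+1)\lmin$, i.e., no slower than $\OPT$'s rate for $\lmax$-tasks, so across the window $\gnBurst$ matches $\OPT$'s $\lmax$-throughput; in Cases~1--3 the count $\cT^{\max}_t(\gnBurst)$ is already below $n^2$, contradicting the window definition. This handles the first half of the stated bound.

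Then I would bound $\cT^{\min}_t(\gnBurst) \leq \cT^{\min}_t(\OPT) + n^2 + (1+\lceil \lmax/(s\lmin)\rceil)n$ by an analogous window argument. The window is the minimal $[t_*,t^*]$ with $\cT^{\min}_t(\gnBurst) > \cT^{\min}_t(\OPT) + n^2$; inside it, $|L_{\min}|\geq n^2$, so the $\lmin$-redundancy lemma applies. The comparison of completion rates now has to account for the fact that $\gnBurst$ interleaves $\gamma$ $\lmin$-tasks with each $\lmax$-task (Case~4), while $\OPT$ might do only $\lmin$-tasks; Property~1 again gives that in any period of length $(\gamma+1)\lmin$, $\gnBurst$ completes $\gamma$ $\lmin$-tasks per processor, and this ratio, combined with $s \geq (\gamma\lmin+\lmax)/\lmax > 1$, is enough to match $\OPT$'s $\lmin$-throughput up to the $O(n)$ boundary terms. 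The extra $\lceil \lmax/(s\lmin)\rceil\cdot n$ term arises because at the start of the window each processor may be mid-way through a burst of up to $\lceil \lmax/(s\lmin)\rceil$ $\lmin$-tasks whose work does not count toward the window.

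Summing the two bounds yields $\cT_t(\gnBurst) \le \cT_t(\OPT) + 2n^2 + (3 + \lceil \lmax/(s\lmin)\rceil)n$, as required. The main obstacle will be the coupling between the two bounds through Case~1 of the algorithm (where both lists are small): I must verify that when $\gnBurst$ transitions between cases, the counter $c$ and the alternation discipline do not create throughput gaps larger than the $O(n)$ slack absorbed in the additive term. A careful accounting of tasks ``in flight'' at $t_*$ (at most $n$ per type, plus the $\lceil\lmax/(s\lmin)\rceil n$ burst remainder) should close this gap.
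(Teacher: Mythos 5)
Your redundancy lemmas and your bound on the $\lmax$-pending difference ($\cT^{\max}_t(\gnBurst)\le \cT^{\max}_t(\OPT)+n^2+2n$) follow the paper's route (Lemmas~\ref{l:redundancy2} and~\ref{l-ytasks}) and are sound. The gap is in your second step: the separate bound $\cT^{\min}_t(\gnBurst)\le\cT^{\min}_t(\OPT)+n^2+(1+\lceil\lmax/(s\lmin)\rceil)n$ that you plan to prove and then add to the first one is false. Take $n=1$, no crashes, and a single initial injection of $N$ tasks of each cost with $N$ huge. Then $\gnBurst$ stays in Case~4 and cycles through $\gamma$ $\lmin$-tasks followed by one $\lmax$-task, a cycle of length $(\gamma\lmin+\lmax)/s\le(\gamma+1)\lmin$; an offline algorithm that performs only $\lmin$-tasks completes at least $\gamma+1$ of them in that time, and doing so is consistent with minimizing the total number of pending tasks. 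Hence per cycle $\gnBurst$ falls behind by one $\lmin$-task, and with $s$ at the threshold $(\gamma\lmin+\lmax)/\lmax$ the deficit after time $T$ is $T(\lmax-\gamma\lmin)/(\lmin\lmax)$, which is positive and unbounded whenever $\gamma\lmin<\lmax$ (e.g.\ $\lmin=1$, $\lmax=10$ gives $s=1.9$, $\gamma=9$). Your sentence claiming that $\gamma$ $\lmin$-completions per $(\gamma+1)\lmin$ time ``is enough to match $\OPT$'s $\lmin$-throughput'' is exactly the step that fails: the ratio is $\gamma/(\gamma+1)<1$ and the loss accumulates without bound.

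The theorem survives only because the $\lmin$-deficit is compensated by $\lmax$-tasks: in the scenario above the $\lmax$-pending difference decreases by one per cycle, so the \emph{total} stays bounded. This is why the paper does not bound the two task types separately. Its proof works in the window $I^-=(t_*,t^*]$ during which $\cT^{\min}(\gnBurst)\ge n^2$, decomposes each processor's timeline into subintervals delimited by resets of the counter $c$ (or by crashes), and shows via Property~1 of $\gamma$ that in each such subinterval the number of absolute task executions of \emph{either} cost by $\gnBurst$ is at least the number of tasks of either cost reported by $\OPT$; the term $\lceil\lmax/(s\lmin)\rceil n$ is the per-processor loss incurred only in the initial subinterval $(t_*,t_1]$. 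The bound of Lemma~\ref{l-ytasks} is then used once, to bound the total difference at the left endpoint $t_*$ by $2n^2+2n$ (since $\cT^{\min}_{t_*}(\gnBurst)<n^2$ there). You would need to restructure your second step along these lines; the per-type summation cannot be repaired.
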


The difference in the number of $\lmax$-tasks between $\Alg$ and $\OPT$ can be bounded by $n^2+2n$ (see Lemma~\ref{l-ytasks}). This,
%Then, the combination of Lemma~\ref{l-ytasks}, \ez{presented in Appendix~\ref{a:gnburst},} 
and Theorem~\ref{t:gamma-b-com}, yield the following bound on the pending cost of $\gnBurst$, which also implies that it is 1-pending-cost competitive.\vspace{-.5em}

\begin{theorem}
$\cC_t(\gnBurst,\cA) \leq \cC_t(\OPT,\cA) +  \lmax(n^2 + 2n) + \lmin(n^2 + (1+\left\lceil \frac{\lmax}{s  \cdot \lmin} \right\rceil)n)$, for any time $t$ and adversarial pattern $\cA$.\vspace{-.8em}
\end{theorem}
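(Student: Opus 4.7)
The plan is to reduce the pending-cost bound directly to a weighted combination of the two pending-task bounds already established in the section, exploiting the fact that only two task costs are in play. Let $\Delta^{\max} = \cT^{\max}_t(\gnBurst) - \cT^{\max}_t(\OPT)$ and $\Delta^{\min} = \cT^{\min}_t(\gnBurst) - \cT^{\min}_t(\OPT)$, where the superscripts indicate the number of pending tasks of cost $\lmax$ and $\lmin$ respectively. Since tasks come in only two cost values, the pending cost decomposes as $\cC_t(\gnBurst,\cA) - \cC_t(\OPT,\cA) = \lmax \Delta^{\max} + \lmin \Delta^{\min}$, so it suffices to bound this linear combination.

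The key algebraic step I would use is the identity
\[
\lmax \Delta^{\max} + \lmin \Delta^{\min} \;=\; \lmin\left(\Delta^{\max} + \Delta^{\min}\right) + (\lmax - \lmin)\,\Delta^{\max}\ .
\]
This is useful because both factors in front of the differences are nonnegative ($\lmin > 0$ and $\lmax \geq \lmin$), so I can substitute an upper bound on $\Delta^{\max}+\Delta^{\min}$ and an upper bound on $\Delta^{\max}$ and preserve the inequality. For the first term I would plug in Theorem~\ref{t:gamma-b-com}, which yields $\Delta^{\max} + \Delta^{\min} \leq 2n^2 + (3+\lceil \lmax/(s\lmin)\rceil)n$. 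For the second term I would apply Lemma~\ref{l-ytasks}, which gives $\Delta^{\max} \leq n^2 + 2n$.

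Once both bounds are substituted, the remainder is routine simplification: expand, collect the $\lmax$-coefficient (which becomes $n^2+2n$) and the $\lmin$-coefficient (which becomes $2n^2 + (3+\lceil \lmax/(s\lmin)\rceil)n - (n^2+2n) = n^2 + (1+\lceil \lmax/(s\lmin)\rceil)n$), and read off exactly the claimed right-hand side $\lmax(n^2+2n) + \lmin(n^2 + (1+\lceil \lmax/(s\lmin)\rceil)n)$.

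I do not expect any serious obstacle in this proof, since all the combinatorial and timing arguments have already been absorbed into Theorem~\ref{t:gamma-b-com} and Lemma~\ref{l-ytasks}; the only subtle point worth highlighting is that $\Delta^{\max}$ could in principle be negative, but this is harmless: in that case the term $(\lmax-\lmin)\Delta^{\max}$ is nonpositive and can simply be dropped before applying the task-count bound, and the same final inequality still holds because $\lmax \geq \lmin$ forces the ``negative-$\Delta^{\max}$'' regime to be strictly better than the ``$\Delta^{\max}=n^2+2n$'' regime used in the substitution. Noting this edge case explicitly will make the two-line derivation rigorous and close out the 1-pending-cost competitiveness of $\gnBurst$.
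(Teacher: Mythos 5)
Your proposal is correct and follows essentially the same route as the paper, which likewise obtains the pending-cost bound by combining the $\lmax$-task bound of Lemma~\ref{l-ytasks} with the total-task bound of Theorem~\ref{t:gamma-b-com}; your decomposition $\lmax\Delta^{\max}+\lmin\Delta^{\min}=\lmin(\Delta^{\max}+\Delta^{\min})+(\lmax-\lmin)\Delta^{\max}$ simply makes explicit the algebra the paper leaves implicit, and the arithmetic checks out exactly against the claimed right-hand side. The worry about a negative $\Delta^{\max}$ is moot, since $\lmax-\lmin\ge 0$ already makes the substitution $\Delta^{\max}\le n^2+2n$ valid regardless of sign.
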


\remove{

We begin the analysis of the algorithm with necessary definitions. %\vsh 

\begin{definition}
We define the {\bf\em absolute task execution} of a task $\tau$ to be the interval $[t,t^\prime]$ in which a processor $p$ schedules $\tau$ at time $t$ and reports its completion to the repository at $t^\prime$, without stopping its execution within the interval $[t,t^\prime)$.%\vsh
\end{definition}

\begin{definition}
We say that a scheduling algorithm is of type {\bf\em \GroupLIS$\mathbf{(\beta)}$}, $\beta\in\mathbb{N}$, if all the following hold:%\vsh
\begin{itemize}
\item It classifies the pending tasks into classes where each class contains tasks of the same cost.%\vsh
\item It sorts the tasks in each class in increasing order with respect to arrival time.%\vsh
\item If a class contains at least $\beta\cdot n^2$ pending tasks and a processor $p$ schedules a task from that class, then it schedules
the $(p\cdot \beta n)$th task in the class.%\vsh
\end{itemize} 
\end{definition}

%added by CG
Observe that algorithm  $\gnBurst$ is of type \GroupLIS$(1)$.

%added by CG
The next two lemmas state useful properties that algorithms of type \GroupLIS exhibit.
The first lemma states the conditions under which task redundancy can be avoided in algorithms
of type \GroupLIS.%\vsh

\begin{lemma}
\label{l:redundancy2}
For an algorithm $A$ of type \GroupLIS$(\beta)$ and a time interval $I$ where a class $L$ of cost $c$ has at least $\beta\cdot n^2$ pending tasks, any two absolute task executions fully contained in $I$, of tasks $\tau_1,\tau_2 \in L$, by processors $p_1$ and $p_2$ respectively, must have $\tau_1 \neq \tau_2$.%\vsh
\end{lemma}

\begin{proof}
%\cg{I reworded some sentences, but I did not make any essential technical change.} 
Suppose by contradiction, that two processors $p_1$ and $p_2$ schedule the same $c$-task, say $\tau \in L$, to be executed during the interval $I$. Let's assume times $t_1$ and $t_2$, where $t_1,t_2 \in I$ and $t_1 \leq t_2$, to be the times when each of the processors correspondingly, scheduled the task. Since any $c$-task takes time $\frac{c}{s}$ to be completed, then $p_2$ must schedule the task before time $t_1+\frac{c}{s}$, or else it would contradict the property of the repository stating that each reported task is immediately removed from the set of pending tasks.\\
Since algorithm $A$ is of type \GroupLIS$(\beta)$, we have that at time $t_1$, when $p_1$ schedules $\tau$, the task's position on the list $L$ is $p_1\cdot \beta n$. In order for processor $p_2$ to schedule $\tau$ at time $t_2$, it must be at position $p_2\cdot \beta n$. There are two cases we have to consider:\\
(1) If $p_1 < p_2$, then during the interval $[t_1,t_2]$, task $\tau$ must increase its position in the list $L$ from $p_1\cdot \beta n$ to $p_2\cdot \beta n$, i.e., by at least $\beta n$ positions. This can happen only in the case where new tasks are injected and are placed before $\tau$. This, however, is not possible, since new $c$-tasks are appended at the end of the list. (Recall that in algorithms of type \GroupLIS, the tasks in $L$ are 
sorted in an increasing order with respect to arrival times.)\\
(2) If $p_1 > p_2$, then during the interval $[t_1,t_2]$, task $\tau$ must decrease its position in the list by at least $\beta n$ places. This may happen only in the case where at least $\beta n$ tasks ordered before $\tau$ in $L$ at time $t_1$, are completed and reported by time $t_2$. Since all tasks in list $L$ are of the same cost $c$, and the considered interval has length $\frac{c}{s}$, each processor may complete at most one task during that time. Hence, at most $n-1$ $c$-tasks may be completed, which are not enough to change $\tau$'s position from $p_1\cdot \beta n$ to $p_2\cdot \beta n$ (even when $\beta=1$) by time $t_2$.\\
The two cases above contradict the initial assumption and hence the claim of the lemma follows.%\vsf  
\end{proof}

The next lemma argues on the absolute task executions of tasks in algorithms of type \GroupLIS.
%\cg{Originally the lemma was not mentioning explicitly algorithms of type \GroupLIS, so to avoid any confusion
%I made the reference explicit. Also, I think we needed to define what we mean an algorithm reporting tasks, so I propose
%the following definition. Please let me know if you agree.} 
We say that an algorithm {\em reports} $x$ tasks as completed in a given interval, if, within that interval, a total of x tasks are 
reported as completed to the repository by the processors running the algorithm.%\vsh     

\begin{lemma}
\label{l:nfirst}
Let $S$ be a set of tasks reported as completed by an algorithm $A$ of type \GroupLIS$(\beta)$ in a time interval $I$. Then at least $|S| - n$ such tasks have their absolute task execution fully contained in $I$.
%\cg{I believe we need to state that $|S|>n$ otherwise the lemma above allows for a negative number of tasks.}\vsh
\end{lemma}

\begin{proof}
A task $\tau$ which is reported in $I$ by processor $p$ and its absolute task execution $\alpha \not\subseteq I$, has $\alpha = [t,t^\prime]$ where $t \not\in I$ and $t^\prime \in I$. Since $p$ does not stop executing $\tau$ in $[t,t^\prime)$, only one such task may occur for $p$. Then, overall there can not be more than $n$ such reports and the lemma follows.%\vsh
\end{proof}

%added by CG
Consider the following two interval types, used in the remainder of the section. $\cT^{\max}_t(A)$ and $\cT^{\min}_t(A)$ denote the number of pending tasks at time $t$ with algorithm $A$ of costs $\lmax$ and $\lmin$, respectively.%\vsh
\begin{itemize}
%Now we define two intervals to be used by the following lemmas that prove the theorem. 
\item [] $I^+$:~any interval such that $\cT^{\max}_t(\gnBurst) \geq n^2$, $\forall t \in I^+$%\vsh
\item [] $I^-$:~any interval such that $\cT^{\min}_t(\gnBurst) \geq n^2$, $\forall t \in I^-$%\vsh
\end{itemize}

%\cg{Made some small changes in the lemmas below. Pls check.}
Then, the next two lemmas follow from Lemma~\ref{l:redundancy2} and that algorithm $\gnBurst$ is of type \GroupLIS$(1)$.%\vsh

\begin{lemma}
\label{l:max-no-red}
All absolute task executions of $\lmax$-tasks in Algorithm $\gnBurst$ within interval $I^+$ appear exactly once.%\vsh
\end{lemma}

\begin{lemma}
\label{l:min-no-red}
All absolute task executions of $\lmin$-tasks in Algorithm $\gnBurst$ within interval $I^-$ appear exactly once.%\vsh
\end{lemma}

%added by cg
The following lemma argues that algorithm $\gnBurst$ is 1-pending-task competitive with respect to $\lmax$-tasks.%\vsh

\begin{lemma}
\label{l-ytasks}
The number of pending $\lmax$-tasks in any execution of $\gnBurst$, run with speed-up $s \geq \frac{\gamma \lmin+\lmax}{\lmax}$, is never larger than the number of pending $\lmax$-tasks in the execution of \OPT plus $n^2+2n$.%\vsh
\end{lemma}

\begin{proof}
Consider, for contradiction, interval $I^+ = (t_*,t^*]$ as it was defined above, $t^*$ being the first time when 
$\cT^{\max}_{t^*}(\gnBurst) > \cT^{\max}_{t^*}(\OPT) + n^2 + 2n$, and $t_*$ being the largest time before $t^*$ such that $\cT^{\max}_{t_*}(\gnBurst) < n^2$.%\vspace{.2em}

\emph{Claim:} The number of absolute task executions of $\lmax$-tasks $\alpha \subset I^+$, by \OPT, is no bigger than the number of $\lmax$-task reports by $\gnBurst$ in interval $I^+$. %\vspace{.2em}

Since $s \geq \frac{\gamma \lmin + \lmax}{\lmax}$, while processor $p$ in \OPT is running a $\lmax$-task, the same processor in $\gnBurst$ has time to execute $\gamma \lmin + \lmax$ tasks. But, by definition, within the interval $I^+$ there are at least $n^2$ $\lmax$-task pending at all times, which implies the execution of Case 3 or Case 4 of the $\gnBurst$ algorithm. This means that no processor may run $\gamma + 1$ consecutive $\lmin$-tasks, as a $\lmax$-task is guaranteed to be executed by one of the cases. So, the number of absolute task executions of $\lmax$-tasks by \OPT in the interval $I^+$ is no bigger than the number of $\lmax$-task reports by $\gnBurst$ in the same interval. This completes the proof of the claim.%\vspace{.2em}

Now let $\kappa$ be the number of $\lmax$-tasks reported by \OPT. From Lemma \ref{l:nfirst}, at least $\kappa - n$ such tasks have absolute task executions in interval $I^+$. From the above claim, for every absolute task execution of $\lmax$-tasks in the interval $I^+$ by \OPT, there is at least a completion of a $\lmax$-task by $\gnBurst$ which gives a 1-1 correspondence, so $\gnBurst$ has at least $\kappa - n$ reported $\lmax$-tasks in $I^+$.
Also, from Lemma \ref{l:nfirst}, we may conclude that there are at least $\kappa - 2n$ absolute task executions of $\lmax$-tasks in the interval. 
Then from Lemma \ref{l:redundancy2}, $\gnBurst$ reports at least $\kappa - 2n$ different tasks, while \OPT reports at most $\kappa$.

Now let $S_{I^+}$ be the set of $\lmax$-tasks injected during the interval $I^+$. Then $\cT^{\max}|_{t^*}(\gnBurst) < n^2 + |S_{I^+}| - (\kappa - 2n)$, and since $\cT^{\max}_{t^*}(\OPT) \geq |S_{I^+}| - \kappa$ we have a contradiction, which completes the proof.%\vsh
\end{proof}

In order to show the pending-task competitiveness of $\gnBurst$, it remains to check what happens with the $\lmin$-tasks. This investigation is part of the proof of the following theorem. 

\begin{theorem}
\label{t:gamma-b-com}
%The total number of pending tasks of any execution of $\gnBurst$, with speed-up $\frac{\gamma \lmin+\lmax}{\lmax} \leq s < \frac{\lmax}{\lmin}$ is never larger %than the number of pending tasks under \OPT plus \af{$2n^2 + (3+\left\lceil \frac{\lmax}{s  \cdot \lmin} \right\rceil)n,$. I.e., 
$\cT_t(\gnBurst) \leq \cT_t(\OPT) + 2n^2 + (3+\left\lceil \frac{\lmax}{s  \cdot \lmin} \right\rceil)n,$ for any time $t$. %\vsh
\end{theorem}

\begin{proof}
%\cg{We begin this proof as a contradiction, but I get the feeling that after this first paragraph it turns out to be
%a proof by construction. Please check and comment.}
\sloppy{Consider, for contradiction, the interval $I^- = (t_*,t^*]$ as defined above, $t^*$ being the first time when 
$\cT_{t^*}(\gnBurst) > \cT_{t^*}(\OPT) + 2n^2 + (3+\left\lceil \frac{\lmax}{s  \cdot \lmin} \right\rceil)n$
and $t_*$ being the largest time before $t^*$ such that $\cT^{\max}|_{t_*}(\gnBurst) < n^2$. Notice that $t_*$ is well defined for Lemma \ref{l-ytasks},
i.e., such time $t_*$ exists and it is smaller than $t^*$.}
%\ez{explanation missing}
%\cg{as otherwise, it would not have been possible to have the claimed gap between the two times, $t_*$ and $t^*$.}

We consider each processor individually and break the interval $I^-$ into subintervals $[t,t']$ such that times $t$ and $t'$ are instances in which the counter $c$ is reset to 0; this can be either due to a simple reset in the algorithm or due to a crash and restart of a processor. 
More concretely, the boundaries of such subintervals are as follows. An interval can start either when a reset of the counter occurs or when the processor (re)starts.
On its side, an interval can finish due to either a reset of the counter or a processor crash.
Hence, these subintervals can be grouped into two types, depending on how they end: Type (a) which includes the ones that end by a crash and Type (b) which includes the ones that end by a reset from the algorithm.
%\af{
Note that in all cases $\gnBurst$ starts the subinterval scheduling a new task to the processor at time $t$, and that the processor is never idle in the interval. Hence, all tasks reported by $\gnBurst$ as completed have their absolute task execution completely into the subinterval.
%} cg: ok
Our goal is to show that the number of absolute task executions in each such subinterval with $\gnBurst$ is no less than the number of reported tasks by \OPT.
%\af{I think this can be removed: There are four possible subintervals: (1) a reset due to a Restart -- a reset due to a Crash (2) Reset -- Crash (3) %Restart -- Reset and (4) Reset -- Reset. However,}
%CG: I agree

First, consider a subinterval $[t,t']$ of Type (b), that is, such that the counter $c$ is set to 0 by the algorithm (in a line $c=0$) at time $t'$. 
This may happen in algorithm $\gnBurst$ in Cases 1, 3 or 4. However, observe that the counter cannot be reset in Cases 1 and 3
at time $t' \in I^-$ since,
by definition, there are at least $n^2$ $\lmin$-tasks pending during the whole interval $I^-$. Case 4 implies that there are also at least $n^2$ $\lmax$-tasks pending in $\gnBurst$. This means that in the interval $[t,t']$ there have been 
$\kappa$ $\lmin$ and one $\lmax$ absolute task executions, $\kappa \geq \gamma$. Then, the subinterval $[t,t']$ has length $\frac{\lmax + \kappa\lmin}{s}$, and $\OPT$ can report at most $\kappa+1$ task completions during the subinterval. This latter property follows from 
\begin{eqnarray*}
\frac{\lmax + \kappa\lmin}{s} = \frac{\lmax + \gamma\lmin}{s} + \frac{(\kappa - \gamma)\lmin}{s}
&\leq&
(\gamma + 1)\lmin + (\kappa - \gamma)\lmin
\\
&\leq&
(\kappa + 1)\lmin.
\
\end{eqnarray*}

Note that the first inequality follows from the definition of $\gamma$ (see Section~\ref{s:non-comp}) and the fact that $s>1$.
Now consider a subinterval $[t,t']$ of Type (a) which means that at time $t'$ there was a crash. This means that no $\lmax$-task was completed in the subinterval, but we may assume the complete execution of $\kappa$ $\lmin$-tasks in $\gnBurst$. 
We show now that $\OPT$ cannot report more than
$\kappa$ task completions.
In the case where $\kappa \geq \gamma$, then the length of the subinterval $[t,t']$ satisfies
\begin{eqnarray*}
t' - t
<
\frac{\kappa\lmin + \lmax}{s}
&\leq&
(\kappa + 1)\lmin.
\
\end{eqnarray*}
In the case where $\kappa < \gamma$ then the length of the subinterval $[t,t']$ satisfies%\vsh
\begin{eqnarray*}
t' - t
<
\frac{(\kappa + 1)\lmin}{s}
&\leq&
(\kappa + 1)\lmin.%\vsh
\
\end{eqnarray*}
Then in none of the two cases $\OPT$ can report more than $\kappa$ tasks in subinterval $[t,t']$.

After splitting $I^-$ into the above subintervals, the whole interval is of the form $(t_*,t_1][t_1,t_2]\dots[t_m,t^*]$. All the intervals $[t_i,t_{i+1}]$ where $t=1,2,\dots,m$, are included in the subinterval types already analysed. There are therefore two remaining subintervals to consider now.
%\af{
The analysis of subinterval $[t_m,t^*]$ is verbatim to that of an interval of Type (a). Hence, the number of absolute task executions in that subinterval with $\gnBurst$ is no less than the number of reported tasks by \OPT.
%} CG: agreed.

%\af{
Let us now consider the subinterval $(t_*,t_1]$. Assume with $\gnBurst$ there are 
$\kappa$ absolute task executions fully contained in the subinterval. Also observe that at most one $\lmax$-task
can be reported in the subinterval (since then the counter is reset and the subinterval ends). Then, the
length of the subinterval is bounded as%\vsh 
$$
t_1 - t_* < \frac{(\kappa + 1)\lmin + \lmax}{s}
$$
(assuming the worst case that a $\lmin$-task was just started at $t_*$ and that the processor crashed at $t_1$ when a $\lmax$-task was about to finish).
The number of tasks that $\OPT$ can report in the subinterval is hence bounded by%\vsh 
$$
\left\lceil \frac{(\kappa + 1)\lmin + \lmax}{s \lmin} \right\rceil < \kappa + 1 + \left\lceil \frac{\lmax}{s \cdot \lmin} \right\rceil.
$$
%\vsh right after .
%} CG: OK

This means that for every processor, the number of reported tasks by \OPT might be at most the number of absolute task executions by $\gnBurst$ fully contained in $I^-$ plus $1 + \left\lceil \frac{\lmax}{s  \cdot \lmin} \right\rceil$.
From this and Lemma~\ref{l:min-no-red}, it follows that in interval $I^-$ the difference in the number of pending tasks between
$\gnBurst$ and $\OPT$ has grown by at most $(1+\left\lceil \frac{\lmax}{s \cdot \lmin} \right\rceil)n$.
Observe that at time $t_*$ the difference between the number of pending tasks satisfied%\vsh
%\af{
$$
\cT_{t_*}(\gnBurst) - \cT_{t_*}(\OPT) < 2n^2 + 2n,
$$
%\vsh right after ,
This follows from Lemma~\ref{l-ytasks}, which bounds the difference in the number of $\lmax$-tasks \af{to $n^2 + 2n$,} and the assumption that $\cT^{\max}|_{t_*}(\gnBurst) < n^2$. Then, it follows that%\vsh 
$$
\cT_{t^*}(\gnBurst) - \cT_{t_*}(\OPT) < 2n^2 + 2n + (1+\left\lceil \frac{\lmax}{s  \cdot \lmin} \right\rceil)n = n^2 + (3+\left\lceil \frac{\lmax}{s  \cdot \lmin} \right\rceil)n,
$$
%\vsh right after ,
which is a contradiction. Hence,
$\cT_t(\gnBurst) \leq \cT_t(\OPT) + 2n^2 + (3+\left\lceil \frac{\lmax}{s  \cdot \lmin} \right\rceil)n, \forall t$, as claimed.%\vsf
\end{proof}

Finally, we argue about the pending-cost competitiveness of Algorithm $\gnBurst$.

%\cg{Please check the statement and the argument and see if you agree.}
%\af{Please revise the new version I have produced below.}
%\cg{Based on Lemma 5 and Theorem 1, I believe the correct result is the one stated below yours.} 

\begin{theorem}
$\cC_t(\gnBurst) \leq \cC_t(\OPT) +  \lmax(n^2 + 2n) + \lmin(n^2 + (1+\left\lceil \frac{\lmax}{s  \cdot \lmin} \right\rceil)n)$.
\end{theorem}

\begin{proof}
As it was shown in Lemma~\ref{l-ytasks}, algorithm $\gnBurst$ can be behind \OPT by at most
$n^2 + 2n$ $\lmax$-tasks.
%\af{
This bound, combined with the bound on the difference between the total number of tasks obtained in Theorem~\ref{t:gamma-b-com}, gives the claimed pending-cost competitiveness.
%}
%\cg{Furthermore, in the proof of Theorem~\ref{t:gamma-b-com} it was shown
%that when also considering $\lmin$-tasks, for every processor, the
%difference in pending tasks can be at most $1+\left\lceil \frac{\lmax}{s  \cdot \lmin} \right\rceil$.
%Then, the claimed pending-cost competitiveness follows.}
%number of reported tasks
%by OPT might be at most the number of completed tasks by $\gnBurst$ plus 2. Then, the claimed
%pending-cost competitiveness follows as the worst case.
\end{proof}
  
} % remove

%!TEX root = ./main.tex

\section{Algorithm $\LAF$\vspace{-.5em}}
\label{sec:LAF}
In the case of only two different costs, we can obtain a competitive solution for speedup that
matches the lower bound from Theorem~\ref{t:non-competitive}. More precisely, for given two 
different cost values, $\lmin$ and $\lmax$,
we can compute the minimum speedup $s^*$ satisfying condition (b) from Theorem~\ref{t:non-competitive} 
for these two costs,
and choose $\LISs$ with speedup $\lmax/\lmin$ in case $\lmax/\lmin\le s^*$ and $\gnBurst$ 
%\marginpar{\ez{footnote?} \af{done}}
with speedup $s^*$ otherwise\footnote{Note that $s^*$ is upper bounded by $2$, as explained in Appendix~\ref{app:conditions}.}.
However, in the case of more than two different task costs we cannot use $\gnBurst$, and so far we could only
rely on $\LISs$ with speedup $\lmax/\lmin$, which can be large.

%\marginpar{\af{Rephrased}}
%\af{
We would like to design a ``substitute'' for algorithm $\gnBurst$, working for any bounded
number of different task costs, which is competitive for some fixed small speedup.  
(Note that $s\geq 2$ is enough to guarantee that condition (b) does not hold.)
This algorithm would be used when $\lmax/\lmin$ is large.
In this section we design such an algorithm, that works for any bounded
number of different task costs, and is competitive for speedup
$s\ge 7/2$.
This algorithm, together with algorithm $\LISs$, guarantee competitiveness
for speedup 
$s\ge \min\{\frac{\lmax}{\lmin},7/2\}$.
%}
In more detail, one could apply $\LISs$ with speedup $\frac{\lmax}{\lmin}$ when
$\frac{\lmax}{\lmin}\le 7/2$ 
and the new algorithm with speedup
$7/2$
otherwise.

We call the new algorithm {\em Largest\_Amortized\_Fit} or $\LAF$ for short.
It is parametrized by $\beta\ge \lmax/\lmin$. This algorithm is more ``geared'' towards pending cost efficiency.
In particular, each processor keeps the variable $\total$, storing the total cost of tasks reported by processor $p$, since the last restart
(recall that upon a restart processors have no recollection of the past).
%the knowledge from before the last restart is not available to the processors).
For every possible task cost, pending tasks of that cost are sorted using the Longest-in-System (LIS) policy.
Each processor schedules the largest cost task which is not bigger than $\total$ and is such, that 
the list of pending tasks of the same cost (as the one selected) has at least $\beta n^2$ elements,
for $\beta\ge \lmax/\lmin$. If there is no such task then the processor schedules an arbitrary pending one.

As we prove in Appendix~\ref{app:LAF}, in order for the algorithm to be competitive,
%feasibly implementable, 
the number of different costs of injected tasks must be finite in the  range $[\lmin,\lmax]$.
Otherwise, the number of tasks of the same cost might never be larger than $\beta n^2$, which is
necessary to assure redundancy avoidance. Whenever this redundancy avoidance is possible, the
algorithm behaves in a conservative way in the sense that it schedules a large task, but not larger than
the total cost already completed. This implies that in every life period of a processor (the continuous period between a restart
and a crash of the processor) only a constant fraction of this period could be wasted (wrt the total task cost covered by \OPT
in the same period). Based on this observation, a non-trivial argument %(see full proof in Appendix~\ref{app:LAF}) 
shows  that a constant speedup suffices for obtaining 1-pending-cost competitiveness.\vspace{-.5em}

 %\vspace{-.5em}
%of tasks needs to be discrete and
%bounded. Specifically, we assume that tasks of up to $k$ different costs may be injected in the range .
%However, it is not necessary for the algorithm to know a priori the different $k$ costs; in fact it is not even necessary to know %$k$ --- it is enough that $k$ is finite in the whole execution. 
%In \af{Appendix~\ref{sec:inac} we discuss a possible way of making the algorithm work for unbounded number of different task costs.}
%
%\marginpar{\af{Revised. Stuff moved to Appendix~\ref{app:LAF}.}}
%Note that algorithm $\LAF$ is in the class of $\GroupLIS(\beta)$ algorithms, for $\beta\geq \frac{\lmax}{\lmin}$.
%Therefore Lemma~\ref{l:redundancy2} applies, and together with the algorithm specification it guarantees no redundancy in 
%absolute tasks executions in case of one of the lists is kept of size at least $\beta n^2$.  
%Then we obtain the following result for which the proof can be found in Appendix~\ref{app:LAF}.

\begin{theorem}
\label{thm:LAF}
Algorithm $\LAF$ is 1-pending-cost competitive, and thus $\frac{\lmax}{\lmin}$-pending-task competitive, for speedup 
$s\ge 7/2$, 
provided the number of different costs of tasks in the execution is finite.
%\ez{Specifically, $\cC^*_{t^*}(\LAF,\cA)|_{\ge x} - \cC^*_{t^*}(\OPT,\cA)|_{\ge x} 
%\le 
%\cC^*_{t_*}(\LAF,\cA)|_{\ge x} - \cC^*_{t_*}(\OPT,\cA)|_{\ge x} +n\lmax+3n\lmax/s
%\le 
%\lmax k\beta n^2 + n\lmax+3n\lmax/s,$.}
\vspace{-.5em}
%\vspace{-1em}
%$s\ge \frac{3/2+\sqrt{41/4}}{2}$.
\end{theorem}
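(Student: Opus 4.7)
The plan is to prove the theorem by contradiction, using an argument structurally similar to the analysis of $\LISs$ and $\gnBurst$, but with the productivity-per-life-period argument specific to \LAF's \emph{total}-variable rule. Assume for contradiction that \LAF is not $1$-pending-cost competitive at speedup $s\ge 7/2$. Then for arbitrarily large $\Delta$ there is an adversarial pattern $\cA$ and a time $t^*$ with $\cC_{t^*}(\LAF,\cA) > \cC_{t^*}(\OPT,\cA) + \Delta$. Let $K$ be the (finite) number of distinct task costs present in $\cA$, and choose $t_*\le t^*$ to be the latest time before $t^*$ at which the gap is below a suitable threshold $\Delta_0=\Theta(K\beta n^2 \lmax)$, so that throughout $I=[t_*,t^*]$ the total pending cost at \LAF is huge. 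By pigeonhole over the $K$ different costs, at every $t\in I$ the list at \LAF for at least one cost has $\ge\beta n^2$ pending tasks, so for most of $I$ processors use the rule-branch of the algorithm, not the arbitrary branch.

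The first step is a \emph{redundancy-avoidance} lemma for \LAF, analogous to Lemma~\ref{l:redundancy} for $\LISs$: when the rule branch is applied to a list with $\ge\beta n^2$ entries, the condition $\beta\ge \lmax/\lmin$ and the per-processor rank selection within a sorted LIS list prevent two processors from reporting the same task in $I$ (up to an additive $O(n^2)$ overhead at the boundary). The second step is the core \emph{productivity-per-life-period} bound. Fix a processor $p$ and a life period $[t_a,t_b]\subseteq I$ of length $L$. Inside the life period, \emph{total} is exactly the cost already reported by $p$; since the rule branch schedules only tasks whose cost is at most \emph{total}, any task \LAF starts but whose execution is cut off by a crash has cost at most the current \emph{total}, that is, at most the cost $C$ already completed in the life period. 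Therefore the wasted time $W$ in that life period satisfies $W \le C/s$, giving
\[
L \;=\; C/s + W \;\le\; 2C/s + O(\lmax/s),
\]
where the additive $O(\lmax/s)$ bounds the only two genuinely wasteful events per life period: the initial arbitrary-branch task (which runs with $\mathit{total}=0$) and any further arbitrary-branch invocation, whose frequency is bounded because once $\mathit{total}\ge\lmax$ and pending lists stay large, every subsequent scheduling uses the rule branch. Rearranging yields $C\ge \frac{s}{2}L - O(\lmax)$.

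Summing the life-period inequality over all processors and life periods contained in $I$, and letting $L_I = \sum L_i$ be the total alive time in $I$, the total cost reported by \LAF satisfies $C_{\LAF,I} \ge \frac{s}{2} L_I - O(\lmax n \cdot N_I)$, where $N_I$ is the number of life periods, while $C_{\OPT,I}\le L_I$ since \OPT runs without speedup. For $s\ge 7/2$ we get $C_{\LAF,I} \ge \tfrac{7}{4} C_{\OPT,I} - O(\lmax nN_I)$, so in fact \LAF completes strictly more cost than \OPT in $I$ once $L_I$ dominates the constant. Combining with the standard identity $\cC_{t^*}(\cdot)-\cC_{t_*}(\cdot) = C_{inj,I} - C_{\cdot,I}$ yields that the gap $\cC_t(\LAF)-\cC_t(\OPT)$ cannot in fact grow by more than $O(K\beta n^2\lmax)$ over $I$, contradicting the choice of $t^*$ and $t_*$. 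The $\lmax/\lmin$-pending-task competitiveness follows immediately from the cost bound, since $\cT_t(\LAF)\le \cC_t(\LAF)/\lmin$ and $\cC_t(\OPT)\le \lmax\,\cT_t(\OPT)$.

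The main obstacle in turning this plan into a full proof is rigorously controlling the arbitrary-branch invocations: one has to show that, once $t\in I$ and $\mathit{total}$ has grown past $\lmax$, the combination of "finite number of distinct costs" and "large pending cost" forces at least one list of cost $\le\mathit{total}$ to contain $\ge\beta n^2$ tasks, so the arbitrary branch is taken $O(1)$ times per life period. The extra slack between the naive threshold $s\ge 2$ (needed for $L\le 2C/s$ to beat \OPT's $L$) and the stated $s\ge 7/2$ is precisely the room absorbed by these bounded arbitrary-branch invocations, by the redundancy-avoidance overhead from the boundary of $I$, and by the initial life-period task whose cost is not yet bounded by \emph{total}.
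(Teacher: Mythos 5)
Your overall architecture (contradiction over an interval $[t_*,t^*]$, GroupLIS-style redundancy avoidance, and a per-life-period productivity bound driven by the rule that a scheduled task never exceeds $\mathit{total}$, so the cut-off tail wastes at most as much time as was already spent productively) matches the paper's proof, and you correctly locate where the constant $7/2$ comes from. But there is a genuine gap in how you aggregate the per-life-period bounds. You sum $L \le 2C/s + O(\lmax/s)$ over all life periods to get $C_{\LAF,I} \ge \frac{s}{2}L_I - O(\lmax n N_I)$ and then treat $O(\lmax n N_I)$ as a constant to be dominated. It is not a constant: the adversary controls $N_I$, and can make each life period just long enough (length slightly above $\lmin$) for $\OPT$ to complete one $\lmin$-task while your bound for that period is vacuous --- the $O(\lmax)$ slack, coming from the initial arbitrary-branch task of cost up to $\lmax$, can exceed the entire period when $\lmax/\lmin$ is large. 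Repeating such periods, the pending-cost gap grows without bound and your summed inequality never detects it. The issue you flag at the end (bounding arbitrary-branch invocations to $O(1)$ per life period) does not resolve this: $O(1)$ wasted tasks of cost up to $\lmax$ per life period, times unboundedly many life periods, is still an unbounded additive loss.

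The paper closes exactly this hole by two devices you omit. First, it stratifies the pending cost by a threshold $x$, proving for \emph{every} cost $x$ a bound on $\cC^*_t(\cdot)|_{\ge x}$, the pending cost restricted to tasks of cost at least $x$ whose lists hold at least $\beta n^2$ tasks (the cost on short lists is separately bounded by $\lmax k\beta n^2$ using finiteness of the number $k$ of distinct costs --- this is where that hypothesis enters). Second, it makes the comparison strictly \emph{within} each life period: if $\OPT$ reports $z>0$ of cost-$\ge x$ work in a life period of length $\ell$, then necessarily $\ell\ge x$; the ramp-up waste before $\mathit{total}$ reaches $x$ is under $2x/s\le 2\ell/s$, the cut-off tail is under $(\ell/2)/s$, and the monotonicity "once a cost-$\ge x$ task is scheduled, all subsequent ones have cost $\ge x$" gives $\LAF$ at least $s\ell-(\ell/2+2x)\ge \ell(s-5/2)\ge \ell\ge z$ of cost-$\ge x$ work for $s\ge 7/2$. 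The additive losses are thus absorbed \emph{multiplicatively} inside each productive life period via $\ell\ge x$, leaving residual additive terms only for the at most $n$ life periods straddling $t_*$. To repair your argument you need this per-period, per-threshold comparison (or an equivalent device tying the per-period waste to the length of periods in which $\OPT$ actually accomplishes something), not a global sum over life periods.
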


%\ez{HERE WAS THE LAF(k) PART}
\remove{
We can extend Algorithm $\LAF$ to make it work for \emph{any} 
number of task costs
in $[\lmin,\lmax]$, and not just for a bounded number $k$ of different costs. 
We call the new version $\LAF(k)$, where $k$ is a positive integer parameter.
The idea is to discretize the range $[\lmin,\lmax]$ of available task costs into $k$ intervals,
and make  algorithm $\LAF$ treat all the tasks with the costs in the same interval 
as if they had the same cost (e.g., equal to the middle value in this interval).
This incurs some controlled inaccuracy in the pending cost, but we will argue that
this results in a small overhead in the pending-cost competitiveness.
Let us choose $k$ reference cost values $c_1$ to $c_k$. We can define
$c_i=\lmin + \frac{\lmax-\lmin}{k}(i-\frac{1}{2})$. Subintervals 
$[c_i-\frac{\lmax-\lmin}{2k},c_i+\frac{\lmax-\lmin}{2k})$, for $1\le i<k$,
together with the subinterval 
$[c_k-\frac{\lmax-\lmin}{2k},c_k+\frac{\lmax-\lmin}{2k}]$,
constitute a partition of 
the interval $[\lmin,\lmax]$ into $k$ subintervals of equal width (for simplicity)
$\frac{\lmax-\lmin}{k}$,
and each $c_i$ is the cost in the middle of the $i$th subinterval; 
$I_i$ denotes the $i$th subinterval.
Algorithm $\LAF(k)$ mimics the original alg. $\LAF$, with the exception that
any task $\tau$ of cost $c_\tau$ in some interval $I_i$ is processed as if it had cost $c_i$.

Now we compute the speedup under which algorithm $\LAF(k)$ is competitive
in terms of the pending cost and tasks.
Let $\zeta=\zeta(k)$ be chosen so that for each $i$ from $1$ to $k$, the subinterval $I_i$ is fully contained 
in $[(1-\zeta)c_i, (1+\zeta)c_i]$.
This value of $\zeta$ can be easily computed;
mainly, by solving the requirements $(1-\zeta)c_i\le c_i-\frac{\lmin+\lmax}{2k}$ and 
$c_i+\frac{\lmax-\lmin}{2k}\le (1+\zeta)c_i$,
which in both cases gives $\zeta\ge \frac{\lmin+\lmax}{(2k+1)\lmin+\lmax}$. 
One could fix $\zeta$ to be equal to $\frac{\lmin+\lmax}{(2k+1)\lmin+\lmax}$,
and thus to be arbitrarily small by setting an appropriately large value of $k$ 
(depending also on $\lmin$ and $\lmax$).
Observe that $c_{\tau} \in I_i \subseteq [(1-\zeta)c_i, (1+\zeta)c_i]$. 
Consequently, 
%Then 
such modified version of $\LAF$, algorithm $\LAF(k)$, 
still has 1-pending-cost competitiveness for any task cost domain,
and thus $\frac{\lmax}{\lmin}$-pending-task competitiveness, under 
$s \ge \frac{7}{2} \cdot \frac{(1+\zeta)}{(1-\zeta)}$,
where the additional overhead on $s$ comes from the necessity of slightly faster execution
of tasks with costs located at the ends of intervals $I_i$. 
Note that $s$ could be made arbitrarily close to $7/2$ 
by taking a sufficiently large $k$. \vspace{-.5em}
%$s > \frac{7}{2} \cdot \frac{(1+\zeta)}{(1-\zeta)}$.

\begin{theorem}
\label{thm:LAFk}
For every speedup $s>7/2$ there is a positive integer $k$ such that algorithm 
$\LAF(k)$ is 1-pending-cost competitive, 
and thus $\frac{\lmax}{\lmin}$-pending-task competitive, under no
restriction on the number of different task costs.\vspace{-.8em}
\end{theorem}
}

\section{Conclusions\vspace{-.5em}}
\label{sec:conclusions}

In this paper we have shown that a speedup $s \geq \min\left\{ \frac{\lmax}{\lmin}, \frac{\gamma \lmin+\lmax}{\lmax}\right\}$ is {\em necessary} and {\em sufficient} for competitiveness. 
%\marginpar{IDEAS DEALING WITH LACK OF INPUT INFO}
%In fact, it can be shown that in order to have competitiveness, it is sufficient to set $s=\lmax/\lmin$ if $\lmax/\lmin \in [1,\varphi]$, and $s= 1 + \sqrt{1 - \lmin/\lmax}$ if otherwise, where $\varphi=\frac{1+\sqrt{5}}{2}$, which is the golden ratio (see details in Appendix~\ref{app:conditions}).

\deleted[ez]{
Additionally, in Appendix~\ref{sec:inac} we take an inside view of the case when the system does not provide the exact values of task costs, but instead it may provide only an estimate of the cost of each task. We refer to this setting as $\varepsilon$-inaccurate and study the details of the modifications that must take place.}

One could argue that the algorithms we propose assume the knowledge of $\lmin$ and $\lmax$, 
which may seem unrealistic. However, in practice, processors can estimate the smallest and largest task costs from the costs seen so far, 
and use these values as $\lmin$ and $\lmax$ in the algorithms.
This results in a similar performance (up to constant factors) of the proposed algorithms with this adaptive computation of $\lmin$ and $\lmax$ 
with some minor changes in the analysis.
%}}
%
%The main future lines we plan to explore are generalizations in which the speedup used is not homogeneous, but different processors can use different speedups, or the speedup used can vary over time.

A research line that we believe worth of further investigation is to study systems where processors could use different speedups or their speedup could vary over time\replaced[ez]{ or even}{. Another interesting direction is} to accommodate 
dependent tasks.

%\newpage

\bibliographystyle{plain}
\bibliography{Bibliography,references}

\remove{

}

\newpage

\section*{Figures}

\begin{figure*}[h]
\hspace*{-1em}
\centering
\includegraphics[width=6.7in]{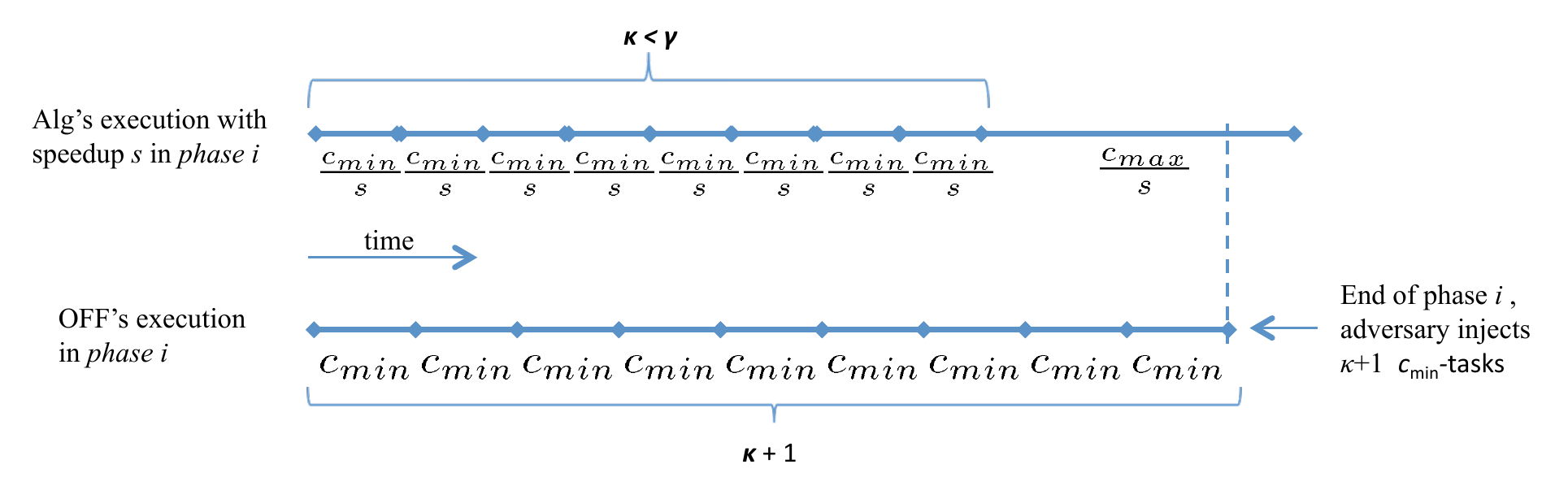}
\caption{\small Illustration of  Scenario 1. It uses the property $(\kappa\lmin+\lmax)/s > (\kappa+1)\lmin$, for any integer $0 \le \kappa < \gamma$ (Property 2).}
\label{fig:scenario1}
\end{figure*}

\begin{figure*}[h]
\hspace*{-1em}
\centering
\includegraphics[width=6.7in]{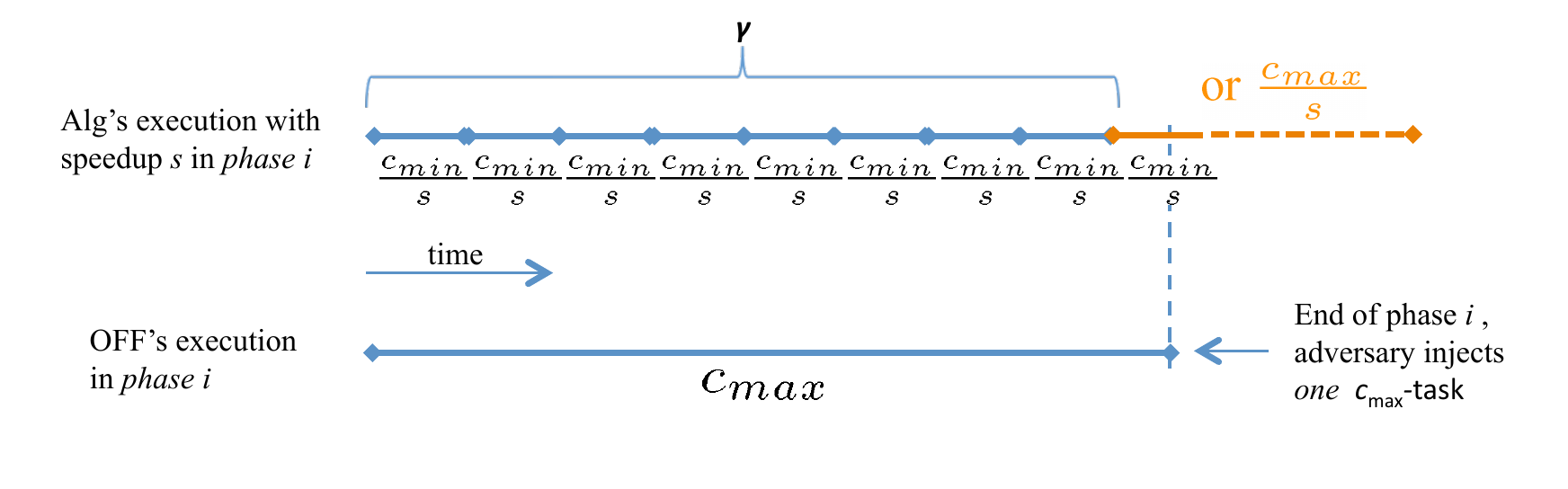}
\caption{\small Illustration of  Scenario 2. It uses the property $(\gamma\lmin+\lmax)/s > \lmax$ (condition (b) of Theorem~\ref{t:non-competitive}).}
\label{fig:scenario2}
\end{figure*}

\newpage

\appendix

\section*{APPENDIX}

%!TEX root = ./main.tex

\section{\replaced[af]{Omitted}{Missing} proofs from Section \ref{sec:NPhard}}
\label{a:NPhard}

\begin{proofof}{Theorem~\ref{t:nphard}}
The reduction we use is from the Partition problem. The input considered is a set of numbers (we assume positive) $C=\{x_1, x_2, ..., x_k\}$, $k >1$. The problem is to decide whether there is a subset $C' \subset C$ such that $\sum_{x_i \in C'} x_i=\frac{1}{2}\sum_{x_i \in C} x_i$.
The Partition problem is know to be NP-complete.

Consider any instance $I_p$ of Partition. We construct an instance $I_d$ of $\mathit{DEC\_C\_SCHED}(t,\cA,\omega)$ as follows.
The time $t$ is set to $1 + \sum_{x_i \in C} x_i$. 
The adversarial pattern $\cA$ injects a set $S$ of $k$ tasks at time $0$, so that the $i$th task has cost $x_i$. It also starts the processor at time 0 and crashes it at time $\frac{1}{2}\sum_{x_i \in C} x_i$. Then,
$\cA$ restarts the processor immediately and crashes it again at time $\sum_{x_i \in C} x_i$. The processor does not 
restart until time $t$. Finally, the parameter $\omega$ is set to $0$.

Assume there is an algorithm $\Alg$ that solves $\mathit{DEC\_C\_SCHED}$. We show that $\Alg$ can be used
to solve the instance $I_p$ of Partition by solving the instance $I_d$ of $\mathit{DEC\_C\_SCHED}$ obtained as described.
If there is a $C' \subset C$ such that $\sum_{x_i \in C'} x_i=\frac{1}{2}\sum_{x_i \in C} x_i$, then there is an algorithm that is able to
schedule tasks from $S$ so that the two semi-periods (of length $\frac{1}{2}\sum_{x_i \in C} x_i$ each) the processor is active, it is doing useful work.
In that case, the pending cost at time $t$ will be $0=\omega$. If, on the other hand,
such subset does not exist, some of the time the processor is active will be wasted, and the
cost pending at time $t$ has to be larger than $\omega$.
\end{proofof}

\section{\replaced[af]{Omitted}{Missing} proofs from Section \ref{s:non-comp}}
\label{app:NON-COMP}

\begin{proofof}{Lemma~\ref{l:off-pending}}
We argue by induction on the number of phases that: 
in the beginning of phase $i$ there are exactly $\gamma$ of $\lmin$-tasks and one $\lmax$-task pending in the execution of $\OFF$, and therefore phase $i$ is well defined. Its specification (including termination time) depends only on whether $\OFF$ schedules either $\gamma$ of $\lmin$-tasks (in Scenario 1) or one $\lmax$-task (in Scenario 2) before the next task injection at the end of the phase.
The invariant holds for phase $1$ by definition. By straightforward investigation of both Scenarios, the very same configuration of task lengths that has been performed by $\OFF$ in its execution during a phase is injected at the end of the phase, and therefore the inductive argument proves the invariant for every consecutive phase.
\end{proofof}

\begin{proofof}{Lemma~\ref{l:infinite}}
First, by Lemma~\ref{l:off-pending}, consecutive phases are well-defined.
Second, observe
that each phase is finite, regardless of whether Scenario~1 or Scenario~2 is applied,
as it is bounded by the time in which $\OFF$ performs either at most $\gamma$ of $\lmin$-tasks
(in Scenario~1) or one $\lmax$-task (in Scenario~2).
Hence, the number of phases is infinite.\vspace{.2em}
\end{proofof}

\begin{proofof}{Lemma~\ref{l:no-long}}
It follows from the specification of Scenarios~1 and 2, condition (b) on the speedup $s$, 
and from Property 2.
Consider a phase. If Scenario~1 is applied for specification of this phase then
$\Alg$ could not finish its $\lmax$-task scheduled after $\kappa<\gamma$
$\lmin$-tasks, because the time needed for completing this sequence of tasks is at least
$\frac{\kappa\lmin+\lmax}{s}$, which, by Property 2, is larger than the length of this phase $(\kappa+1)\lmin$.
If Scenario~2 is applied for specification of this phase, then the first $\lmax$-task
could be finished by $\Alg$ no earlier than $\frac{\gamma\lmin+\lmax}{s}$ time
after the beginning of the phase, which is again bigger than the length of this phase $\lmax$,
by the assumption (b) on the speedup $s< \frac{\gamma \lmin+\lmax}{\lmax}$.
\end{proofof}

\begin{proofof}{Lemma~\ref{l:opt-long}}
It follows from Lemma~\ref{l:no-long} and from specification of tasks injections at the end of phase $i$, by Scenario~2.
\end{proofof}

\section{\replaced[af]{Omitted}{Missing} proofs from Section \ref{s:alg}}
\label{app:LIS}

\deleted[af]{
Before giving the proof of Theorem~\ref{t:LISs-comp1}, we analyze several properties of the algorithm needed for this proof.
We first focus on the pending-tasks competitiveness.
Suppose that $\LISs$ is not $\OPT + \beta n^2+3n$ competitive in terms of the number of pending tasks, OPT for some $\beta \geq \frac{\lmax}{\lmin}$ and some $s\ge \frac{\lmax}{\lmin}$. Consider an execution witnessing this fact and fix the adversarial pattern $\cA$ associated with it, together with the optimum solution \OPT\ for it. 
%\ez{($\cA$ is omitted from the notation of the analysis that follows)}.
%
Let $t^*$ be a time in the execution when $\cT_{t^*}(\LISs,\cA) > \cT_{t^*}(\OPT,\cA) + \beta n^2+3n$. For any time interval $I$, let $\cT_I$ be the total number of tasks injected in the interval $I$. Let $t_*\le t^*$ be the smallest time such that for all $t\in [t_*,t^*)$, $\cT_t(\LISs,\cA) > \cT_t(\OPT,\cA) +  \beta n^2$.
(Note that the selection of minimum time satisfying some properties defined by the computation is possible due to the fact that the computation is split into discrete processor cycles.)
Observe that $\cT_{t_*}(\LISs,\cA) \le \cT_{t_*}(\OPT,\cA) + \beta n^2 +n$, because at time $t_*$ no more than $n$ tasks could be reported to the repository by \OPT, while just before $t_*$ the difference between $\LISs$ and \OPT was at most $\beta n^2$.%\vsh
}

\begin{lemma}
\label{l:interval}
We have $t_*< t^*-\lmin$, and for every $t\in  [t_*,t_*+\lmin]$ the following holds with respect to the number of pending tasks: $\cT_t(\LISs,\cA) \le \cT_t(\OPT,\cA) + \beta n^2+2n$.%\vsh
\end{lemma}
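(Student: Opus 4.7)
The plan is to prove both statements in one sweep by bounding how fast the gap $\cT_t(\LISs) - \cT_t(\OPT)$ can grow during the time interval $(t_*, t_* + \lmin]$. The key observation is that in any time window of length at most $\lmin$, the offline algorithm $\OPT$, which runs without speedup, can report at most one completion per processor, hence at most $n$ reported tasks in total. This is the only "asymmetry" I need; injections affect $\LISs$ and $\OPT$ identically at the repository.

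First I would set up the starting point at $t_*$. By the minimality of $t_*$ (and the discrete-cycle observation noted in the main proof), just before $t_*$ the gap was at most $\beta n^2$, and at time $t_*$ at most $n$ new completions by $\OPT$ can appear (since \emph{inform} operations are processed before \emph{inject} and \emph{get}). This yields $\cT_{t_*}(\LISs) \le \cT_{t_*}(\OPT) + \beta n^2 + n$, as already noted. Then, for any $t \in (t_*, t_* + \lmin]$, let $I = (t_*, t]$ and let $\cT_I$ denote the number of tasks injected in $I$. Since $\LISs$'s pending queue grows by at most $\cT_I$ (ignoring its own completions), and $\OPT$'s queue shrinks by at most $n$ completions while receiving the same $\cT_I$ injections, I obtain
\begin{align*}
\cT_t(\LISs) &\le \cT_{t_*}(\LISs) + \cT_I \le \cT_{t_*}(\OPT) + \beta n^2 + n + \cT_I, \\
\cT_t(\OPT) &\ge \cT_{t_*}(\OPT) + \cT_I - n.
\end{align*}
Subtracting gives $\cT_t(\LISs) \le \cT_t(\OPT) + \beta n^2 + 2n$, which proves the second claim (the $t = t_*$ case is covered directly by the first inequality above).

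Finally, to conclude $t_* < t^* - \lmin$, I compare the bound just derived with the hypothesis at $t^*$. If $t^*$ were in $[t_*, t_* + \lmin]$, then the inequality $\cT_{t^*}(\LISs) \le \cT_{t^*}(\OPT) + \beta n^2 + 2n$ would contradict $\cT_{t^*}(\LISs) > \cT_{t^*}(\OPT) + \beta n^2 + 3n$. Hence $t^* > t_* + \lmin$. The main subtlety I expect is the careful handling of the instant $t_*$ itself, in particular the event-ordering convention at the repository (\emph{inform} before \emph{inject} before \emph{get}), which justifies both the "at most $n$ completions at time $t_*$" and the bookkeeping that equates the injections seen by $\LISs$ and $\OPT$ in $I$; everything else is a one-line arithmetic combination.
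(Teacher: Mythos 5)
Your proposal is correct and follows essentially the same route as the paper's own proof: start from the bound $\cT_{t_*}(\LISs,\cA) \le \cT_{t_*}(\OPT,\cA) + \beta n^2 + n$ established just before the lemma, note that $\OPT$ (running without speedup) reports at most $n$ tasks in any window of length $\lmin$ while injections affect both algorithms identically, and combine the two inequalities; the conclusion $t_* < t^* - \lmin$ then follows from the definition of $t^*$ exactly as you argue. No gaps.
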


%\begin{proofof}{Lemma~\ref{l:interval}}
\begin{proof}
We already discussed the case $t=t_*$. In the interval $(t_*,t_*+\lmin]$, $\OPT$ can notify the repository about at most $n$ performed tasks, as each of $n$ processors may finish at most one task. Consider any $t\in (t_*,t_*+\lmin]$ and let $I$ be fixed to $(t_*,t]$.
We have $\cT_t(\LISs,\cA) \le \cT_{t_*}(\LISs,\cA)+\cT_I$ and $\cT_t(\OPT,\cA) \ge \cT_{t_*}(\OPT,\cA) + \cT_I - n$.
It follows that 
\begin{eqnarray*}
\cT_t(\LISs,\cA)
&\le& 
\cT_{t_*}(\LISs,\cA)+\cT_I 
\\
&\le&
\left(\cT_{t_*}(\OPT,\cA) +  \beta n^2 + n \right) \\ &~~~+&
\left(\cT_t(\OPT,\cA) - \cT_{t_*}(\OPT,\cA) + n \right)
\\
&\le&
\cT_t(\OPT,\cA) + \beta n^2 +2n
\ .%\vsf
\end{eqnarray*}
It also follows that any such $t$ must be smaller than $t^*$, by definition of $t^*$.\vspace{.2em}
\end{proof}
%\end{proofof}

\begin{lemma}
\label{l:OPT-not-better}
Consider a time interval $I$ during which the queue of pending tasks in $\LISs$ is always non-empty. Then the total number of tasks reported by $\OPT$ in the period $I$ is not bigger than the total number of tasks reported by $\LISs$ in the same period plus $n$ (counting possible redundancy).%\vsh
\end{lemma}

%\begin{proofof}{Lemma~\ref{l:OPT-not-better}}
\begin{proof}
For each processor in the execution of $\OPT$, under the adversarial pattern $\cA$, in the considered period, exclude the first reported task; this is to eliminate from further analysis tasks that might have been started before time interval $I$. There are at most $n$ such tasks reported by $\OPT$.

It remains to show that the number of remaining tasks reported to the repository by $\OPT$ is not bigger than those reported in the execution of $\LISs$ in the considered period $I$. It follows from the property that $s\ge \frac{\lmax}{\lmin}$. More precisely, it implies that during time period when a processor $p$ performs a task $\tau$ in the execution of $\OPT$,  the same processor reports at least one task to the repository in the execution of $\LISs$. 
This is because performing any task by a processor in the execution of $\OPT$ takes at least time $\lmin$, while performing any task by $\LISs$ takes no more than $\frac{\lmax}{s} \le \lmin$, and also because no active processor in the execution of $\LISs$ is ever idle due to non-emptiness of the pending task queue. Hence we can define a 1-1 function from the considered tasks performed by $\OPT$ (i.e., tasks which are started and reported in time interval $I$) to the family of different tasks reported by $\LISs$ in the period $I$, which completes the proof.
\end{proof}
%\end{proofof}

\begin{lemma}
\label{l:redundancy}
In the interval $(t_*+\lmin,t^*]$ no task is reported twice to the repository by $\LISs$.%\vsh
\end{lemma}

%\begin{proofof}{Lemma~\ref{l:redundancy}}
\begin{proof}
The proof is by contradiction. Suppose that task $\tau$ is reported twice in the considered time interval of the execution of $\LISs$, under adversarial pattern $\cA$. Consider the first two such reports, by processors $p_1$ and $p_2$; w.l.o.g. we may assume that $p_1$ reported $\tau$ at time $t_1$, not later than $p_2$ reported $\tau$ at time $t_2$. Let $\ell_\tau$ denote the cost of task $\tau$.
The considered reports have to occur within time period shorter than the cost of task $\tau$, in particular, shorter than $\lmax/s\le \lmin$; otherwise it would mean that the processor who reported as the second would have started performing this task not earlier than the previous report to the repository, which contradicts the property of the repository that each reported task is immediately removed from the list of pending tasks. It also implies that $p_1\ne p_2$.

From the algorithm description, the list $Pending$ at time $t_1-\ell_\tau/s$ had task $\tau$ at position $p_1 \beta n$, while the list $Pending$ at time $t_2-\ell_\tau/s$ had task $\tau$ at position $p_2 \beta n$.
Note that interval $[t_1-\ell_\tau/s,t_2-\ell_\tau/s]$ is included in $[t_*,t^*]$, and thus, by the definition of $t_*$, at any time of this interval there are at least $\beta n^2$ tasks in the list $Pending$.

There are two cases to consider. First, if $p_1<p_2$, then because new tasks on list $Pending$ are appended at the end of the list, it will never happen that a task with rank $p_1 \beta n$ would increase its rank in time, in particular, not to $p_2 \beta n$.
Second, if $p_1>p_2$, then during time interval $[t_1-\ell_\tau/s,t_2-\ell_\tau/s]$ task $\tau$ has to decrease its rank from $p_1 \beta n$ to $p_2 \beta n$, i.e., by at least $\beta n$ positions. It may happen only if at least $\beta n$ tasks ranked before $\tau$ on the list $Pending$ at time $t_1-\ell_\tau/s$ become reported in the considered time interval.
Since all of them are of cost at least $\lmin$, and the considered time interval has length smaller than $\lmax/s$, each processor may report at most $\frac{\lmax/s}{\lmin/s}\le \beta$ tasks (this is the part of analysis requiring $\beta\ge \frac{\lmax}{\lmin}$). 
Since processor $p_2$ can report at most $\beta-1$ tasks different than $\tau$, the total number of tasks different from $\tau$ reported to the repository is at most $\beta n-1$, and hence it is not possible to reduce the rank of $\tau$ from $p_1 \beta n$ to $p_2 \beta n$ within the considered time interval. This contradicts the assumption that $p_2$ reports $\tau$ to the repository at time $t_2$.
\end{proof}
%\end{proofof}

\deleted[af]{
\begin{lemma}
\label{l:number-tasks}
$\cT_{t^*}(\LISs,\cA) \le \cT_{t^*}(\OPT,\cA) + \beta n^2+3n$.
\end{lemma}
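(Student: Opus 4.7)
The plan is to chain together the three preceding lemmas (Lemma~\ref{l:interval}, Lemma~\ref{l:OPT-not-better}, and Lemma~\ref{l:redundancy}) and then do a simple bookkeeping argument on the pending-task counts at times $t_*+\lmin$ and $t^*$. First I would anchor the argument at time $t_*+\lmin$: by Lemma~\ref{l:interval} we have $t_*+\lmin < t^*$ and $\cT_{t_*+\lmin}(\LISs,\cA) \le \cT_{t_*+\lmin}(\OPT,\cA) + \beta n^2 + 2n$. The goal is then to show that the gap $\cT_t(\LISs,\cA) - \cT_t(\OPT,\cA)$ grows by no more than an additive $n$ over the remaining subinterval $I=(t_*+\lmin,t^*]$.

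Next I would introduce the two bookkeeping quantities $y$ and $y'$, defined respectively as the number of task reports made by $\LISs$ and by $\OPT$ to the Repository during $I$, and observe that by the definition of $t_*$, the pending queue of $\LISs$ is never empty throughout $I$ (since it exceeds $\beta n^2$), so Lemma~\ref{l:OPT-not-better} applies and yields $y' \le y + n$. Also, letting $\cT_I$ be the number of tasks injected in $I$, we have the exact bookkeeping identities
\[
\cT_{t^*}(\OPT,\cA) = \cT_{t_*+\lmin}(\OPT,\cA) + \cT_I - y',
\]
and, crucially using Lemma~\ref{l:redundancy} so that each of the $y$ reports removes a distinct pending task,
\[
\cT_{t^*}(\LISs,\cA) = \cT_{t_*+\lmin}(\LISs,\cA) + \cT_I - y.
\]

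Finally I would subtract these two identities and plug in the bounds: the $\cT_I$ cancels, Lemma~\ref{l:interval} controls the gap at $t_*+\lmin$ by $\beta n^2 + 2n$, and the inequality $y'-y \le n$ controls the extra slippage, giving
\[
\cT_{t^*}(\LISs,\cA) - \cT_{t^*}(\OPT,\cA) \le (\beta n^2 + 2n) + n = \beta n^2 + 3n,
\]
which is the claim. The main potential obstacle is justifying that Lemma~\ref{l:OPT-not-better} applies throughout $I$ (i.e., verifying that the pending queue of $\LISs$ really does remain non-empty over the entire interval); but this is immediate from the definition of $t_*$ together with the condition $\cT_t(\LISs,\cA) > \beta n^2 \ge 1$ for all $t \in [t_*,t^*)$. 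The rest is mechanical accounting, and Lemma~\ref{l:redundancy} ensures we need not worry about $\LISs$ ``wasting'' reports on the same task.
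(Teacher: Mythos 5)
Your proposal is correct and follows essentially the same route as the paper's proof: anchor the gap at $t_*+\lmin$ via Lemma~\ref{l:interval}, bound $\OPT$'s reports over $(t_*+\lmin,t^*]$ by $y+n$ via Lemma~\ref{l:OPT-not-better}, invoke Lemma~\ref{l:redundancy} so that each of the $y$ reports of $\LISs$ removes a distinct task, and subtract. Your explicit inclusion of the injected-task count $\cT_I$ in both bookkeeping identities (which then cancels) is in fact a slightly cleaner presentation than the paper's, which suppresses that term from the outset.
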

}

\begin{proofof}{Lemma~\ref{l:number-tasks}}
%\begin{proof}
By Lemma~\ref{l:interval} we have that $\cT_{t_*+\lmin}(\LISs,\cA) \le \cT_{t_*+\lmin}(\OPT,\cA) + \beta n^2+2n$.\\
Let $y$ be the total number of tasks reported by $\LISs$ in $(t_*+\lmin,t^*]$.
By Lemma~\ref{l:OPT-not-better} and definitions $t_*$ and $t^*$, $\OPT$ reports no more that $y+n$ tasks in $(t_*+\lmin,t^*]$. Therefore,%\vsf
\[
\cT_{t^*}(\OPT,\cA)
\ge
\cT_{t_*+\lmin}(\OPT,\cA)-(y+n)
\ .
\]
%\vsh right after .
By Lemma~\ref{l:redundancy}, in the interval $(t_*+\lmin,t^*]$, no redundant work is reported by $\LISs$. Thus,%\vsh
\[
\cT_{t^*}(\LISs,\cA)
\le
\cT_{t_*+\lmin}(\LISs,\cA)-y
\ .
\]

Consequently,
\begin{eqnarray*}
\cT_{t^*}(\LISs,\cA)
&\le&
\cT_{t_*+\lmin}(\LISs,\cA) - y
\\
&\le&
\left(\cT_{t_*+\lmin}(\OPT,\cA)+\beta n^2+2n\right) - y
\\
&\le&
\cT_{t^*}(\OPT,\cA)+(\beta n^2 +2n) + n
\\
&\le&
\cT_{t^*}(\OPT,\cA)+\beta n^2 + 3n
\end{eqnarray*}
as desired.\vspace{.3em}
%\end{proof}
\end{proofof}

\deleted[af]{
\noindent We are now ready to prove Theorem~\ref{t:LISs-comp1}.
\begin{proofof}{Theorem~\ref{t:LISs-comp1}}
The competitiveness for the number of pending tasks follows directly from Lemma~\ref{l:number-tasks}: it violates the contradictory assumptions made in the beginning of the analysis.
The result for the pending cost is a direct consequence of the one for pending tasks, as the cost of any pending task in $\LISs$ is at most $\frac{\lmax}{\lmin}$ times bigger than the cost of any pending task in $\OPT$.%\hfill\rule{2mm}{2mm}
\end{proofof}
}

\remove{
\begin{proofof}{Theorem~\ref{t:bad-lis}}
We show that there is an execution with one single processor in which the cost competitiveness of $\KLIS$ is no smaller than $\frac{1}{2}(\frac{\lmax}{\lmin}+1)$. In this execution we will compare the
pending cost of $\KLIS$, running with speedup $s$, with the pending cost of an algorithm $\LPT$, that always schedules tasks of length $\lmax$ if possible, and runs with no speedup, \cg{under the adversarial pattern $\cA$ defined next.} The adversary
devises the following pattern.  It starts the processor at time 0 and crashes it at time $\lmax$, restarts it immediately and crashes it again at time $2\lmax$,
restarts it immediately and crashes it again at time $3\lmax$, and so on. The processor is hence active infinite number of intervals of length $\lmax$, which we call \emph{active intervals}.
The adversary injects at time 0 one task of costs $\lmax$ and $K$ tasks of cost $\lmin$. Then, every time the processor is crashed two new tasks are injected, one $\lmin$-task and one $\lmax$-task, in this order.

Observe that \LPT completes one $\lmax$-task in each interval the processor is active. This is easy to see, since there is always a
pending $\lmax$-task and each interval is just long enough for the task to be completed. $\KLIS$, on the other hand, in the first active interval has no choice but start scheduling a $\lmin$-task. After completing it, it may schedule the only pending $\lmax$-task or another $\lmin$-task. In either case, that task is not going to complete, since $\lmax < \frac{2\lmin}{s}$. Hence, $\KLIS$ could only complete a $\lmin$-task in that interval while \LPT
completed a $\lmax$-task. If $\KLIS$ starts an active interval scheduling a $\lmax$-task, then the task is completed, but no other task can be completed in the interval. Observe then that (1) $\KLIS$ completes at most one task in each active interval, and hence, from the injection pattern, (2) $\KLIS$ gets a new
$\lmax$-task in the set of $K$ oldest tasks at most once every two active intervals.

We can compute now the pending tasks under each algorithm after completing $i$ active intervals. The total set of injected tasks contains
the $K+1$ tasks injected initially and the two tasks injected at the end of each active interval. These add up to $i+1$ tasks of cost $\lmax$ and
$K+i$ tasks of cost $\lmin$. Of these tasks, \LPT has completed and reported $i$ tasks of cost $\lmax$. To compute the tasks completed by
$\KLIS$ we first assume $i$ even, for simplicity, and assume that $\KLIS$ prefer to schedule $\lmax$-tasks if there are among the $K$ oldest tasks. 
(This is in fact the best strategy for $\KLIS$.)
With these assumptions we have that $\KLIS$ has completed and reported $i/2$ tasks of cost $\lmax$ and $i/2$ tasks of cost $\lmin$.

The ratio \cg{$\rho_i=\cC_{i \lmax}(\KLIS,\cA)/\cC_{i \lmax}(\LPT,\cA)$} of pending costs after $i$ active intervals is then
\begin{eqnarray*}
\rho_i & = & \frac{(i+1)\lmax + (K+i) \lmin - \frac{i\lmax}{2} - \frac{i\lmin}{2}}{(i+1)\lmax + (K+i) \lmin - i \lmax} \\
&= &\frac{(\frac{i}{2}+1)\lmax + (K+\frac{i}{2}) \lmin}{\lmax + (K+i) \lmin}.
\end{eqnarray*}

Using calculus, it is easy to prove that this ratio $\rho_i$ tends to $\frac{1}{2}(\frac{\lmax}{\lmin}+1)$ when $i$ tends to infinity. This proves that $\KLIS$ cannot be $k$-cost-competitive for any $k < \frac{1}{2}(\frac{\lmax}{\lmin}+1)$.
\end{proofof}

\begin{proofof}{Theorem~\ref{thm:LIS2}}
If we allow speedup $s\ge\frac{\lmax}{\lmin}\cdot \lceil\frac{\lmax}{\lmin}\rceil$, algorithm $\LISs$ is $\OPT + \lmax\beta n^2+ (2\lmax+\lmin) n$ competitive in terms of pending cost, for $\beta\ge\frac{\lmax}{\lmin}$ \ez{, for any time $t$ and adversarial pattern $\cA$}.
Suppose otherwise and let fix the speedup $s\ge\frac{\lmax}{\lmin}\cdot \lceil\frac{\lmax}{\lmin}\rceil$. Consider an execution in which the pending cost of $\LISs$, \ez{$\cC(\LISs,\cA)$}, is bigger than \ez{$\cC(\OPT,\cA) + \lmax\beta n^2+ (2\lmax+\lmin) n$}, and fix the adversarial pattern \cg{$\cA$} associated with it together with the optimum solution \OPT\ for it.
Let $t^*$ be a time in the execution when \cg{$\cC_{t^*}(\LISs,\cA) > \cC_{t^*}(\OPT,\cA) + \lmax \beta n^2+ (2\lmax+\lmin) n$}. % in terms of pending cost.
 For any time interval $I$, let $\cC_I$ be the total cost of tasks injected in the interval $I$, \cg{under adversarial pattern $\cA$.}
Let $t_*\le t^*$ be the smallest time such that for any $t\in [t_*,t^*)$,$\cC_t(\LISs,\cA) > \cC_t(\OPT,\cA) +  \lmax\beta n^2$. %, in terms of pending cost.
(Note again that the selection of minimum time satisfying some properties defined by the computation is possible due to the fact that the computation is split into discrete processor cycles.)
Observe that $\cC_{t_*}(\LISs,\cA) \le \cC_{t_*}(\OPT,\cA) + \lmax \beta n^2 +\lmax n$
because at time $t_*$ no more than $n$ tasks could be reported to the dispatcher by $\OPT$, each of cost at most $\lmax$, while just before $t_*$ the difference between pending costs of $\LISs$ and $\OPT$ was at most $\lmax\beta n^2$.

\begin{lemma}
\label{l:interval-cost}
We have $t_*< t^*-\lmin$, and for every $t \in [t_*,t_*+\lmin]$ the following holds:
$\cC_t(\LISs,\cA) \le \cC_t(\OPT,\cA) + \lmax \beta n^2+ (\lmax+\lmin) n$.
\end{lemma}

\begin{proof}
Indeed, in the interval $[t_*,t_*+\lmin]$, $\OPT$ can notify the repository about at most 
$(\lmax+\lmin)\cdot n$ total cost completed, as each of $n$ processors may finish a task of cost at most $\lmax$ in the beginning of the considered time interval and at most $\lmin$ cost units during the interval, \cg{under adversarial pattern $\cA$}. Consider any $t\in [t_*,t_*+\lmin]$ and let $I$ be fixed to $[t_*,t]$.
We have $\cC_t(\LISs,\cA) \le \cC_{t_*}(\LISs,\cA) + \cC_I$ and 
$\cC_t(\OPT,\cA) \ge \cC_{t_*}(\OPT,\cA) + \cC_I -(\lmax+\lmin) n$.
It follows that 
$
\cC_t(\LISs,\cA) \le \cC_{t_*}(\LISs,\cA) + \cC_I 
\le \left(\cC_{t_*}(\OPT,\cA) + \lmax \beta n^2 \right) +
\left(\cC_t(\OPT,\cA) - \cC_{t_*}(\OPT,\cA) + (\lmax+\lmin) n \right)
\le \cC_t(\OPT,\cA) + \lmax \beta n^2 + (\lmax+\lmin) n.
$
It also follows that any such $t$ must be smaller than $t^*$.
\end{proof}

\begin{lemma}
\label{l:OPT-not-better-cost}
Consider a time interval $I$ during which the queue of pending tasks in $\LISs$ is always non-empty. 
Then the total number of tasks reported by $\OPT$ in the period $I$ is not bigger than the total number of tasks (counting possible redundancy) reported by $\LISs$ in the same period divided by 
$\lceil\frac{\lmax}{\lmin}\rceil$ plus $n$, \cg{under adversarial pattern $\cA$}.
\end{lemma}

\begin{proof}
For each processor in the execution of $\OPT$ in the considered period, \cg{under adversarial pattern $\cA$}, exclude the first reported task; this is to eliminate from further analysis tasks that might have been started before time interval $I$. There are at most $n$ such tasks reported by $\OPT$.

It remains to show that the number of remaining tasks reported to the repository by 
$\LISs$ is at least 
$\lceil\frac{\lmax}{\lmin}\rceil$
times the number of those reported in the execution of 
\OPT
in the considered period. It follows from the property of
$s\ge\frac{\lmax}{\lmin}\cdot \lceil\frac{\lmax}{\lmin}\rceil$.
More precisely, it implies that during time period when a processor $p$ performs a task $\tau$ in the execution of $\OPT$, the same processor reports at least 
$\lceil\frac{\lmax}{\lmin}\rceil$
tasks to the repository in the execution of $\LISs$. This is because performing any task by a processor in the execution of $\OPT$ takes at least time $\lmin$, while performing any task by $\LISs$ takes no more than 
$\lmax/s\le \frac{\lmin}{\lceil\lmax/\lmin\rceil}$,
and also because no active processor in the execution of $\LISs$ is ever idle due to non-emptiness of the pending task queue. Hence we may define a 1-1 function from the considered tasks performed by $\OPT$
(i.e., tasks which are started and reported in time interval $I$) to the family of disjoint task sets of size at least 
$\lceil\frac{\lmax}{\lmin}\rceil$
reported by $\LISs$ in the period $I$, which completes the proof.
\end{proof}

\begin{lemma}
\label{l:number-tasks-cost}
We have $\cC_{t^*}(\LISs,\cA) \le \cC_{t^*}(\OPT,\cA) + \lmax\beta n^2 + (2\lmax+\lmin) n$.
\end{lemma}

\begin{proof}
By Lemma~\ref{l:interval-cost} we have that $\cC_{t_*+\lmin}(\LISs,\cA) \le \cC_{t_*+\lmin}(\OPT,\cA) + \lmax\beta n^2+ (\lmax+\lmin) n$.
Let $x$ be the total cost of tasks injected in $(t_*+\lmin,t^*]$, \cg{under $\cA$}, and let $y$ be the total number of tasks reported by $\LISs$ in $(t_*+\lmin,t^*]$.
By Lemma~\ref{l:OPT-not-better-cost}, $\OPT$ reports no more that 
$\lceil\frac{\lmax}{\lmin}\rceil^{-1}y+n$ 
tasks in $(t_*+\lmin,t^*]$.
Therefore,  
$\cC_{t^*}(\OPT,\cA) 
\ge 
\cC_{t_*+\lmin}(\OPT,\cA)+x-\left(\lceil\frac{\lmax}{\lmin}\rceil^{-1}y+n\right)\lmax
\ge 
\cC_{t_*+\lmin}(\OPT,\cA)+x-\left(y\lmin+n\lmax\right).$

By Lemma~\ref{l:redundancy}, in the interval $[t_*+\lmin,t^*]$, no redundant work is reported by $\LISs$.
Thus, 
$\cC_{t^*}(\LISs,\cA) \le \cC_{t_*+\lmin}(\LISs,\cA)+x-y\lmin.$
$
\cC_{t^*}(\LISs,\cA) \le \cC_{t_*+\lmin}(\LISs,\cA)+x-y\lmin
\le \left(\cC_{t_*+\lmin}(\OPT,\cA)+\lmax\beta n^2+ (\lmax+\lmin) n\right) + x-y\lmin
\le \left(\cC_{t^*}(\OPT,\cA)-x+y\lmin +n\lmax\right)+(\lmax\beta n^2 + (\lmax+\lmin) n)+x-y\lmin
\le \cC_{t^*}(\OPT,\cA)+\lmax\beta n^2 +(2\lmax+\lmin) n,
$
as desired.
\end{proof}

\noindent{We now complete our proof of Theorem~\ref{thm:LIS2}.}

%\vspace{1ex}
%\noindent{\bf Proof of Theorem~\ref{thm:LIS2}:}
Lemma~\ref{l:number-tasks-cost} leads to contradiction and the claimed result follows. For $\beta \ge \frac{\lmax}{\lmin}$ and $s \geq \frac{\lmax}{\lmin}\cdot\lceil\frac{\lmax}{\lmin}\rceil$, $\cC_t(\LISs,\cA) \leq \cC_t(\OPT,\cA) + \lmax\beta n^2+(2\lmax+\lmin) n$, \cg{for any time $t$ and adversarial pattern $\cA$.}\end{proofof}

}

\section{\replaced[af]{Omitted}{Missing} proofs from Section \ref{sec:gnBurst}}
\label{a:gnburst}

%\ez{[[I believe we don't need to add any $\cA$ in the first part with the Lemmas and the GroupLIS, right?]]\\}
We begin the analysis of $\gnBurst$ with necessary definitions.

\begin{definition}
We define the {\bf\em absolute task execution} of a task $\tau$ to be the interval $[t,t^\prime]$ in which a processor $p$ schedules $\tau$ at time $t$ and reports its completion to the repository at $t^\prime$, without stopping its execution within the interval $[t,t^\prime)$.
\end{definition}

\begin{definition}
We say that a scheduling algorithm is of type {\bf\em \GroupLIS$\mathbf{(\beta)}$}, $\beta\in\mathbb{N}$, if all the following hold:
\begin{itemize}[leftmargin=5mm]
\item It classifies the pending tasks into classes where each class contains tasks of the same cost.%\vspace{-.8em}
\item It sorts the tasks in each class in increasing order with respect to their arrival time.%\vspace{-.8em}
\item If a class contains at least $\beta\cdot n^2$ pending tasks and a processor $p$ schedules a task from that class, then it schedules
the $(p\cdot \beta n)$th task in the class.%\vspace{-.8em}
\end{itemize} 
\end{definition}

Observe that algorithm  $\gnBurst$ is of type \GroupLIS$(1)$.
The next lemmas state useful properties of algorithms of type \GroupLIS.\vspace{.3em}

\begin{lemma}
\label{l:redundancy2}
For an algorithm $A$ of type \mbox{\GroupLIS$(\beta)$} and a time interval $I$ in which a list $L$ of tasks of cost $c$ has at least $\beta\cdot n^2$ pending tasks, any two absolute task executions fully contained in $I$, of tasks $\tau_1,\tau_2 \in L$, by processors $p_1$ and $p_2$ respectively, must have $\tau_1 \neq \tau_2$.\vspace{.3em}
\end{lemma}

\begin{proof}%{Lemma~\ref{l:redundancy2}}
Suppose by contradiction, that two processors $p_1$ and $p_2$ schedule the same $c$-task, say $\tau \in L$, to be executed during the interval $I$. Let's assume times $t_1$ and $t_2$, where $t_1,t_2 \in I$ and $t_1 \leq t_2$, to be the times when each of the processors correspondingly, scheduled the task. Since any $c$-task takes time $\frac{c}{s}$ to be completed, then $p_2$ must schedule the task before time $t_1+\frac{c}{s}$, or else it would contradict the property of the Dispatcher stating that each reported task is immediately removed from the set of pending tasks.\\
Since algorithm $A$ is of type \GroupLIS$(\beta)$, we have that at time $t_1$, when $p_1$ schedules $\tau$, the task's position on the list $L$ is $p_1\cdot \beta n$. In order for processor $p_2$ to schedule $\tau$ at time $t_2$, it must be at position $p_2\cdot \beta n$. There are two cases we have to consider:\\
(1) If $p_1 < p_2$, then during the interval $[t_1,t_2]$, task $\tau$ must increase its position in the list $L$ from $p_1\cdot \beta n$ to $p_2\cdot \beta n$, i.e., by at least $\beta n$ positions. This can happen only in the case where new tasks are injected and are placed before $\tau$. This, however, is not possible, since new $c$-tasks are appended at the end of the list. (Recall that in algorithms of type \GroupLIS, the tasks in $L$ are 
sorted in an increasing order with respect to arrival times.)\\
(2) If $p_1 > p_2$, then during the interval $[t_1,t_2]$, task $\tau$ must decrease its position in the list by at least $\beta n$ places. This may happen only in the case where at least $\beta n$ tasks ordered before $\tau$ in $L$ at time $t_1$, are completed and reported by time $t_2$. Since all tasks in list $L$ are of the same cost $c$, and the considered interval has length $\frac{c}{s}$, each processor may complete at most one task during that time. Hence, at most $n-1$ $c$-tasks may be completed, which are not enough to change $\tau$'s position from $p_1\cdot \beta n$ to $p_2\cdot \beta n$ (even when $\beta=1$) by time $t_2$.\\
The two cases above contradict the initial assumption and hence the claim of the lemma follows.%\hfill\rule{2mm}{2mm}
\end{proof}

\begin{lemma}
\label{l:nfirst}
Let $S$ be a set of tasks reported as completed by an algorithm $A$ of type \mbox{\GroupLIS$(\beta)$} in a time interval $I$. Then at least $|S| - n$ such tasks have their absolute task execution fully contained in $I$.\vspace{.3em}
%\cg{I believe we need to state that $|S|>n$ otherwise the lemma above allows for a negative number of tasks.}\vsh
\end{lemma}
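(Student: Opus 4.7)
The plan is to use a simple per-processor counting argument. Call a reported task $\tau \in S$ \emph{straddling} if its absolute task execution $\alpha_\tau = [t,t']$ satisfies $t' \in I$ but $t \notin I$ (so the processor started executing $\tau$ strictly before $I$ began, and finished it inside $I$). Any non-straddling task reported in $I$ has its entire absolute task execution contained in $I$, because the report time $t'$ lies in $I$ and, by the assumption $t \in I$ (together with the definition of absolute task execution, which forbids the processor from stopping during $[t,t')$), the whole interval $[t,t']$ lies inside $I$.

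First, I would argue that each processor $p$ contributes at most one straddling task. Suppose $p$ reports two straddling tasks $\tau_1, \tau_2$ in $I$, with absolute task executions $[t_1,t_1']$ and $[t_2,t_2']$ where $t_1, t_2 \notin I$ and $t_1', t_2' \in I$. By definition of absolute task execution, $p$ does not interrupt its execution of $\tau_i$ during $[t_i, t_i')$. Without loss of generality assume $t_1 \le t_2$. Then $t_2 \ge t_1'$ (otherwise $p$ would be running two tasks concurrently, contradicting the single-threaded nature of a processor cycle). But $t_1' \in I$ and $t_2 \notin I$ together with the fact that $I$ is an interval force $t_2 < t_1'$, a contradiction.

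Since there are $n$ processors and each contributes at most one straddling task, there are at most $n$ straddling tasks in $S$. Therefore at least $|S| - n$ of the reported tasks in $S$ are non-straddling and hence have their absolute task execution fully contained in $I$, which is exactly the claim.

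The main (minor) subtlety will be handling the boundary convention for $I$ carefully and making sure the ``$p$ does not stop during $[t,t')$'' clause of the definition is used correctly to rule out two straddling executions on the same processor; everything else is routine bookkeeping.
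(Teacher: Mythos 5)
Your proposal is correct and follows essentially the same route as the paper's proof: both count the ``straddling'' tasks whose absolute execution begins before $I$ and ends inside it, observe that the no-interruption clause of the definition allows at most one such task per processor, and conclude that at most $n$ of the $|S|$ reports can fail to have their absolute execution contained in $I$. Your version merely spells out the per-processor uniqueness argument in more detail than the paper does.
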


\begin{proof}%{Lemma~\ref{l:nfirst}}
A task $\tau$ which is reported in $I$ by processor $p$ and its absolute task execution $\alpha \not\subseteq I$, has $\alpha = [t,t^\prime]$ where $t \not\in I$ and $t^\prime \in I$. Since $p$ does not stop executing $\tau$ in $[t,t^\prime)$, only one such task may occur for $p$. Then, overall there can not be more than $n$ such reports and the lemma follows.%\hfill\rule{2mm}{2mm}
\end{proof}

Consider the following two interval types, used in the remainder of the section. $\cT^{\max}_t(A,\cA)$ and $\cT^{\min}_t(A,\cA)$ denote the number of pending tasks at time $t$ with algorithm $A$ of costs $\lmax$ and $\lmin$, respectively,
under adversarial pattern $\cA$.
Consider two types of intervals:
\begin{itemize}
\item [$I^+$:] any interval such that $\cT^{\max}_t(\gnBurst,\cA) \geq n^2$, $\forall t \in I^+$
\item [$I^-$:] any interval such that $\cT^{\min}_t(\gnBurst,\cA) \geq n^2$, $\forall t \in I^-$
\end{itemize}

Then, the next two lemmas follow from Lemma~\ref{l:redundancy2} and that algorithm $\gnBurst$ is of type \GroupLIS$(1)$.\vspace{.3em}

\begin{lemma}
\label{l:max-no-red}
All absolute task executions of $\lmax$-tasks in Algorithm $\gnBurst$ within any interval $I^+$ appear exactly once.\vspace{.2em}
\end{lemma}

\begin{lemma}
\label{l:min-no-red}
All absolute task executions of $\lmin$-tasks in Algorithm $\gnBurst$ within any interval $I^-$ appear exactly once.%\vsh
\end{lemma}

The above leads to the following upper bound on the difference in the number of pending $\lmax$-tasks.\vspace{.3em}

\begin{lemma}
\label{l-ytasks}
The number of pending $\lmax$-tasks in any execution of $\gnBurst$, under any adversarial pattern $\cA$, run with speed-up $s \geq \frac{\gamma \lmin+\lmax}{\lmax}$, is never larger than the number of pending $\lmax$-tasks in the execution of \OPT plus $n^2+2n$.\vspace{.3em}
\end{lemma}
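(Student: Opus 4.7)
The plan is to argue by contradiction: assume there is a first time $t^*$ at which $\cT^{\max}_{t^*}(\gnBurst,\cA) > \cT^{\max}_{t^*}(\OPT,\cA) + n^2 + 2n$, and let $t_*$ be the largest time before $t^*$ such that $\cT^{\max}_{t_*}(\gnBurst,\cA) < n^2$. By the choice of $t_*$ and $t^*$, the interval $I^+ := (t_*,t^*]$ satisfies $\cT^{\max}_t(\gnBurst,\cA) \geq n^2$ for all $t \in I^+$, so it qualifies as an $I^+$-interval in the sense defined above, and Lemma~\ref{l:max-no-red} applies: all absolute task executions of $\lmax$-tasks produced by $\gnBurst$ inside $I^+$ correspond to pairwise-distinct tasks.

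Next I would exploit that throughout $I^+$ every processor running $\gnBurst$ is in Case~3 or Case~4 (because $|L_{max}| \geq n^2$). In both cases the algorithm schedules a $\lmax$-task after at most $\gamma$ consecutive $\lmin$-tasks; hence any processor completes a $\lmax$-task within every window of length at most $(\gamma\lmin+\lmax)/s$. Since $s \geq (\gamma\lmin+\lmax)/\lmax$, this window has length at most $\lmax$, which is precisely the time $\OPT$ needs to perform a single $\lmax$-task without speedup. Thus I can define an injection from absolute $\lmax$-task executions of $\OPT$ contained in $I^+$ (on each processor) to absolute $\lmax$-task executions of $\gnBurst$ in $I^+$ on the same processor: during the time $\OPT$ executes one $\lmax$-task, the same processor in $\gnBurst$ reports at least one $\lmax$-task.

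Let $\kappa$ denote the number of $\lmax$-task completions reported by $\OPT$ during $I^+$. By Lemma~\ref{l:nfirst}, at least $\kappa - n$ of them have absolute executions fully contained in $I^+$. By the injection above, $\gnBurst$ reports at least $\kappa - n$ $\lmax$-task completions in $I^+$, and applying Lemma~\ref{l:nfirst} once more at least $\kappa - 2n$ of these have their absolute executions fully in $I^+$; by Lemma~\ref{l:max-no-red} these correspond to $\kappa - 2n$ distinct tasks. Letting $S^{\max}_{I^+}$ be the set of $\lmax$-tasks injected in $I^+$, we then get
$$\cT^{\max}_{t^*}(\gnBurst,\cA) < n^2 + |S^{\max}_{I^+}| - (\kappa - 2n),$$
while $\cT^{\max}_{t^*}(\OPT,\cA) \geq |S^{\max}_{I^+}| - \kappa$, giving $\cT^{\max}_{t^*}(\gnBurst,\cA) - \cT^{\max}_{t^*}(\OPT,\cA) < n^2 + 2n$, a contradiction.

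The delicate step, and the main obstacle, is the injection claim relating $\OPT$'s $\lmax$-completions on a given processor to $\gnBurst$'s $\lmax$-completions on that same processor inside $I^+$. One has to show that even when $\gnBurst$ is in the middle of a "$\gamma$ $\lmin$-tasks then one $\lmax$-task" cycle at time $t_*$, within the duration of each subsequent $\OPT$ $\lmax$-execution the cycle completes and a $\lmax$-task is reported. The speedup inequality $s \geq (\gamma\lmin+\lmax)/\lmax$ guarantees this cycle fits inside an $\lmax$-length window, but the book-keeping across phase boundaries, together with the $-n$ losses at the endpoints captured by Lemma~\ref{l:nfirst}, is what gives the additive $n^2+2n$ slack rather than a tighter constant.
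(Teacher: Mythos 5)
Your proposal is correct and follows essentially the same route as the paper's own proof: the same contradictory choice of $t^*$ and $t_*$ defining $I^+$, the same key claim that the speedup bound $s \geq (\gamma\lmin+\lmax)/\lmax$ together with Cases 3/4 forces a per-processor injection from $\OPT$'s $\lmax$-completions to $\gnBurst$'s $\lmax$-reports, the same double application of Lemma~\ref{l:nfirst} yielding the $2n$ loss, and the same final count via the injected set $S_{I^+}$. No substantive differences.
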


\begin{proof}%{Lemma~\ref{l-ytasks}}
Fix an adversarial pattern $\cA$ and consider, for contradiction, interval $I^+ = (t_*,t^*]$ as it was defined above, $t^*$ being the first time when 
$\cT^{\max}_{t^*}(\gnBurst,\cA) > \cT^{\max}_{t^*}(\OPT,\cA) + n^2 + 2n$, and $t_*$ being the largest time before $t^*$ such that $\cT^{\max}_{t_*}(\gnBurst,\cA) < n^2$.\vspace{.2em}

\emph{Claim:} The number of absolute task executions of $\lmax$-tasks $\alpha \subset I^+$, by \OPT, is no bigger than the number of $\lmax$-task reports by $\gnBurst$ in interval $I^+$. \vspace{.2em}

Since $s \geq \frac{\gamma \lmin + \lmax}{\lmax}$, while processor $p$ in \OPT is running a $\lmax$-task, the same processor in $\gnBurst$ has time to execute $\gamma \lmin + \lmax$ tasks. But, by definition, within the interval $I^+$ there are at least $n^2$ $\lmax$-task pending at all times, which implies the execution of Case 3 or Case 4 of the $\gnBurst$ algorithm. This means that no processor may run $\gamma + 1$ consecutive $\lmin$-tasks, as a $\lmax$-task is guaranteed to be executed by one of the cases. So, the number of absolute task executions of $\lmax$-tasks by \OPT in the interval $I^+$ is no bigger than the number of $\lmax$-task reports by $\gnBurst$ in the same interval. This completes the proof of the claim.\vspace{.2em}

Now let $\kappa$ be the number of $\lmax$-tasks reported by \OPT. From Lemma \ref{l:nfirst}, at least $\kappa - n$ such tasks have absolute task executions in interval $I^+$. From the above claim, for every absolute task execution of $\lmax$-tasks in the interval $I^+$ by \OPT, there is at least a completion of a $\lmax$-task by $\gnBurst$ which gives a 1-1 correspondence, so $\gnBurst$ has at least $\kappa - n$ reported $\lmax$-tasks in $I^+$.
Also, from Lemma \ref{l:nfirst}, we may conclude that there are at least $\kappa - 2n$ absolute task executions of $\lmax$-tasks in the interval. 
Then from Lemma \ref{l:redundancy2}, $\gnBurst$ reports at least $\kappa - 2n$ different tasks, while \OPT reports at most $\kappa$.

Now let $S_{I^+}$ be the set of $\lmax$-tasks injected during the interval $I^+$, under adversarial pattern $\cA$. Then $\cT^{\max}|_{t^*}(\gnBurst,\cA) < n^2 + |S_{I^+}| - (\kappa - 2n)$, and since $\cT^{\max}_{t^*}(\OPT,\cA) \geq |S_{I^+}| - \kappa$ we have a contradiction, which completes the proof.%\hfill\rule{2mm}{2mm}
\end{proof}

%This is used for proving the bound for both $\lmax$- and $\lmin$-tasks presented by 
%Theorem~\ref{t:gamma-b-com} proved %below.\vspace{.3em}

\begin{proofof}{Theorem~\ref{t:gamma-b-com}}
Consider any adversarial pattern $\cA$ and for contradiction, the interval $I^- = (t_*,t^*]$ as defined above, where $t^*$ is the first time when 
$\cT_{t^*}(\gnBurst,\cA) > \cT_{t^*}(\OPT,\cA) + 2n^2 + (3+\left\lceil \frac{\lmax}{s  \cdot \lmin} \right\rceil)n$
and $t_*$ being the largest time before $t^*$ such that $\cT^{\max}|_{t_*}(\gnBurst,\cA) < n^2$. Notice that $t_*$ is well defined for Lemma \ref{l-ytasks},
i.e., such time $t_*$ exists and it is smaller than $t^*$.

We consider each processor individually and break the interval $I^-$ into subintervals $[t,t']$ such that times $t$ and $t'$ are instances in which the counter $c$ is reset to 0; this can be either due to a simple reset in the algorithm or due to a crash and restart of a processor. 
More concretely, the boundaries of such subintervals are as follows. An interval can start either when a reset of the counter occurs or when the processor (re)starts.
On its side, an interval can finish due to either a reset of the counter or a processor crash.
Hence, these subintervals can be grouped into two types, depending on how they end: Type (a) which includes the ones that end by a crash and Type (b) which includes the ones that end by a reset from the algorithm.
Note that in all cases $\gnBurst$ starts the subinterval scheduling a new task to the processor at time $t$, and that the processor is never idle in the interval. Hence, all tasks reported by $\gnBurst$ as completed have their absolute task execution completely into the subinterval.
Our goal is to show that the number of absolute task executions in each such subinterval with $\gnBurst$ is no less than the number of reported tasks by \OPT.

First, consider a subinterval $[t,t']$ of Type (b), that is, such that the counter $c$ is set to 0 by the algorithm (in a line $c=0$) at time $t'$. 
This may happen in algorithm $\gnBurst$ in Cases 1, 3 or 4. However, observe that the counter cannot be reset in Cases 1 and 3
at time $t' \in I^-$ since,
by definition, there are at least $n^2$ $\lmin$-tasks pending during the whole interval $I^-$. Case 4 implies that there are also at least $n^2$ $\lmax$-tasks pending in $\gnBurst$. This means that in the interval $[t,t']$ there have been 
$\kappa$ $\lmin$ and one $\lmax$ absolute task executions, $\kappa \geq \gamma$. Then, the subinterval $[t,t']$ has length $\frac{\lmax + \kappa\lmin}{s}$, and $\OPT$ can report at most $\kappa+1$ task completions during the subinterval. This latter property follows from 
$
\frac{\lmax + \kappa\lmin}{s} = \frac{\lmax + \gamma\lmin}{s} + \frac{(\kappa - \gamma)\lmin}{s}
\leq
(\gamma + 1)\lmin + (\kappa - \gamma)\lmin
\leq
(\kappa + 1)\lmin,
$
where the first inequality follows from the definition of $\gamma$ (see Section~\ref{s:non-comp}) and the fact that $s>1$.
Now consider a subinterval $[t,t']$ of Type (a) which means that at time $t'$ there was a crash. This means that no $\lmax$-task was completed in the subinterval, but we may assume the complete execution of $\kappa$ $\lmin$-tasks in $\gnBurst$. 
We show now that $\OPT$ cannot report more than
$\kappa$ task completions.
In the case where $\kappa \geq \gamma$, then the length of the subinterval $[t,t']$ satisfies
\begin{eqnarray*}
t' - t
<
\frac{\kappa\lmin + \lmax}{s}
&\leq&
(\kappa + 1)\lmin.
\
\end{eqnarray*}
In the case where $\kappa < \gamma$ then the length of the subinterval $[t,t']$ satisfies
\begin{eqnarray*}
t' - t
<
\frac{(\kappa + 1)\lmin}{s}
&\leq&
(\kappa + 1)\lmin.
\
\end{eqnarray*}
Then in none of the two cases $\OPT$ can report more than $\kappa$ tasks in subinterval $[t,t']$.

After splitting $I^-$ into the above subintervals, the whole interval is of the form $(t_*,t_1][t_1,t_2]\dots[t_m,t^*]$. All the intervals $[t_i,t_{i+1}]$ where $t=1,2,\dots,m$, are included in the subinterval types already analysed. There are therefore two remaining subintervals to consider now.
The analysis of subinterval $[t_m,t^*]$ is verbatim to that of an interval of Type (a). Hence, the number of absolute task executions in that subinterval with $\gnBurst$ is no less than the number of reported tasks by \OPT.

Let us now consider the subinterval $(t_*,t_1]$. Assume with $\gnBurst$ there are 
$\kappa$ absolute task executions fully contained in the subinterval. Also observe that at most one $\lmax$-task
can be reported in the subinterval (since then the counter is reset and the subinterval ends). Then, the
length of the subinterval is bounded as
$$
t_1 - t_* < \frac{(\kappa + 1)\lmin + \lmax}{s}
$$
(assuming the worst case that a $\lmin$-task was just started at $t_*$ and that the processor crashed at $t_1$ when a $\lmax$-task was about to finish).
The number of tasks that $\OPT$ can report in the subinterval is hence bounded by 
$$
\left\lceil \frac{(\kappa + 1)\lmin + \lmax}{s \lmin} \right\rceil < \kappa + 1 + \left\lceil \frac{\lmax}{s \cdot \lmin} \right\rceil.
$$

This means that for every processor, the number of reported tasks by \OPT might be at most the number of absolute task executions by $\gnBurst$ fully contained in $I^-$ plus $1 + \left\lceil \frac{\lmax}{s  \cdot \lmin} \right\rceil$.
From this and Lemma~\ref{l:min-no-red}, it follows that in interval $I^-$ the difference in the number of pending tasks between
$\gnBurst$ and $\OPT$ has grown by at most $(1+\left\lceil \frac{\lmax}{s \cdot \lmin} \right\rceil)n$.
Observe that at time $t_*$ the difference between the number of pending tasks satisfied
$$
\cT_{t_*}(\gnBurst,\cA) - \cT_{t_*}(\OPT,\cA) < 2n^2 + 2n,
$$
This follows from Lemma~\ref{l-ytasks}, which bounds the difference in the number of $\lmax$-tasks to $n^2 + 2n$, and the assumption that $\cT^{\max}|_{t_*}(\gnBurst,\cA) < n^2$. Then, it follows that
$
\cT_{t^*}(\gnBurst,\cA) - \cT_{t_*}(\OPT,\cA)  <  2n^2 + 2n + (1+\left\lceil \frac{\lmax}{s  \cdot \lmin} \right\rceil)n
 =  n^2 + (3+\left\lceil \frac{\lmax}{s  \cdot \lmin} \right\rceil)n,
$
which is a contradiction. Hence,
$\cT_t(\gnBurst,\cA) \leq \cT_t(\OPT,\cA) + 2n^2 + (3+\left\lceil \frac{\lmax}{s  \cdot \lmin} \right\rceil)n$,  for any time  $t$ and adversarial pattern $\cA$, as claimed.%\hfill\rule{2mm}{2mm}
\end{proofof}

\section{\replaced[af]{Omitted}{Missing} proofs from Section \ref{sec:LAF}}
\label{app:LAF}

\begin{proofof}{Theorem~\ref{thm:LAF}}
Note that algorithm $\LAF$ is in the class of $\GroupLIS(\beta)$ algorithms, for $\beta\geq \frac{\lmax}{\lmin}$.
Therefore Lemma~\ref{l:redundancy2} applies, and together with the algorithm specification it guarantees no redundancy in 
absolute task executions in case of one of the lists is kept of size at least $\beta n^2$.

Consider any adversarial pattern $\cA$.
We show now that 
$\cC^*_t(\LAF,\cA)|_{\ge x}\le \cC^*_t(\OPT,\cA)|_{\ge x}+2\lmax k \beta n^2 +2n\lmax+3n\lmax/s$
for every cost $x$ at any time $t$ and for speedup $s$,
where $\cC^*_t(\Alg,\cA)|_{\ge x}$ denotes the sum of costs of pending tasks of cost at least $x$, 
and such that the number of pending tasks of such cost is at least $\beta n^2$ in $\LAF$ at time $t$
of the execution of algorithm $\Alg$, %in the execution
under adversarial pattern $\cA$; $k$ is the number of the possible different task costs that
%can be injected 
is injected under adversarial pattern $\cA$.
Note that this implies the statement of the theorem, since if we take $x$ equal to the smallest
possible cost and add an upper bound $\lmax k \beta n^2$ on the cost of tasks on pending lists of $\LAF$ of size smaller than 
$\beta n^2$, we obtain the upper bound on the amount of pending cost of $\LAF$,
for any adversarial pattern $\cA$.

Assume, to the contrary, that the sought property does not hold, and let $t^*$ will be the first time $t$ when 
$\cC^*_t(\LAF,\cA)|_{\ge x} > \cC^*_t(\OPT,\cA)|_{\ge x}+2\lmax k \beta n^2 + 2n\lmax+3n\lmax/s$
for some cost $x$, under the adversarial pattern $\cA$ (in the remainder of the proof
we work under assumption of the fixed adversarial pattern $\cA$). 
Denote by $t_*$ the largest time before $t^*$ such that for every $t\in (t_*,t^*]$,
$\cC^*_t(\LAF,\cA)|_{\ge x} \ge \cC^*_t(\OPT,\cA)|_{\ge x}+\lmax k\beta n^2$.
Observe that $t_*$ is well-defined, and moreover, $t_*\le t^*-(\lmax+3\lmax/s)$:
it follows from the definition of $t^*$ and from the fact that within a time interval $(t,t^*]$ of length smaller than 
$\lmax+3\lmax/s$, $\OPT$ can report tasks of total cost at most 
$2n\lmax+3n\lmax/s$,
plus additional cost of at most $\lmax k\beta n^2$ that can be caused by other
lists growing beyond the threshold $\beta n^2$, and thus starting to contribute to the cost $\cC^*$.

Consider interval $(t_*,t^*]$.
By the specification of $t_*$, at any time of the interval there is at least one list of pending tasks of cost at least $x$
that has length at least $\beta n^2$.
Consider a life period of a process $p$ that starts in the considered time interval; let us restrict our consideration
of this life period only by time $t^*$, and $\ell$ be the length of this period.
Let $z>0$ be the total cost of tasks, when counted only those of cost at least $x$, reported by processor $p$ in the execution of $\OPT$
in the considered life period.
We argue that in the same time interval, the total cost of tasks, when counted only those of cost at least $x$,
reported by $p$ in the execution of $\LAF$ is at least $z$.
Observe that once process $p$ in $\LAF$ schedules a task of cost at least $x$ for the first time in the considered period,
it continues scheduling task of cost at least $x$ until the end of the considered period.
Therefore, with respect to the corresponding execution of $\OPT$, processor $p$ could only waste its time
(from perspective of performing a task of cost smaller than $x$ or performing a task not reported in the considered period)
in the first less than $(2x)/s$ time of the period or the last less than $(\ell/2)/s$ time of the period.
Therefore, in the remaining period of length bigger than $\ell-(\ell/2+2x)/s$, processor $p$ is able to complete and report
tasks, each of cost at least $x$, of total cost larger than
\[
s\ell - (\ell/2+2x)
\ge 
\ell (s-1/2-2)
%\ell (s-1/2-2/s)
\ge 
\ell
\ge
z
\ ;
\]
here in the first inequality we used the fact that 
$\ell\ge x$, 
which follows from the definition of $z>0$,
and in the second inequality we used the property 
$s-1/2-2\ge 1$ for $s\ge 7/2$.
Applying Lemma~\ref{l:redundancy}, justifying no redundancy in absolute tasks executions of $\LAF$ in the
considered time interval,
we conclude life periods as considered do not contribute to the growth of the difference between 
$\cC^*(\LAF,\cA)|_{\ge x}$ and $\cC^*(\OPT,\cA)|_{\ge x}$.

Therefore, only life periods that start before $t_*$ can contribute to the difference in costs.
However, if their intersections with the time interval $(t_*,t^*]$ is of length $\ell$ at least $(2x+\lmax)/s$,
that is, enough for a processor running $\LAF$ to report at least one task of length at least $x$,
the same argument as in the previous paragraph yields that the total cost of tasks of cost at least $x$ reported
by a processor in the execution of $\LAF$ is at least as large as in the execution of $\OPT$,
minus the cost of the very first task reported by each processor in $\LAF$ (which may not be an absolute task execution
and thus there may be redundancy on them) --- i.e., minus at most $n\lmax$ in total.
In the remaining case, i.e., when the intersection of the life period with $(t_*,t^*]$ is smaller than $(2x+\lmax)/s$,
the processor may not report any task of length $x$ when running $\LAF$, but when executing $\OPT$
the total cost of all reported tasks is smaller than $(2x+\lmax)/s\le 3\lmax/s$.
Therefore, the difference in costs on tasks of cost at least $x$ between $\OPT$ and $\LAF$ could grow by
at most $n\lmax+3n\lmax/s$ in the life periods considered in this paragraph. 
Hence,
%\[
$
\cC^*_{t^*}(\LAF,\cA)|_{\ge x} - \cC^*_{t^*}(\OPT,\cA)|_{\ge x}
\le 
\cC^*_{t_*}(\LAF,\cA)|_{\ge x} - \cC^*_{t_*}(\OPT,\cA)|_{\ge x} +n\lmax+3n\lmax/s
\le 
\lmax k\beta n^2 + n\lmax+3n\lmax/s,
$
which violates the initial contradictory assumption.%\hfill\rule{2mm}{2mm}
\end{proofof}

\section{Conditions on Competitiveness and Non-competitiveness}
\label{app:conditions}

\paragraph{Upper bound on the speedup for non-competitiveness}
As proven in Theorem~\ref{t:non-competitive}, the condition $s < \min\left\{ \frac{\lmax}{\lmin}, \frac{\gamma \lmin+\lmax}{\lmax}\right\}$ 
is sufficient for non competitiveness. 
Let us define ratio $\rho=\lmax/\lmin \geq 1$. We will derive properties in $\rho$ that guarantee the above condition.
From the first part (condition (a) in Theorem~\ref{t:non-competitive}), it must hold that $s < \frac{\lmax}{\lmin} = \rho$.
From the second part (condition (b) in Theorem~\ref{t:non-competitive}), we must have
\begin{eqnarray}
\nonumber
s &<& \frac{\gamma \lmin+\lmax}{\lmax} \\
&= &
\frac{\lceil \frac{\lmax-s \lmin}{\lmin(s-1)} \rceil \lmin+\lmax}{\lmax} \nonumber
\\
&=& 
\frac{\lceil \frac{\lmax-\lmin}{\lmin(s-1)} \rceil \lmin+\lmax-\lmin}{\lmax} \nonumber \\
&=&
\frac{\lceil \frac{\rho - 1}{s-1} \rceil + \rho -1}{\rho}
\ ,
\label{condba}
\end{eqnarray}
where the second equality follows from
$
\lceil \frac{\lmax-s\lmin}{\lmin(s-1)} \rceil 
= 
\lceil \frac{\lmax-\lmin}{\lmin(s-1)} \rceil -1
$.
%\marginpar{\ez{smallest $\rho$? or smallest speedup? the same for below..}}
Let $s_{\ref{condba}}$ be the smallest speedup that satisfies Eq.~\ref{condba}, 
then a lower bound on $s_{\ref{condba}}$ can be found by removing the ceiling, as
$$
s_{\ref{condba}} \geq \frac{\frac{\rho - 1}{s_b -1} + \rho -1}{\rho}  
\implies 
s_{\ref{condba}} \geq 2 - 1/\rho.
$$
It can be shown that $\rho \geq 2 - 1/\rho$ for $\rho \geq 1$. Then, a sufficient condition for non competitiveness is
$$
s < 2 - 1/\rho = 2 - \lmin/\lmax.
$$

\paragraph{Smallest speedup for competitiveness}

As we show in this work, in order to have competitiveness, $s \geq \min\left\{ \frac{\lmax}{\lmin}, \frac{\gamma \lmin+\lmax}{\lmax}\right\}$ is sufficient. This means that
(a) $s \geq \frac{\lmax}{\lmin}$, or (b) $s \geq \frac{\gamma\lmin+\lmax}{\lmax}$
must hold, where $\gamma = \max\{\lceil \frac{\lmax-s \lmin}{(s-1)\lmin} \rceil, 0\}$.
%Let us define as before the ratio $\rho=\lmax/\lmin \geq 1$. 
To satisfy condition (a), the
speedup $s$ must satisfy $s \geq \frac{\lmax}{\lmin} = \rho$.
Hence, the smallest value of $s$ that guarantees that (a) holds is $s_{(a)}=\rho$.

In order to satisfy condition (b), when condition (a) is not satisfied
(observe that when (a) holds, $\gamma=0$),
we have
\begin{eqnarray}
s
&\geq&
\frac{\lceil \frac{\rho - 1}{s-1} \rceil + \rho -1}{\rho}
\ .
\label{condbb}
\end{eqnarray}
Let $s_{(b)}$ be the smallest speedup that satisfies Eq.~\ref{condbb};
then an upper bound can be obtained by adding one unit to the expression in the ceiling
$$
s_{(b)} < \frac{\frac{\rho - 1}{s_{(b)}-1} +1 + \rho -1}{\rho}
\implies
s_{(b)} < 1 + \sqrt{1 - 1/\rho}
\ .
$$
Let us denote $s_{(b)}^+ = 1 + \sqrt{1 - 1/\rho}$. Then,
in order to guarantee competitiveness, it is enough to choose any 
$s \geq \min\{s_{(a)} , s_{(b)}\}$.
Since there is no simple form of the
expression for $s_{(b)}$, we can use $s_{(b)}^+$ instead, to be safe.

\begin{theorem}
Let $\rho=\lmax/\lmin \geq 1$.
In order to have competitiveness, it is sufficient to set $s=s_{(a)}=\rho$ if $\rho \in [1,\varphi]$, and
$s=s_{(b)}^+ = 1 + \sqrt{1 - 1/\rho}$ if $\rho > \varphi$, where
$\varphi=\frac{1+\sqrt{5}}{2}$ is the golden ratio.
\end{theorem}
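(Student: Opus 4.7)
The plan is to build directly on the two sufficient conditions already derived in the appendix, namely $s_{(a)}=\rho$ from condition (a) and $s_{(b)}^+=1+\sqrt{1-1/\rho}$ as an upper bound on the minimal $s$ satisfying condition (b). Since competitiveness follows whenever $s\ge s_{(a)}$ \emph{or} $s\ge s_{(b)}^+$, it is sufficient to take $s=\min\{s_{(a)},s_{(b)}^+\}$, and the theorem is just a case analysis telling us which of the two expressions is smaller in each range of $\rho$.

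First, I would locate the crossover point, i.e., solve $s_{(a)}=s_{(b)}^+$, which amounts to $\rho-1=\sqrt{1-1/\rho}$. Squaring (the LHS is non-negative for $\rho\ge 1$) gives $(\rho-1)^2=1-1/\rho$, equivalently $\rho(\rho-1)^2=\rho-1$. Factoring out $\rho-1$ yields either $\rho=1$ or $\rho^2-\rho-1=0$, and the positive root of the latter is exactly the golden ratio $\varphi=\tfrac{1+\sqrt{5}}{2}$. So the two candidate thresholds cross precisely at $\rho=\varphi$ (and trivially coincide at $\rho=1$, where both equal $1$).

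Next, I would verify on which side of $\varphi$ each candidate dominates. Define $g(\rho)=s_{(b)}^+-s_{(a)}=1+\sqrt{1-1/\rho}-\rho$. Observe that $g(1)=0$, $g(\varphi)=0$, and that $g$ is continuous on $[1,\infty)$ with $g(\rho)\to -\infty$ as $\rho\to\infty$. Checking any interior point in $(1,\varphi)$, e.g.\ $\rho=3/2$, gives $g(3/2)=\sqrt{1/3}-1/2>0$, while at, say, $\rho=2$ we get $g(2)=\sqrt{1/2}-1<0$. Combined with the fact that the polynomial equation identified above has only the two roots $\rho=1,\varphi$ in $[1,\infty)$, this shows $g(\rho)\ge 0$ on $[1,\varphi]$ and $g(\rho)<0$ on $(\varphi,\infty)$. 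Hence $s_{(a)}\le s_{(b)}^+$ for $\rho\in[1,\varphi]$ and $s_{(b)}^+<s_{(a)}$ for $\rho>\varphi$.

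Finally, combining these two facts with the sufficiency of $s\ge\min\{s_{(a)},s_{(b)}^+\}$ established in the appendix yields the claim: in the first regime one may set $s=\rho$, in the second one may set $s=1+\sqrt{1-1/\rho}$. I do not anticipate any real obstacle here; the only slightly delicate step is the sign analysis of $g$, but the algebraic factorization at the crossover makes it clean, and no further inspection of the ceiling in condition (b) is required since we are using the already-proved relaxation $s_{(b)}^+$ as the sufficient bound.
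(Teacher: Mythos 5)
Your proposal is correct and follows essentially the same route as the paper: both reduce the theorem to the already-established sufficiency of $s\ge\min\{s_{(a)},s_{(b)}^+\}$ and then determine which candidate is smaller, with the crossover at $\rho=\varphi$. The only difference is that where the paper dismisses the comparison with ``using calculus it is easy to verify,'' you carry it out explicitly by factoring $\rho(\rho-1)^2=\rho-1$ and doing a sign check, which is a welcome filling-in of detail rather than a new argument.
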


\begin{proof}
As mentioned before, a sufficient condition for competitiveness is 
$s \geq \min\{s_{(a)} , s_{(b)}^+\}$.
Using calculus is it easy to verify that
$s{(a)}=\rho \leq s_{(b)}^+$ if $\rho \leq \varphi$. 
%Recall that $r\geq 1$.
\end{proof}

\end{document}